\documentclass[11pt,a4paper,reqno]{amsart}%
\usepackage{amsthm,amsmath,amsfonts,amssymb,amsxtra,appendix,bookmark,dsfont,latexsym}
\usepackage{amsfonts,mathrsfs}

\makeatletter
\usepackage{hyperref}
\usepackage{delarray,a4,color}
\usepackage{euscript}
\usepackage[latin1]{inputenc}


\setlength{\voffset}{0.5truein}
 \setlength{\textheight}{8.25truein}
 \setlength{\textwidth}{6truein}
 \setlength{\hoffset}{-0.5truein}



\theoremstyle{plain}
\newtheorem{theorem}{Theorem}[section]
\newtheorem{lemma}[theorem]{Lemma}

\newtheorem{proposition}[theorem]{Proposition}

\theoremstyle{remark}
\newtheorem{remark}[theorem]{Remark}

\numberwithin{equation}{section}



\DeclareMathOperator{\Tr}{Tr}
\DeclareMathOperator{\tr}{Tr}



\def\geqslant{\ge}
\def\leqslant{\le}
\def\bq{\begin{eqnarray}}
\def\eq{\end{eqnarray}}
\def\bqq{\begin{eqnarray*}}
\def\eqq{\end{eqnarray*}}

\def\eps{\varepsilon}
\def\wto{\rightharpoonup}

\newcommand{\norm}[1]{\left\lVert #1 \right\rVert}

\newcommand\1{{\ensuremath {\mathds 1} }}

\renewcommand{\epsilon}{\varepsilon}

\def\cF {\mathcal{F}}

\def\R {\mathbb{R}}
\def\NN {\mathbb{N}}
\def\C {\mathbb{C}}

\def\cG {\mathcal{G}}
\def\cP {\mathcal{P}}

\def\cE {\mathcal{E}}

\def\R {\mathbb{R}}
\def\C {\mathbb{C}}

\def\gH{\mathfrak{H}}

\newcommand\pscal[1]{{\ensuremath{\left\langle #1 \right\rangle}}}

\renewcommand{\leq}{\leqslant}
\renewcommand{\geq}{\geqslant}



\newcommand{\bA}{\mathbf{A}}
\newcommand{\bJ}{\mathbf{J}}

\newcommand{\cR}{\mathcal{R}}

\newcommand{\nablap}{\nabla^{\perp}}
\newcommand{\EAF}{E ^{\mathrm{af}}}
\newcommand{\cEAF}{\cE ^{\mathrm{af}}}
\newcommand{\uAF}{u ^{\mathrm{af}}}
\newcommand{\curl}{\mathrm{curl}}
\newcommand{\tH}{\tilde{H}}
\newcommand{\sym}{\mathrm{sym}}
\newcommand{\loc}{\mathrm{loc}}
\newcommand{\cDAF}{\mathscr{D}^{\mathrm{af}}}
\newcommand{\cMAF}{\mathcal{M}^{\mathrm{af}}}
\newcommand{\HAF}{H^{\mathrm{af}}}


\title[Average field approximation]{The average field approximation for almost bosonic extended anyons} 

\author[D. Lundholm]{Douglas LUNDHOLM}
\address{KTH Royal Institute of Technology, Department of Mathematics, SE-100 44 Stockholm, Sweden}
\email{dogge@math.kth.se}

\author[N. Rougerie]{Nicolas ROUGERIE}
\address{CNRS \& Universit\'e Grenoble Alpes, LPMMC (UMR 5493), B.P. 166, F-38042 Grenoble, France}
\email{nicolas.rougerie@lpmmc.cnrs.fr}

\date{September, 2015}


\begin{document}

\begin{abstract}
Anyons are 2D or 1D quantum particles with intermediate statistics, 
interpolating between bosons and fermions. We study the ground state of a 
large number $N$ of 2D anyons, in a scaling limit where the statistics 
parameter $\alpha$ is proportional to $N ^{-1}$ when $N\to \infty$. 
This means that the statistics is seen as a ``perturbation from the bosonic 
end''. 
We model this situation in the magnetic gauge picture by bosons interacting 
through long-range magnetic potentials. 
We assume that these effective statistical gauge potentials are generated by 
magnetic charges carried by each particle, smeared over discs of radius $R$ 
(extended anyons). 
Our method allows to take $R\to 0$ not too fast at the same time as 
$N\to \infty$. In this limit we rigorously justify the so-called 
``average field approximation'': the particles behave like independent, identically distributed 
bosons interacting via a self-consistent magnetic field.  
\end{abstract}

\maketitle

\setcounter{tocdepth}{2}
\tableofcontents

\section{Introduction}\label{sec:intro}

In lower dimensions there are possibilities for quantum statistics different 
from bosons and fermions, so called intermediate or fractional statistics. 
Due to the prospect that such particles, termed anyons (as in \emph{any}thing 
in between bosons and fermions), could arise as effective quasiparticles in 
many-body quantum systems confined to lower dimensions, 
there has been a great interest over the last three decades in figuring 
out the behavior of such statistics 
(see \cite{Froehlich-90,IenLec-92,Khare-05,Lerda-92,Myrheim-99,Ouvry-07,Wilczek-90}
for extensive reviews). 
In one dimension, one can view the Lieb-Liniger model \cite{LieLin-63} 
as providing an example of effective interpolating statistics.
Although initially regarded as purely hypothetical 
while at the same time offering substantial analytical
insight thanks to its exact solvability,
this system has now been realized concretely in the laboratory 
\cite{KinWenWei-04,Paredes_et_al-04}.
Much less is known concerning fractional statistics 
in the two-dimensional setting, 
conjectured~\cite{AroSchWil-84} to be relevant 
for the fractional quantum Hall effect 
(see~\cite{Goerbig-09,Laughlin-99} for review), 
and it is indeed a very challenging theoretical 
question to figure out even the ground state properties of an ideal 2D 
many-anyon gas, parameterized by a single statistics phase $e^{i\pi\alpha}$,
or a periodic real parameter $\alpha$ (with $\alpha=0$ corresponding to 
bosons and $\alpha=1$ to fermions). 
On the rigorous analytical side,
some recent progress in this 
direction has been achieved in \cite{LunSol-13a,LunSol-13b,LunSol-14} 
where a better understanding of the 
ground state energy was obtained in the case that 
$\alpha$ is an odd numerator fraction.
Numerous approximative descriptions have also been proposed over the years, 
such as e.g. in~\cite{ChiSen-92}
where the problem was approached from both the bosonic and the fermionic ends, 
with a harmonic trapping potential.
Here, equipped with new methods in many-body spectral theory,
we will re-visit 
this question from the perspective of a perturbation 
around bosons, i.e. in a regime where $\alpha$ is small.

\subsection{The model}

One may formally think of 2D anyons with statistics parameter $\alpha$ as 
being described by an $N$-body wave function of the form
$$ \Psi (x_1,\ldots,x_N) = \prod_{j<k} e ^{i\alpha \phi_{jk}}\tilde{\Psi} (x_1,\ldots,x_N),\quad \phi_{jk} = \mathrm{arg} \frac{x_j-x_k}{|x_j-x_k|}, $$
where $\tilde{\Psi}$ is a bosonic wave function, i.e. symmetric under 
particle exchange. This way 
$$ \Psi (x_1,\ldots,x_j,\ldots,x_k,\ldots,x_N) = e^{i\alpha\pi} \Psi (x_1,\ldots,x_k,\ldots,x_j,\ldots,x_N)$$
and the behavior under particle exchange interpolates betweens bosons 
($\alpha = 0$) and fermions ($\alpha = 1$). 
Of course the wave function is then in general
not single-valued, and this description 
is not easy to use in practice. 
To describe free anyons, one way out is to realize that acting on $\Psi$ 
with the free Hamiltonian
$$ \sum_{j=1}^N \left( -\Delta_j + V(x_j) \right) $$
is equivalent to acting on the bosonic wave function $\tilde{\Psi}$ with an 
effective, $\alpha$-dependent Hamiltonian.

The Hamiltonian obtained in this way for $N$ identical and ideal 2D anyons in 
a trapping potential 
$V:\R ^2 \to \R ^+$ 
reads (see~\cite{Froehlich-90,IenLec-92,Khare-05,Lerda-92,Myrheim-99,Ouvry-07,Wilczek-90} 
for review 
and~\cite{LunSol-13a,LunSol-13b,LunSol-14} for recent mathematical studies)
\begin{equation}\label{eq:pre hamil}
	H_N := \sum_{j=1}^N \left( (p_j + \alpha\bA_j)^2 + V(x_j) \right)
\end{equation}
where 
$$ p_j = -i \nabla_j$$
is the usual momentum operator for particle $j$ and 
(denoting $(x,y)^\perp := (-y,x)$)
\begin{equation}\label{eq:pre potential}
	\bA_j := \sum_{k\neq j} \frac{(x_j-x_k) ^{\perp}}{|x_j - x_k| ^2}  
\end{equation}
is the (normalized\footnote{For increased clarity we will in this work
separate $\alpha$ from $\bA$, so that $\bA$ corresponds to the statistical 
vector potential of fermions modeled as bosons.})
statistical gauge vector potential felt by particle $j$ due to the 
influence of all other particles. 
The statistics parameter is denoted by $\alpha$, 
corresponding to a statistical phase $e^{i\alpha\pi}$ under a continuous
simple interchange of two particles. 
In this so-called ``magnetic gauge picture'', 2D anyons are thus described 
as bosons, each of them carrying an Aharonov-Bohm magnetic flux of strength $\alpha$.

We shall in this work assume 
\begin{equation}\label{eq:scale alpha}
	\alpha = \frac{\beta}{N-1} \to 0 \ \ \text{when} \ \ N\to \infty,
\end{equation}
where $\beta$ is a given, fixed constant. 
We consider~\eqref{eq:pre hamil} as (formally) acting on the bosonic 
$N$-particle space $L_\sym^2 (\R ^{2N})$,
which together with the condition~\eqref{eq:scale alpha} means that we consider 
almost bosonic anyons. Note that if we simply took $\alpha \to 0$ at fixed $N$, 
we would recover ordinary bosons at leading order. 
One could then only see the effect of the non-trivial statistics in a 
perturbative expansion, a route followed e.g. 
in~\cite{Sen-91,ComCabOuv-91,SenChi-92,Ouvry-94,ComMasOuv-95}. 
However, if $N\to \infty$ with fixed $\beta$ as above, 
the anyon statistics has a leading order effect, 
manifest through a particular mean-field model with a 
self-consistent magnetic field of strength $\sim \beta$, studied e.g. 
in~\cite{ChiSen-92,Trugenberger-92b,Trugenberger-92,IenLec-92,Westerberg-93}
and often called \emph{the average field approximation}
(see \cite{Wilczek-90,IenLec-92} for review). 
Our aim in this work is to justify this description rigorously.

\medskip

Actually, the Hamiltonian~\eqref{eq:pre hamil} is too singular to be 
considered as acting on a pure tensor product 
$u^{\otimes N} \in \bigotimes_\sym^N \gH$, 
however regular the function $u$ of the one-particle space 
$\gH \subseteq L ^2 (\R ^2)$. 
We refer to~\cite[Section 2.1]{LunSol-14} for a discussion of the domain 
of~\eqref{eq:pre hamil},
which requires the removal of the two-particle diagonals from the 
configuration space $\R^{2N}$.
One way to circumvent this issue is to reintroduce a length scale $R$ over 
which the magnetic charge is smeared. This so-called ``extended anyons'' 
model is discussed in~\cite{Mashkevich-96,Trugenberger-92b,ChoLeeLee-92}, 
and is sometimes argued to be the correct physical description
for anyons arising as quasi-particles in condensed-matter systems. 
In this paper we will allow $R$ to become small when $N\to \infty$, in which 
case we recover the point-like anyons point of view, at least if one is 
willing to ignore the issue of non-commuting limits.

Let us consider the 2D Coulomb potential generated by a unit charge smeared 
over a disc of radius $R$:
\begin{equation}\label{eq:smeared Coul}
w_R (x) := \log |\:.\:| \ast \frac{\1_{B(0,R)}}{\pi R^2} (x). 
\end{equation}
Observing that (with the convention $w_0 := \log |\:.\:|$)
$$
	\nabla ^{\perp} w_0 (x) = \frac{x ^{\perp}}{|x| ^2},
	\quad \text{with} \quad
	B_0(x) = \nablap \cdot \nablap w_0 = \Delta w_0 = 2\pi \delta_0,
$$
the natural regularization of $\bA_j$ corresponding to the extended-anyons 
model is given by
\begin{equation}\label{eq:potential}
	\bA ^R_j := \sum_{k\neq j} \nablap w_R (x_j-x_k), 
\end{equation}
leading to the regularized Hamiltonian
\begin{equation}\label{eq:hamil}
	H_N ^R := \sum_{j=1}^N \left( (p_j + \alpha \bA_j^R) ^2 + V(x_j) \right).
\end{equation}
We shall denote 
\begin{equation}\label{eq:gse}
	E ^R (N) := \inf \sigma (H_N ^R)
\end{equation}
the associated ground state energy (lowest eigenvalue)
for $N$ extended anyons.

For fixed $R>0$ this operator is self-adjoint on $L_\sym^2 (\R ^{2N})$ and one 
can even expand the squares to obtain a sum of terms that are all 
symmetric and relatively form-bounded with respect to the 
$\alpha=0$ non-interacting operator\footnote{By
the boundedness of $\nabla w_R$ and using Cauchy-Schwarz, all terms are 
infinitesimally form-bounded in terms of $H_N(\alpha=0)$
and hence $H_N^R$ is a uniquely defined self-adjoint operator
by the KLMN theorem~\cite[Theorem X.17]{ReeSim2}.
We shall assume $V$ is such that a form core is given by $C_c^\infty(\R^2)$.}.
This gives 
\begin{align}\label{eq:expand hamil}
H_N ^R &= \sum_{j=1}^N \left( p_j ^2 + V(x_j) \right) \nonumber \\
&+ \alpha \sum_{j\neq k} \left( p_j \cdot \nablap w_R (x_j - x_k ) + \nablap w_R (x_j - x_k ) \cdot p_j \right) \nonumber \\
&+ \alpha ^2 \sum_{j\neq k \neq \ell} \nablap w_R (x_j-x_k) \cdot \nablap w_R (x_j-x_\ell) \nonumber \\
&+ \alpha ^2 \sum_{j\neq k} |\nabla w_R (x_j-x_k)| ^2.
\end{align}
We also note that by the diamagnetic inequality 
(see, e.g., \cite[Theorem 7.21]{LieLos-01} for $R>0$,
and \cite[Lemma 4]{LunSol-14} for $R=0$)
$$
	\langle \Psi, H_N^R \Psi \rangle 
	\ge \langle |\Psi|, H_N(\alpha=0) |\Psi|\rangle,
$$
and hence $E^R(N) \ge N E_0$ for arbitrary $\alpha$,
with $E_0$ the ground state energy of the one-body operator
$H_1 = p^2 + V$.

\subsection{Average field approximation}

The few-anyons problem can be studied within perturbation theory, yielding 
satisfactory information on the ground state and low-lying excitation 
spectrum~\cite{Sen-91,ComCabOuv-91,Ouvry-94,ComMasOuv-95}. 
For many anyons however, it is hard to obtain results this way. 
A possible approximation to obtain a more tractable model when $N$ is large 
consists in seeing the potential~\eqref{eq:pre potential} 
or~\eqref{eq:potential} as being independent of the precise positions
$x_j$ and instead generated by the mean distribution of the particles 
(whence the name, average field approximation \cite{Wilczek-90})
\begin{align}\label{eq:avg field}
	\bA [\rho] &:= \nablap w_0 \ast \rho, \nonumber\\
	\bA^R [\rho] &:= \nablap w_R \ast \rho,
\end{align}
where $\rho$ is the one-body density (normalized in $L^1(\R ^2)$) of a given 
bosonic wave function $\Psi$
$$ 
	\rho(x) = \int_{\R ^{2(N-1)}} |\Psi (x,x_2,\ldots,x_N)|^2 \,dx_2\ldots dx_N,
$$
say the ground-state wave function.
One then obtains from \eqref{eq:hamil} the approximate $N$-body 
Hamiltonian
$$ 
	\HAF_N[\rho] := \sum_{j=1}^N \left( \left( p_j + N\alpha\bA ^R [\rho] \right) ^2 + V(x_j) \right).
$$
If one considers $\rho$ as fixed, the ground state of this 
non-interacting magnetic Hamiltonian acting 
on $L ^2_\sym (\R ^{2N})$ is a pure Bose condensate 
$$ \Psi_N = u ^{\otimes N}, $$
where $u \in L^2(\R^2)$ should minimize
$$ 
	\langle u \big| (p+ N\alpha \bA^R[\rho] ) ^2 + V \big| u \rangle  
	= N^{-1} \left\langle \Psi_N, \HAF_N[\rho] \Psi_N \right\rangle.
$$
For consistency, one should then impose that 
$$ |u| ^2 = \rho, $$
which leads to a non-linear minimization problem. 
One thus looks for the minimum $\EAF$ and minimizer $\uAF$ of 
the following \emph{average-field energy functional}
(recall the notation $\beta \sim N \alpha$)
\begin{equation}\label{eq:avg func}
	\cEAF_{R} [u] := \int_{\R ^2} \left( \left| \left( \nabla 
		+ i \beta \bA^{R}[|u|^2] \right) u \right|^2 + V|u|^2 \right)
\end{equation}
under the unit mass constraint
$$ \int_{\R ^2} |u| ^2 = 1.$$
Note that, for this problem to be independent of $N$ it is pretty natural to
--- in line with~\eqref{eq:scale alpha} ---
assume that $\beta \sim N\alpha$ is fixed.
It is not difficult to see that if $N\alpha \to 0$ we recover at leading order 
a non-interacting theory, and we are back to the usual perturbation scheme. 
We should point out that the limiting
functional $\cEAF_{R=0}$, which defines a strictly two-dimensional model 
of particles with a self-generated magnetic field 
$B(x) = \curl \,\beta \bA[\rho] (x) = 2\pi\beta\rho(x)$
without propagating degrees of freedom
and to which one could further consider adding an external magnetic field,
is also of relevance for various Chern-Simons formulations of anyonic theories
(see, e.g., \cite{ZhaHanKiv-89,Zhang-92,IenLec-92,CheSmi-14}).

\subsection{Average field versus mean field}

In principle, the average field approximation does not require that the true 
ground state of $H_N ^R$ be Bose-condensed.
In fact, the most common application of it has been in perturbing 
around fermions $\alpha=1$ 
\cite{FetHanLau-89,IenLec-92,Trugenberger-92b,Trugenberger-92,Westerberg-93}
(this has even been argued to be preferable \cite{ChenWilWitHal-89,Wilczek-90}),
and usually one even restricts to the homogeneous setting with $\rho$ a constant.
However, the case of fixed $\beta \sim N \alpha$ which is natural for the study 
of~\eqref{eq:avg func}, places the limit $N\to \infty$ of the original many-body 
problem in a mean-field-like regime for bosons. 
Indeed, observe that in~\eqref{eq:expand hamil}, the two-body terms in the 
second line and the three-body term in the third line weigh a total $O(N)$ in 
the energy in this regime, comparable to the one-body term in the first line. 
The two-body term in the fourth line is of much smaller order, $O(1)$ roughly, 
which is fortunate because of its singularity. Actually, if one takes bluntly 
$R=0$, the potential $|\nabla w_0| ^2$ appearing in this term is not locally 
integrable, and hence an ansatz $\Psi_N = u ^{\otimes N}$ would lead to an 
infinite energy. For extended anyons, $R>0$ and this term can be safely dropped 
for leading order considerations. 

The study of the regime~\eqref{eq:scale alpha} thus resembles a lot the usual 
mean-field limit for a large bosonic system 
(see~\cite{LewNamRou-13,LewNamRou-14,Rougerie-cdf} and references therein), but 
with important differences:
\begin{itemize}
\item The effective interaction is peculiar: it comprises a three-body term, 
	and a two-body term which mixes position and momentum variables.
\item The limit problem~\eqref{eq:avg func} comprises an effective 
	self-consistent \emph{magnetic} field. 
	A term in the form of a self-consistent \emph{electric} field is more usual.
\item One should deal with the limit $R\to 0$ at the same time as 
	$N \to \infty$, which is reminiscent of the NLS and GP limits for trapped 
	Bose gases~\cite{LieSeiYng-00,LieSeiSolYng-05,LieSei-06,LewNamRou-14,NamRouSei-15}. 
\end{itemize}

In order to make the analogy more transparent, we rewrite, 
for any normalized $N$-body bosonic wave function 
$\Psi_N \in L^2_\sym(\R^{2N})$
\begin{align}\label{eq:ener dens mat}
N^{-1} \left\langle \Psi_N \big| H_N ^R \big| \Psi_N \right\rangle &= \tr \left[ (p ^2 + V) \gamma_N ^{(1)} \right] \nonumber \\
&+ \beta \tr\left[ \left( p_1 \cdot \nablap w_R (x_1-x_2) + \nablap w_R (x_1-x_2) \cdot p_1\right) \gamma_N ^{(2)} \right]\nonumber\\
&+ \beta ^2 \frac{N-2}{N-1} \tr\left[ \left(\nablap w_R (x_1-x_2) \cdot \nablap w_R (x_1-x_3) \right) \gamma_N ^{(3)} \right]\nonumber\\ 
&+ \beta ^2 \frac{1}{N-1} \tr\left[ |\nabla w_R (x_1-x_2)| ^2 \gamma_N ^{(2)} \right],
\end{align}
where 
$$\gamma_N ^{(k)} := \tr_{k+1 \to N} \left[ |\Psi_N\rangle \langle \Psi_N|\right]$$ 
is the $k$-body density matrix of the state $|\Psi_N \rangle \langle \Psi_N |$, 
normalized to have trace $1$. 
The notation here means that we trace out the last $N-k$ variables from the 
integral kernel of $|\Psi_N \rangle \langle \Psi_N |$.

Since all terms at least at first sight weigh $O(1)$ or less, the folklore suggests to use an ansatz 
\begin{align}\label{eq:MF ansatz}
	\Psi_N &= u ^{\otimes N}  \nonumber\\ 
	\gamma_N ^{(k)} &= | u ^{\otimes k} \rangle \langle u ^{\otimes k}|.
\end{align}
Inserting this in the energy, dropping the last term, which is of order 
$N^{-1}$ at least for fixed~$R$, we obtain to leading order
\begin{equation}\label{eq:MF energy}
	N^{-1} \left\langle \Psi_N \big| H_N ^R \big| \Psi_N \right\rangle 
	\approx \cEAF_R [u].
\end{equation}
Indeed, on the one hand, 
\begin{align}\label{eq:two body term}
&\tr\left[ \left( p_1 \cdot \nablap w_R (x_1-x_2) + \nablap w_R (x_1-x_2) \cdot p_1\right) | u ^{\otimes 2} \rangle \langle u ^{\otimes 2}| \right] \nonumber\\ 
&= i \iint_{\R ^2 \times \R ^2}  \nabla \overline{u} (x) \overline{u} (y) \cdot \nablap w_R (x - y) u(x)u(y) \,dx dy  \nonumber\\
&- i \iint_{\R ^2 \times \R ^2}   \overline{u} (x) \overline{u} (y) \nablap w_R (x - y) \cdot \nabla u(x)u(y) \,dx dy \nonumber \\
&= 2 \int_{\R^2} \bA^R[|u|^2] \cdot \bJ[u],
\end{align}
using the definition~\eqref{eq:avg field} and denoting $\bJ [u]$ the current 
\begin{equation}\label{eq:current}
\bJ [u] := \frac{i}{2} \left( u \nabla \overline{u} - \overline{u} \nabla u\right).  
\end{equation}
Note that this is really a phase current density:
$$
\bJ [u] = \rho \nabla \varphi
\quad \text{if} \quad u = \sqrt{\rho} e ^{i\varphi}.
$$
On the other hand
\begin{align}\label{eq:three body term}
&\tr\left[ \left(\nablap w_R (x_1-x_2) \cdot \nablap w_R (x_1-x_3) \right) \gamma_N ^{(3)} \right]\nonumber \\ 
&= \iiint_{\R ^2 \times \R ^2 \times \R ^2}  |u (x)| ^2 | u (y)| ^2 |u(z)| ^2 \nablap w_R (x - y) \cdot \nablap w_R (x-z) \,dxdydz \nonumber\\
&= \int_{\R ^2} |u| ^2 \left| \bA^R[|u|^2] \right| ^2,
\end{align}
and it suffices to combine these identities in \eqref{eq:ener dens mat}
(and approximate $(N-2)/(N-1)\sim 1$) to obtain the desired expression 
\eqref{eq:MF energy} for the energy.

\subsection{Main results}

We may now state our main theorem, justifying the average field approximation 
in the almost-bosonic limit at the level of the ground state. 
For technical reasons we assume that the one-body potential is confining 
\begin{equation}\label{eq:trap pot}
		V(x) \ge c |x|^s - C, \quad s>0,
\end{equation}
and that the size $R$ of the extended anyons does not go to zero too fast in 
the limit $N\to \infty$. The rate we may handle depends on $s$. 
These assumptions are probably too restrictive from a physical point of view 
but our method of proof does not allow to relax them at present. 
Here and in the sequel, $\cEAF$ denotes the average-field 
functional~\eqref{eq:avg func} for $R=0$, and $\EAF$ its infimum under a unit 
mass constraint. 
Although we do not state it explicitly, we could also keep $R$ fixed when 
$N\to \infty$ and obtain the limit functional with finite $R$. 
The case of anyons in a bounded domain is also covered by our approach 
(modulo the discussion of boundary conditions) and the results in this case 
can be obtained by formally setting $s=\infty$ in the following.

\begin{theorem}[\textbf{Validity of the average field approximation}]\label{thm:main ener}\mbox{}\\
We consider $N$ extended anyons of radius
	$R \sim N ^{-\eta}$ in an external potential $V$ satisfying~\eqref{eq:trap pot}. We assume the relation 
	\begin{equation}\label{eq:eta restriction}
		0 < \eta < \eta_0 (s) := \frac{1}{4}\left( 1+ \frac{1}{s}\right)^{-1},
	\end{equation}
	and that the statistics parameter scales as  
	$$\alpha = \beta/(N-1)$$ 
	for fixed $\beta \in \R$.
	Then, in the limit $N\to \infty$ we have 
	for the ground-state energy
	\begin{equation}\label{eq:main ener}
		\frac{E^R(N)}{N} \to \EAF.
	\end{equation}
	Moreover, if $\Psi_N$ is a sequence of ground states for $H_N ^R$, with 
	associated reduced density matrices $\gamma_N ^{(k)}$, then 
	modulo restricting to a subsequence we have
	\begin{equation}\label{eq:main state}
		\gamma_N ^{(k)} \to \int_{\cMAF} |u ^{\otimes k} \rangle \langle u ^{\otimes k} | \,d\mu(u) 
	\end{equation}
	strongly in the trace-class when $N\to \infty$, where $\mu$ is a Borel 
	probability measure supported on the set of minimizers of $\cEAF$,
	$$
		\cMAF := \lbrace u\in L ^2 (\R ^2) : \norm{u}_{L^2} = 1, \: \cEAF [u] = \EAF \rbrace.
	$$
\end{theorem}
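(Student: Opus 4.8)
The plan is to establish the energy convergence \eqref{eq:main ener} via matching upper and lower bounds, and then to upgrade this to convergence of states \eqref{eq:main state} using the quantitative information extracted from the lower bound together with a de Finetti / quantum mean-field argument. For the upper bound, I would use the trial state $\Psi_N = \uAF{}^{\otimes N}$ with $\uAF$ a minimizer of $\cEAF = \cEAF_{R=0}$ (after a standard density/regularization argument to ensure $\uAF$ is smooth enough and decays fast enough to make all integrals finite). Plugging into \eqref{eq:ener dens mat}–\eqref{eq:three body term} gives $N^{-1}\langle\Psi_N,H_N^R\Psi_N\rangle = \cEAF_R[\uAF] + O(N^{-1}\, \|\nabla w_R\|_{L^\infty}^2)$, and one then controls $|\cEAF_R[\uAF] - \cEAF[\uAF]|$ by estimating $\|\nablap w_R - \nablap w_0\|$ in suitable norms against $\rho = |\uAF|^2$; the rate $R \sim N^{-\eta}$ with $\eta$ small keeps the dropped fourth term (which blows up like $R^{-2}$) negligible. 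This direction is not the main difficulty.

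The lower bound is the crux. I would start from \eqref{eq:ener dens mat} and aim to show $N^{-1}\langle\Psi_N,H_N^R\Psi_N\rangle \ge \EAF - o(1)$ for any sequence of (approximate) ground states. The strategy is the one familiar from mean-field limits: first localize the problem to control the lack of compactness coming from the confining-but-unbounded potential $V$ (using \eqref{eq:trap pot} and the a priori bound $E^R(N)/N \le \cEAF_R[\uAF] + o(1) = O(1)$ to get tightness of $\gamma_N^{(1)}$, hence of all $\gamma_N^{(k)}$); then apply the quantum de Finetti theorem to the sequence $\gamma_N^{(k)}$ to obtain, along a subsequence, a probability measure $\mu$ on the unit sphere of $L^2(\R^2)$ with $\gamma_N^{(k)} \to \int |u^{\otimes k}\rangle\langle u^{\otimes k}|\,d\mu(u)$. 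The delicate point is that the effective interaction in \eqref{eq:ener dens mat} involves the \emph{unbounded} operators $p_1\cdot\nablap w_R(x_1-x_2)$ and $|\nabla w_R(x_1-x_2)|^2$, whose operator norms grow as $R\to 0$; one cannot simply pass to the weak limit in these terms. I would handle this by keeping $R = R_N \to 0$ tied to $N$ and using the kinetic energy $\tr[p^2\gamma_N^{(1)}] = O(1)$ to absorb the singular parts: bound the two-body mixed term by $\varepsilon \tr[p_1^2\gamma_N^{(2)}] + \varepsilon^{-1}\tr[|\nablap w_R(x_1-x_2)|^2\gamma_N^{(2)}]$ and then note $\|\nablap w_R\|_{L^\infty} = O(1)$ (uniformly in $R$, since smearing only improves the $x^\perp/|x|^2$ singularity near $0$) so this term is in fact controlled, while the genuinely dangerous $R^{-2}$-term sits in the $(N-1)^{-1}$-suppressed fourth line and contributes $O(R_N^{-2}/N) = O(N^{2\eta - 1}) = o(1)$ precisely under \eqref{eq:eta restriction}. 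Once the singular terms are tamed and the weak limits of the regular parts are identified via de Finetti, lower semicontinuity and the identities \eqref{eq:two body term}–\eqref{eq:three body term} give
\begin{equation*}
\liminf_{N\to\infty}\frac{E^R(N)}{N} \ge \int_{L^2} \cEAF[u]\,d\mu(u) \ge \EAF.
\end{equation*}

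The hardest part, as indicated, is making the passage to the de Finetti limit compatible with the $R$-dependent singularities: one must commute the limits $N\to\infty$ and $R\to 0$ in a controlled way, which is exactly why the restriction \eqref{eq:eta restriction} on $\eta$ (and its dependence on the confinement exponent $s$ through the localization estimates) appears. I would also need a quantitative stability estimate — replacing $\nablap w_R$ by $\nablap w_0$ inside the limiting functional costs an error controlled by a Sobolev-type bound on $\rho$, which in turn requires an a priori $H^1$-type bound on the de Finetti measure's support, itself obtained from $\tr[p^2\gamma_N^{(1)}] = O(1)$. Combining the two bounds yields \eqref{eq:main ener}; feeding the energy convergence back into the lower-bound argument forces $\mu$ to be supported on $\cMAF$, and the strong trace-class convergence in \eqref{eq:main state} then follows from the de Finetti convergence plus the fact that weak-$*$ convergence of density matrices with convergence of the (finite) limiting trace implies strong convergence.
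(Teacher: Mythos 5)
Your overall architecture (trial-state upper bound; lower bound via a de Finetti measure; support of $\mu$ on $\cMAF$ from energy matching; strong convergence from trace normalization) matches the paper's, but the lower bound as you describe it has genuine gaps. First, the claim that $\norm{\nablap w_R}_{L^\infty}=O(1)$ uniformly in $R$ is false: by Newton's theorem $\nabla w_R(x)=x/R^2$ for $|x|\le R$, so $\sup|\nabla w_R| \sim R^{-1}\to\infty$ (Lemma~\ref{lem:smear coul}). More importantly, the mixed two-body term $p_1\cdot\nablap w_R(x_1-x_2)+\nablap w_R(x_1-x_2)\cdot p_1$ is a \emph{leading-order} contribution — it produces the term $2\beta\int \bA^R[|u|^2]\cdot\bJ[u]$ of the limit functional — so it cannot be absorbed by a Cauchy--Schwarz of the form $\eps\, p_1^2+\eps^{-1}|\nablap w_R|^2$; it must be identified in the limit. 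This is the hardest term in the problem, and the paper needs three distinct operator bounds for it (Lemma~\ref{lem:mix two body}), including a $(1+|\log R|)(p^2+1)$ bound obtained via Stokes' theorem acting on the current density. You also say nothing about the three-body term $\nablap w_R(x_1-x_2)\cdot\nablap w_R(x_1-x_3)$, whose positivity and uniform-in-$R$ bound $\le C(p^2+1)$ (Lemma~\ref{lem:three body}, proved via a three-particle Hardy inequality and a nontrivial geometric argument) are indispensable both for the a priori kinetic bound of Proposition~\ref{pro:a priori} and for the error control.

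Second, the soft route you propose — weak-$*$ limits of all $\gamma_N^{(k)}$, the infinite-dimensional de Finetti theorem, and lower semicontinuity — does not by itself handle interaction operators that are unbounded \emph{and} $R_N$-dependent; the mixed term is not sign-definite and there is no obvious weak lower semicontinuity for it. The paper's actual mechanism is quantitative: project onto the spectral subspace $P=\1_{h\le\Lambda}$, localize in Fock space, apply the finite-dimensional quantitative de Finetti theorem with error $CN_\Lambda/N \le C\Lambda^{1+2/s}/N$, control all the $P$--$Q$ cross terms with the operator bounds of Section~\ref{sec:bounds}, and then optimize $\Lambda$ against the competing errors $\Lambda^{-1/2}R^{-1}$, $\Lambda^{-1}R^{-2}$ and $\Lambda^{2+2/s}N^{-1}$. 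It is exactly this optimization that yields $\eta_0(s)=\frac14(1+\frac1s)^{-1}$; your proposal asserts that the restriction ``appears'' from localization but contains no computation that could produce this specific threshold. Without these ingredients the lower bound does not close.
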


\bigskip 

\begin{remark}[The size of the smearing radius]\label{rem:R}\mbox{}\\
As we said before, taking $R$ not too small in the limit $N\to \infty$ is a 
requirement in our method of proof. It is likely that localization arguments 
could allow to take an $s$-independent $\eta < \eta_0 (\infty) = 1/4$, 
corresponding to the rate we obtain for anyons in a bounded domain. 
To obtain an even better rate would require important new ideas.

It is in fact not clear whether some lower bound on $R$ is a necessary 
condition for the average field description to be correct. 
For very small or zero $R$, it is still conceivable that a description in 
terms of a functional of the form of~\eqref{eq:avg func} is correct in the limit.
Indeed, our above restriction stems from the method used to 
bound the ground-state energy from below,
while for an upper bound the conditions 
on $R$ (and even the finiteness of $\beta$) can be relaxed significantly.
A further possibility would be to take short-range correlations into account 
via Jastrow factors, as in the GP limit for the usual Bose 
gas~\cite{LieSeiYng-00,LieSeiSolYng-05,LieSei-06,LewNamRou-14,NamRouSei-15}.    

It would in any case be desirable to be able to take $R\ll N ^{-1/2}$, 
the typical interparticle distance in our setting, because one could then 
argue that smearing the magnetic charges has very little effect. 
Even in the formal case $s=\infty$, Theorem~\ref{thm:main ener} requires 
$R \gg N ^{-1/4} \gg N ^{-1/2}$, which is rather stringent. 
We nevertheless obtain the functional for point-like anyons in the limit.

It is sometimes argued in the literature~\cite{Mashkevich-96,Trugenberger-92b,ChoLeeLee-92} 
that for anyons arising as quasi-particles in condensed matter physics, 
the magnetic charges \emph{should} be smeared over some length scale $R$. 
The relevant relation between $R$ and $N$ then depends on the context. 
\hfill $\Diamond$ 
\end{remark}

\bigskip

The rest of the paper contains the proof of Theorem~\ref{thm:main ener}. 
We start by collecting in Section~\ref{sec:bounds} some operator bounds on the 
different terms of the $N$-body functional. 
This is required in order to have a correct control of the terms 
as a function of the kinetic energy in the limit $R\to 0$. 
For these estimates to be of use in the large-$N$ limit 
we need an a priori bound on the kinetic 
energy of ground states of the $N$-body problem, also derived in 
Section~\ref{sec:bounds}. 
We deal with the mean-field limit in Section~\ref{sec:MF lim}, using the 
method of~\cite{LewNamRou-14}. 
Some important adaptations are required to deal with the anyonic Hamiltonian, 
and we focus on these. 
The goal here is to justify (with quantitative error bounds) the sensibility 
of the ansatz $\Psi_N = u ^{\otimes N}$ when $N$ becomes large, thus obtaining 
$\EAF_R$ as an approximation of the ground state energy per particle. 
The basic properties of the average-field functional \eqref{eq:avg func} are 
worked out in Appendix~\ref{app:AFF}. 
In particular we study the limit $R\to 0$ to finally obtain $\EAF$ as an 
approximation of the many-body ground state energy per particle.

\bigskip

\noindent\textbf{Acknowledgments.} 
We thank Michele Correggi for discussions.
Part of this work has been carried out during visits at the 
\emph{Institut Henri Poincar\'e} (Paris) and the 
\emph{Institut Mittag-Leffler} (Stockholm). 
D.L. would also like to thank LPMMC Grenoble for kind hospitality.
We acknowledge financial support from the \emph{French ANR} 
(Project Mathosaq ANR-13-JS01-0005-01),
as well as 
the grant KAW 2010.0063 from the \emph{Knut and Alice Wallenberg Foundation}
and the \emph{Swedish Research Council} grant no. 2013-4734.

\section{The extended anyon Hamiltonian}\label{sec:bounds}

In this section we give some bounds allowing to properly define and control the 
Hamiltonian~\eqref{eq:hamil}.
As previously mentioned, for extended anyons, it is possible to expand the 
Hamiltonian as in~\eqref{eq:expand hamil} and estimate it term by term. 
By the boundedness of the interaction it follows that
$H_N^R$ is defined uniquely as a self-adjoint operator on $L_\sym^2 (\R^{2N})$
with the same form domain as the non-interacting bosonic Hamiltonian 
$$\sum_{j=1}^N ( p_j ^2 + V(x_j) ).$$
However, in order to eventually take the limit $R \to 0$
we will need to deduce more precise bounds depending on $R$. These will be used to deal with the mean-field limit in Section~\ref{sec:MF lim}.

In the following we introduce a fixed reference length scale $R_0>0$, 
and always assume $R \ll R_0$. Future constants,
generically denoted by $C$, may implicitly depend on $R_0$.

\subsection{Operator bounds for the interaction terms}\label{sec:op bounds}

We start with some estimates on the different terms in~\eqref{eq:ener dens mat}, 
exploiting the regularizing effect of taking $R>0$. 
The following is standard:

\begin{lemma}[\textbf{The smeared Coulomb potential}]\label{lem:smear coul}\mbox{}\\
Let $w_R$ be defined as in~\eqref{eq:smeared Coul}. There is a constant $C>0$ depending only on $R_0$ such that  
\begin{equation}\label{eq:sup w_R}
	\sup_{B(0,R_0)} |w_R| \leq C + |\log R|, 
	\quad \sup_{\R^2} |\nabla w_R| \leq \frac{C}{R},
	\quad \sup_{B(0,R_0)^c} |\nabla w_R| \leq C.
\end{equation}
Moreover, for any $2 < p < \infty$,
\begin{equation}\label{eq:Lp w_R}
	\norm{\nabla w_R}_{L^p (\R^2)} \leq C_p \,R^{2/p-1}.
\end{equation}
\end{lemma}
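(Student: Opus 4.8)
The plan is to exploit the radial symmetry of the smeared charge density $\1_{B(0,R)}/(\pi R^2)$ to compute $w_R$ and $\nabla w_R$ essentially in closed form, and then to read off all four estimates directly.

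First I would establish the identity $w_R(x)=\log|x|$ for $|x|\ge R$. This is just the mean value property of harmonic functions: for $|x|\ge R$ the map $y\mapsto\log|x-y|$ is harmonic on the open ball $B(0,R)$ (it is singular only at $y=x\notin B(0,R)$, the boundary case $|x|=R$ following by continuity), so its average over $B(0,R)$ equals its value at the center, $\log|x|$. For the gradient on all of $\R^2$ I would invoke the planar Newton (Gauss) theorem: since $\nabla w_R=\frac{\1_{B(0,R)}}{\pi R^2}\ast\nabla\log|\cdot|$ is radial and $\nabla\log|x|=x/|x|^2$, its value at radius $r$ equals the charge of the disc inside radius $r$, namely $\min(1,r^2/R^2)$, times $x/|x|^2$, so that
$$
	\nabla w_R(x)=\begin{cases} x/R^2, & |x|\le R,\\[2pt] x/|x|^2, & |x|\ge R,\end{cases}
	\qquad\text{hence}\qquad |\nabla w_R(x)|=\min\Big(\frac{|x|}{R^2},\frac{1}{|x|}\Big).
$$
(Equivalently one derives this from $\Delta w_R=(2/R^2)\1_{B(0,R)}$ and $\Delta w_R=r^{-1}(rw_R')'$ together with regularity at the origin and at infinity.) From this formula $|\nabla w_R|$ attains its maximum $1/R$ at $|x|=R$, which gives the second estimate (with constant $1$); and on $B(0,R_0)^c$ we have $|x|\ge R_0>R$ since $R\ll R_0$, so $|\nabla w_R(x)|=1/|x|\le 1/R_0$, which is the third estimate.

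For the first estimate I would integrate the radial derivative: from $\partial_r w_R=r/R^2$ and $w_R(R)=\log R$ one obtains $w_R(x)=\log R-\tfrac12+|x|^2/(2R^2)$ for $|x|\le R$, so $|w_R(x)|\le|\log R|+1$ there, while on $R\le|x|\le R_0$ we have $|w_R(x)|=|\log|x||$, which is bounded by $|\log R|$ plus a constant depending only on $R_0$; together these give $\sup_{B(0,R_0)}|w_R|\le C+|\log R|$. For the $L^p$ bound I would simply integrate in polar coordinates, the restriction $p>2$ ensuring convergence at infinity:
$$
	\int_{\R^2}|\nabla w_R|^p=2\pi\int_0^R\Big(\frac{r}{R^2}\Big)^p r\,dr+2\pi\int_R^\infty r^{-p}\,r\,dr=2\pi\Big(\frac{1}{p+2}+\frac{1}{p-2}\Big)R^{2-p},
$$
and taking $p$-th roots yields $\norm{\nabla w_R}_{L^p(\R^2)}\le C_p R^{2/p-1}$.

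There is no genuine obstacle here; the only points requiring a little care are the justification of the closed form of $\nabla w_R$ (the mean-value/Gauss argument, including the boundary radius $|x|=R$) and keeping track that the constants depend only on $R_0$, respectively on $p$, and not on $R$.
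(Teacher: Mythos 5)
Your proof is correct and follows essentially the same route as the paper: both derive the explicit formulas $w_R(x)=\log|x|$ for $|x|\ge R$, $w_R(x)=\log R+\frac12(|x|^2/R^2-1)$ for $|x|\le R$, and $\nabla w_R(x)=x/\max(|x|,R)^2$ via Newton's theorem (which you re-derive from the mean-value property and Gauss's law), and then read off the sup bounds and compute the $L^p$ norm in polar coordinates exactly as in the paper.
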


\begin{proof}
A simple application of Newton's theorem~\cite[Theorem 9.7]{LieLos-01} yields
\begin{equation}\label{eq:smear coul exp}
	w_R (x) = \begin{cases}
          \log |x| \mbox{ if } |x| \geq R\\
          \log R  + \frac{1}{2}\left( \frac{|x| ^2}{R ^2} - 1 \right) \mbox{ if } 0 \leq |x| \leq R,
          \end{cases}
          \:\:
	\nabla w_R (x) = \begin{cases}
			x/|x|^2 \mbox{ if } |x| \geq R\\
			x/R^2 \mbox{ if } 0 \leq |x| \leq R,
          \end{cases}
\end{equation} 
and~\eqref{eq:sup w_R} clearly follows. For~\eqref{eq:Lp w_R} we compute 
\begin{align*}
	\norm{\nabla w_R}_{L^p(\R^2)}^p  
	&= 2\pi \int_{0} ^{R} \frac{r^p}{R^{2p}} rdr + 2\pi \int_{R}^{\infty} r^{-p} rdr
	\leq C_p^p \,R^{2-p},
\end{align*}
where $C_p > 0$ depends only on $p>2$.
\end{proof}

We first estimate the most singular term of the Hamiltonian, corresponding to 
the fourth line of~\eqref{eq:expand hamil}. 
Since it comes with a relative weight $O(N ^{-1})$ in the total energy, the 
following bound will be enough to discard it from leading order considerations. 

\begin{lemma}[\textbf{Singular two-body term}]\label{lem:sing two body}\mbox{}\\
We have that, as operators on $L^2(\R^4)$ or $L^2_\sym(\R^4)$, 
\begin{equation}\label{eq:sing two body}
	|\nabla w_R (x-y)| ^2 \leq C_\eps R^{-\eps} \, \left( p_x ^2 + 1\right)
\end{equation}
for any $\eps >0$.
\end{lemma}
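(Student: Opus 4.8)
The plan is to reduce the desired bound on the multiplication operator $|\nabla w_R(x-y)|^2$ to a one-body inequality on $L^2(\R^2)$, using that the potential depends only on the difference variable. After a translation and the change of variables $x-y \to z$, the claim amounts to showing that, as operators on $L^2(\R^2)$, $|\nabla w_R(z)|^2 \leq C_\eps R^{-\eps}(p^2+1)$, or equivalently, by considering $(p^2+1)^{-1/2}$ on both sides, that the operator $|\nabla w_R| \,(p^2+1)^{-1/2}$ is bounded on $L^2(\R^2)$ with norm $O(R^{-\eps/2})$. This is a standard Kato–Seiler–Simon/Sobolev-type estimate: for any $q > 2$, the operator $f(z)\,(p^2+1)^{-1/2}$ is bounded on $L^2(\R^2)$ whenever $f \in L^q(\R^2)$, with norm controlled by $\norm{f}_{L^q}$ (this follows from $(p^2+1)^{-1/2}\colon L^2 \to L^r$ being bounded for appropriate $r$, via Sobolev embedding $H^1(\R^2) \hookrightarrow L^r(\R^2)$ for all $r<\infty$, combined with Hölder).

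Concretely, I would pick $q$ slightly bigger than $2$, say $q = 2/(1-\eps/2)$ or any exponent with $2/q = 1 - \eps/2$, so that $q \to 2^+$ as $\eps \to 0^+$. By Lemma~\ref{lem:smear coul}, namely~\eqref{eq:Lp w_R}, we have $\norm{\nabla w_R}_{L^q(\R^2)} \leq C_q R^{2/q - 1} = C_q R^{-\eps/2}$. Then the boundedness of $|\nabla w_R|\,(p^2+1)^{-1/2}$ gives $\norm{|\nabla w_R|\,(p^2+1)^{-1/2}} \leq C_q \norm{\nabla w_R}_{L^q} \leq C_\eps R^{-\eps/2}$. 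Squaring, $(p^2+1)^{-1/2}|\nabla w_R|^2 (p^2+1)^{-1/2} \leq C_\eps^2 R^{-\eps}$ as a bounded self-adjoint operator, which upon conjugating back by $(p^2+1)^{1/2}$ yields $|\nabla w_R(z)|^2 \leq C_\eps R^{-\eps}(p^2+1)$ on $L^2(\R^2)$. Finally, tensoring with the identity in the $y$-variable and undoing the translation restores the two-body statement~\eqref{eq:sing two body} on $L^2(\R^4)$; restriction to the symmetric subspace $L^2_\sym(\R^4)$ is immediate since the inequality holds on the full space.

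The only mild subtlety — and what I would regard as the ``main'' point rather than a genuine obstacle — is making sure the Sobolev exponent bookkeeping is consistent in dimension $2$: one needs $(p^2+1)^{-1/2}$ to map $L^2(\R^2)$ into $L^r(\R^2)$, which holds for every finite $r$ (but not $r=\infty$) because $H^1(\R^2)$ embeds into all $L^r$ with $r<\infty$, and then Hölder with $\tfrac1q + \tfrac1r = \tfrac12$ forces $q>2$, exactly matching the constraint $p>2$ in Lemma~\ref{lem:smear coul}. This is why the loss $R^{-\eps}$ for arbitrarily small $\eps>0$ is natural and why one cannot take $\eps = 0$: that would require $q=2$, i.e. $\nabla w_R \in L^2$, which fails logarithmically. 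Everything else is routine: the reduction via translation invariance, the conjugation argument, and the passage to the symmetric subspace.
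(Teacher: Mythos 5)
Your proof is correct and takes essentially the same route as the paper: the key input in both is the bound \eqref{eq:Lp w_R} on $\norm{\nabla w_R}_{L^q(\R^2)}$ for $q>2$ combined with H\"older and the two-dimensional Sobolev embedding $H^1(\R^2)\hookrightarrow L^r(\R^2)$ for all $r<\infty$, which is exactly why the loss is $R^{-\eps}$ for arbitrary $\eps>0$ and not $\eps=0$. The only difference is presentational: the paper proves the quadratic-form estimate $\langle f|W(x-y)|f\rangle\le C\norm{W}_{L^p}\langle f|(-\Delta_x+1)\otimes\1|f\rangle$ directly on $L^2(\R^4)$ by applying H\"older in $x$ for each fixed $y$ (no change of variables), whereas you reduce to a one-body statement and factor through the operator norm of $|\nabla w_R|\,(p^2+1)^{-1/2}$; these are equivalent formulations of the same estimate.
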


\begin{proof}
We start with a well-known simple application of H\"older's and Sobolev's 
inequalities:  for any $W:\R ^2 \mapsto \R$ and $f\in C^{\infty}_c (\R^4)$
\begin{align}\label{eq:Sob}
	\langle f | W(x-y) | f \rangle &= \iint_{\R ^2 \times \R ^2} \overline{f(x,y)} W(x-y) f(x,y) \,dx dy \nonumber\\
	&\leq \norm{W}_{L^p} \int_{\R ^2}\left( \int_{\R ^2} |f(x,y)| ^{2q} dx \right) ^{1/q} dy \nonumber\\
	&\leq C \norm{W}_{L^p} \iint_{\R^2 \times \R ^2} \left( |\nabla_x f(x,y)|^2  + |f(x,y)| ^2 \right) dxdy \nonumber\\
	&= C \norm{W}_{L^p} \langle f | (-\Delta_x + 1)  \otimes \1 | f \rangle
\end{align}
where we may take any $p>1$, $q = \frac{p}{p-1} \in (1,+\infty)$, 
and we use that in $\R^2$, for any $1\le q<\infty$
$$
	\norm{g}_{L^{2q}}^2 \le C_q \left( \norm{\nabla g}_{L^2}^2 + \norm{g}_{L^2}^2 \right),
$$
see, e.g., \cite[Theorem 8.5 ii]{LieLos-01}. 
Next we may use \eqref{eq:Lp w_R} with $W = |\nabla w_R|^2$ 
and $p = 1 + \eps'$ to conclude
$$
	\norm{W}_{L^p} = \norm{\nabla w_R}_{L^{2p}}^2 
	\le C_{2p}^2 R^{2/p-2} \le C_\eps R^{-\eps},
$$
with a constant $C_\eps > 0$ for given $\eps>0$.
\end{proof}

We next deal with the two-body term mixing position and momentum, second line 
of~\eqref{eq:expand hamil}. This is somehow the most difficult term to handle, 
and it is crucial to observe that it acts on the current and not on the full 
momentum. We shall use three different bounds. 
In the following lemma,~\eqref{eq:mix two body bis} has a worst $R$-dependence 
but it behaves better for large momenta than~\eqref{eq:mix two body eps} 
and~\eqref{eq:mix two body}, a fact that will be useful when projecting 
the problem onto finite dimensional spaces in the next section. 
Estimate~\eqref{eq:mix two body} might seem a bit better 
than~\eqref{eq:mix two body eps}, but we will actually need a bound on the 
absolute value in the sequel, which is not provided by~\eqref{eq:mix two body}.

\begin{lemma}[\textbf{Mixed two-body term}]\label{lem:mix two body}\mbox{}\\
For $R<R_0$ small enough we have that, as operators on $L^2_\sym (\R^{4})$,
\begin{equation}\label{eq:mix two body bis}
	\left| p_x \cdot \nablap w_R (x-y) + \nablap w_R (x-y) \cdot p_x \right| 
	\leq C R ^{-1} \: | p_x |,
\end{equation}
\begin{equation}\label{eq:mix two body eps}
	\left| p_x \cdot \nablap w_R (x-y) + \nablap w_R (x-y) \cdot p_x \right| 
	\leq C_\eps R^{-\eps} \: (p_x^2 + 1),
	\mbox{ for all } \eps>0,
\end{equation}
and
\begin{equation}\label{eq:mix two body}
	\pm\left( p_x \cdot \nablap w_R (x-y) + \nablap w_R (x-y) \cdot p_x \right) 
	\leq C (1 + |\log R|) \: (p_x^2 + 1).
\end{equation}
\end{lemma}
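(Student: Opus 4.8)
The plan is to exploit that $\nablap w_R$ is divergence free, $\div\nablap w_R=0$. Writing for brevity $\mathbf A:=\nablap w_R(x-y)$ --- a bounded vector field by Lemma~\ref{lem:smear coul}, hence a bounded vector-valued multiplication operator for each fixed $R>0$ --- we have $p_x\cdot\mathbf A=\mathbf A\cdot p_x$ as operators, since their difference is $-i\,\div_x\mathbf A=0$. Hence the operator under consideration is the self-adjoint operator
\[
	S:=p_x\cdot\mathbf A+\mathbf A\cdot p_x=2\,\mathbf A\cdot p_x=2\,p_x\cdot\mathbf A
\]
on $L^2_\sym(\R^4)$, and all three estimates follow by comparing $S$, or rather $S^2$, to $p_x^2$, using Lemmas~\ref{lem:smear coul} and~\ref{lem:sing two body} to control the multiplication operator $|\nabla w_R(x-y)|^2$.

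For the two bounds \eqref{eq:mix two body bis} and \eqref{eq:mix two body eps} on the \emph{absolute value} of $S$, I would compute
\[
	S^2=4\,(p_x\cdot\mathbf A)(\mathbf A\cdot p_x)=4\sum_{j,k}p_{x,j}\,A_jA_k\,p_{x,k},
\]
and note that the rank-one symmetric matrix $(A_jA_k)_{j,k}=\mathbf A\otimes\mathbf A$ satisfies $0\le\mathbf A\otimes\mathbf A\le|\mathbf A|^2\,\mathrm{Id}$ pointwise; testing against the $\C^2$-valued function $p_xf$ this yields the form inequality $S^2\le 4\,p_x\cdot|\nabla w_R(x-y)|^2\cdot p_x$. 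For \eqref{eq:mix two body bis} one inserts $|\nabla w_R|^2\le CR^{-2}$ from Lemma~\ref{lem:smear coul}, getting $S^2\le CR^{-2}p_x^2$, and concludes $|S|=(S^2)^{1/2}\le CR^{-1}|p_x|$ by operator monotonicity of $t\mapsto\sqrt t$. For \eqref{eq:mix two body eps} one applies \eqref{eq:sing two body} componentwise in $j$ to obtain $p_x\cdot|\nabla w_R(x-y)|^2\cdot p_x\le C_\eps R^{-\eps}(p_x^2+1)^2$, whence $S^2\le C_\eps R^{-\eps}(p_x^2+1)^2$ and, taking square roots and renaming $\eps$, $|S|\le C_\eps R^{-\eps}(p_x^2+1)$.

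For the \emph{form} bound \eqref{eq:mix two body} on $\pm S$ I would argue differently, using the elementary positivity $0\le(p_x\pm t\mathbf A)^2=p_x^2\pm tS+t^2|\mathbf A|^2$ for every $t>0$, which gives
\[
	\pm S\le\tfrac1t\,p_x^2+t\,|\nabla w_R(x-y)|^2,
\]
together with the logarithmically refined version of Lemma~\ref{lem:sing two body},
\begin{equation}\label{eq:log refined}
	|\nabla w_R(x-y)|^2\le C\,(1+|\log R|)^2\,(p_x^2+1)\qquad\text{on }L^2(\R^4).
\end{equation}
Granting \eqref{eq:log refined}, the choice $t\sim(1+|\log R|)^{-1}$ optimally balances the two terms and produces $\pm S\le C(1+|\log R|)(p_x^2+1)$.

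It remains to prove \eqref{eq:log refined}, which is the genuinely new point. I would split $|\nabla w_R(z)|^2=|\nabla w_R(z)|^2\1_{|z|\le R_0}+|\nabla w_R(z)|^2\1_{|z|>R_0}$, the second term being $\le C^2$ by Lemma~\ref{lem:smear coul} and hence controlled by $p_x^2+1$. For $W:=|\nabla w_R|^2\1_{|z|\le R_0}$ I would redo the H\"older--Sobolev estimate \eqref{eq:Sob}, but tracking constants: since the two-dimensional Sobolev constant grows linearly, $\norm{g}_{L^{2q}(\R^2)}^2\le Cq\,\norm{g}_{H^1(\R^2)}^2$, one gets $\langle f,W(x-y)f\rangle\le Cq\,\norm{W}_{L^p(\R^2)}\,\langle f,(p_x^2+1)f\rangle$ for any $p\in(1,2]$, $q=p/(p-1)$, while a direct computation as in \eqref{eq:Lp w_R} gives $\norm{W}_{L^p}\le C\,(p-1)^{-1/p}R^{2/p-2}$; integrating over $y$ then yields the operator bound. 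The choice $p-1=(1+|\log R|)^{-1}$ keeps $R^{2/p-2}$ bounded and turns both $(p-1)^{-1/p}$ and $q$ into $O(1+|\log R|)$, giving \eqref{eq:log refined}. The main obstacle is precisely this last step: the borderline exponent $p=1$, where $W$ is just barely non-integrable with $\norm{W}_{L^1}\sim|\log R|$, is not directly usable, so one must trade the blow-up of $\norm{W}_{L^p}$ as $p\downarrow1$ against that of the $2$D Sobolev constant, and it is essential that the latter grows only linearly in the exponent so that $p-1\sim1/|\log R|$ is the right balance point --- anything weaker here would merely reproduce the $R^{-\eps}$ of Lemma~\ref{lem:sing two body}.
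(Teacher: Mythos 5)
Your proofs of \eqref{eq:mix two body bis} and \eqref{eq:mix two body eps} are correct and essentially identical to the paper's: square the (self-adjoint, by divergence-freeness) operator, dominate $\mathbf{A}\otimes\mathbf{A}$ by $|\mathbf{A}|^2\,\mathrm{Id}$, insert the sup bound from Lemma~\ref{lem:smear coul} or the operator bound \eqref{eq:sing two body}, and take square roots by operator monotonicity. For \eqref{eq:mix two body} you take a genuinely different route. The paper exploits the current structure: it writes the quadratic form as $2\iint \nablap w_R(x-y)\cdot\bJ_x[f]$, splits $w_R$ with a partition of unity at scale $R_0$, integrates by parts (Stokes) in the near region so that the derivative lands on the current, and then uses the pointwise bounds $|\curl_x\bJ_x[f]|\le|\nabla_x f|^2$ and $\sup_{B(0,R_0)}|w_R|\le C+|\log R|$; the logarithm thus comes from the size of $w_R$ itself. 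You instead stay at the level of the potential $|\nabla w_R|^2$ via the elementary inequality $\pm S\le t^{-1}p_x^2+t|\nabla w_R(x-y)|^2$ and sharpen Lemma~\ref{lem:sing two body} to a $(1+|\log R|)^2$ bound by optimizing the H\"older exponent, $p-1\sim(1+|\log R|)^{-1}$, against the growth of the two-dimensional Sobolev constant; here the logarithm comes from the borderline non-integrability of $|\nabla w_R|^2$. Your computation checks out: $\norm{|\nabla w_R|^2}_{L^p(B(0,R_0))}\le C(p-1)^{-1/p}R^{2/p-2}$ as in \eqref{eq:Lp w_R}, and with that choice of $p$ all factors are $O(1+|\log R|)$ each, so \eqref{eq:log refined} and then \eqref{eq:mix two body} follow. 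The one input you must justify carefully is the linear-in-$q$ growth of the constant in $\norm{g}_{L^{2q}(\R^2)}^2\le C q\,\norm{g}_{H^1(\R^2)}^2$; this is true (it is visible in the explicit constant of \cite[Theorem 8.5 ii]{LieLos-01}, or via Moser--Trudinger), and, as you note, anything weaker would only reproduce the $R^{-\eps}$ bound. The trade-off is that the paper's argument is more structural --- it is precisely what justifies the remark that the term acts on the current rather than on the full momentum --- whereas yours is more elementary and shows that, for this particular estimate, the current structure is not actually indispensable.
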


\begin{proof}
The bounds~\eqref{eq:mix two body bis} and~\eqref{eq:mix two body eps} are based on the same basic computation.

\medskip

\noindent\textbf{Proof of~\eqref{eq:mix two body bis}}. First note that 
\begin{equation}\label{eq:commutator}
	p_x \cdot \nablap w_R (x-y) = \nablap w_R (x-y) \cdot p_x
\end{equation}
because $\nabla_x \cdot \nablap w_R(x-y) = 0$. 
We can then square the expression we want to estimate, obtaining 
$$
	\left( p_x \cdot \nablap w_R (x-y) + \nablap w_R (x-y)  \cdot p_x\right) ^2 
	= 4  p_x \cdot \nablap w_R (x-y) \nablap w_R (x-y)  \cdot p_x.
$$
Consequently, for any $f = f(x,y) \in C^{\infty}_c(\R^4)$,
\begin{multline*}
	\left| \left\langle f \big| \left( p_x \cdot \nablap w_R (x-y) + \nablap w_R (x-y)  \cdot p_x\right) ^2 \big| f \right\rangle \right| \\
	= 4\left| \iint_{\R^2 \times \R^2} \left( \nabla_x \bar{f} (x,y) \cdot \nablap w_R (x-y) \right) \left(\nabla_x f (x,y) \cdot \nablap w_R (x-y) \right)dxdy\right| \\
	\leq 4\iint_{\R^2 \times \R^2} \left| \nabla_x f (x,y)\right| ^2 \left| \nablap w_R (x-y) \right| ^2 dxdy. 
\end{multline*}
Inserting~\eqref{eq:sup w_R} we get 
$$ \left| \left\langle f \big| \left( p_x \cdot \nablap w_R (x-y) + \nablap w_R (x-y)  \cdot p_x\right) ^2 \big| f \right\rangle \right| \leq \frac{C}{R ^2} \iint_{\R^2 \times \R^2} \left| \nabla_x f \right| ^2 dxdy$$
and thus 
$$
	\left( p_x \cdot \nablap w_R (x-y) + \nablap w_R (x-y)  \cdot p_x\right) ^2 
	\leq \frac{C}{R ^2} p_x ^2.
$$
We deduce~\eqref{eq:mix two body bis} because the square root is operator 
monotone (see, e.g., \cite[Chapter~5]{Bhatia}).

\medskip

\noindent\textbf{Proof of~\eqref{eq:mix two body eps}}. 
We proceed in the same way but use Lemma~\ref{lem:sing two body} instead of just the rough bound~\eqref{eq:sup w_R}
(we denote $x=(x_1,x_2) \in \R^2$):
\begin{multline*}
\left| \left\langle f \big| \left( p_x \cdot \nablap w_R (x-y) + \nablap w_R (x-y)  \cdot p_x\right) ^2 \big| f \right\rangle \right|
\\ \leq 4 \iint_{\R^2 \times \R^2} \left| \nabla_x f (x,y)\right| ^2 \left| \nablap w_R (x-y) \right|^2 dxdy \\
	= \left\langle \partial_{x_1}f, \left| \nablap w_R (x-y) \right|^2 \partial_{x_1}f \right\rangle_{L^2(\R^4)}
	+ \left\langle \partial_{x_2}f, \left| \nablap w_R (x-y) \right|^2 \partial_{x_2}f \right\rangle_{L^2(\R^4)} \\
	\leq \frac{C_\eps}{R^\eps} \left( 
		\left\langle \partial_{x_1}f, \left(-\Delta_x + 1\right) \partial_{x_1}f \right\rangle
		+ \left\langle \partial_{x_2}f, \left(-\Delta_x + 1\right) \partial_{x_2}f \right\rangle
		\right) 
	\leq \frac{C_\eps}{R^\eps} \left\langle f, \left(-\Delta_x + 1\right)^2 f \right\rangle.
\end{multline*}
Thus 
$$
	\left( p_x \cdot \nablap w_R (x-y) + \nablap w_R (x-y)  \cdot p_x\right) ^2 
	\leq \frac{C_\eps}{R ^\eps} (p_x^2 + 1)^2
$$
for any $\eps >0$, 
and the desired bound again follows by taking the square root.

\medskip

\noindent\textbf{Proof of~\eqref{eq:mix two body}}. 
The idea is here a bit different. We pick $f \in C^{\infty}_c(\R^4;\C)$ 
and compute as in~\eqref{eq:two body term}
$$ \left\langle f \big| p_x \cdot \nablap w_R (x-y) + \nablap w_R (x-y) \cdot p_x \big| f \right\rangle = 2 \iint_{\R ^2 \times \R ^2} \nablap w_R (x-y) \cdot \bJ_x [f] \,dxdy$$
with 
$$ \bJ_x [f] = \frac{i}{2} \left( f \: \overline{\nabla_x f} - \overline{f} \nabla_x f\right).$$
We then split this according to 
a partition of unity $\chi + \eta  = 1$
where $\chi \equiv 1$ in the ball $B(0,R_0)$ and $\eta \equiv 1$ outside of the ball $B(0,2R_0)$:
\begin{multline*}
	\left\langle f \big| p_x \cdot \nablap w_R (x-y) + \nablap w_R (x-y) \cdot p_x \big| f \right\rangle = 2 \iint_{\R ^2 \times \R ^2} \nablap \left( \chi (x-y) w_R (x-y) \right) \cdot \bJ_x [f] \,dxdy \\
	+ 2 \iint_{\R ^2 \times \R ^2} \nablap \left( \eta (x-y) w_R (x-y) \right) \cdot \bJ_x [f] \,dxdy.
\end{multline*}
To control the $\chi$ term we use Stokes' formula and deduce 
$$ 
	2 \iint_{\R ^2 \times \R ^2} \nablap \left( \chi (x-y) w_R (x-y) \right) \cdot \bJ_x [f] \,dxdy 
	= - 2 \iint_{\R ^2 \times \R ^2} \chi(x-y) w_R (x-y) \curl_x \bJ_x [f] \,dxdy.
$$
It is easy to see that 
$$ \left| \curl_x \bJ_x [f] \right| \leq |\nabla_x f| ^2$$
pointwise, see e.g.~\cite[Lemma 3.4]{CorPinRouYng-12}. We thus obtain 
\begin{align*}
&\pm 2 \iint_{\R ^2 \times \R ^2} \nablap \left( \chi (x-y) w_R (x-y) \right) \cdot \bJ_x [f] \,dxdy \\ 
&\leq 2 \iint_{\R ^2 \times \R ^2} |\chi(x-y)||w_R (x-y)| \left| \nabla_x f (x,y)\right| ^2 dxdy \\
&\leq C (1+|\log R|) \iint_{\R ^2 \times \R ^2} \left| \nabla_x f (x,y)\right| ^2 dxdy
\end{align*}
in view of~\eqref{eq:sup w_R}. For the $\eta$ term we note that 
$$ \left| \bJ_x [f] \right| \leq |f| |\nabla_x f|.$$
Thus 
\begin{align*}
	&\pm 2\iint_{\R ^2 \times \R ^2} \nablap \left( \eta (x-y) w_R (x-y) \right) \cdot \bJ_x [f] dxdy \\
	&\leq \iint_{\R ^2 \times \R ^2} \left| \nablap \left( \eta (x-y) w_R (x-y) \right) \right| ^2 |f(x,y)| ^2 dxdy
		+ \iint_{\R ^2 \times \R ^2} \left| \nabla_x f (x,y)\right| ^2 dxdy \\
	&\leq C \iint_{\R ^2 \times \R ^2} 
		\bar{f} (-\Delta_x + 1) f \,dxdy. 
\end{align*}
For the first term we used that~\eqref{eq:smear coul exp} implies 
$$\left| \nablap \left( \eta  w_R  \right) \right|  \leq |\nabla \eta| |w_R| + |\eta| |\nabla w_R| \leq C$$ 
because $\eta \equiv 0$ in $B(0,R_0)$ and $\eta \equiv 1$ outside of $B (0,2 R_0)$. 
Gathering these estimates we obtain the desired operator bound.
\end{proof}

The three-body term (third line of~\eqref{eq:expand hamil}) is actually a 
pretty regular potential term, as shown in the following:

\begin{lemma}[\textbf{Three-body term}]\label{lem:three body}\mbox{}\\
	We have that, as operators on $L^2_\sym (\R^{6})$, 
	\begin{equation}\label{eq:three body}
		0 \leq \nablap w_R (x-y) \cdot \nablap w_R (x-z) \leq C (p_x ^2 + 1). 
	\end{equation}
\end{lemma}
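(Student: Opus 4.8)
The plan is to prove the two halves of \eqref{eq:three body} separately; both genuinely rely on working in $L^2_\sym(\R^6)$ (on the full $L^2(\R^6)$ neither holds). Throughout I use the explicit formula \eqref{eq:smear coul exp}: with $g(r):=\max(r,R)^{-2}$ we have $\nablap w_R(v)=g(|v|)\,v^\perp$, so $|\nabla w_R(v)|=|v|\,g(|v|)\le|v|^{-1}$ (and $\le C$ for $|v|\ge R_0$, by Lemma~\ref{lem:smear coul}), and since $v^\perp\cdot v'^\perp=v\cdot v'$ in two dimensions,
\[
F:=\nablap w_R(x-y)\cdot\nablap w_R(x-z)=g(|x-y|)\,g(|x-z|)\,(x-y)\cdot(x-z).
\]

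\textbf{Lower bound.} As $F$ is a multiplication operator and $|f|^2$ is symmetric, $\langle f,Ff\rangle=\langle f,F_\sym f\rangle$ with $F_\sym=\tfrac13\big(F(x,y,z)+F(y,x,z)+F(z,x,y)\big)$. Writing $a,b,c$ for the side lengths $|y-z|,|x-z|,|x-y|$ of the triangle $\{x,y,z\}$ and $A,B,C$ for the opposite angles, the identity above and the law of cosines (e.g.\ $(x-y)\cdot(x-z)=cb\cos A$) give
\[
F_\sym=\tfrac13\,h(a)h(b)h(c)\Big(\tfrac{\cos A}{h(a)}+\tfrac{\cos B}{h(b)}+\tfrac{\cos C}{h(c)}\Big),\qquad h(r):=r\,g(r)=\min\!\big(r^{-1},rR^{-2}\big),
\]
with $h>0$ and $1/h(r)=\max(r,R^2/r)\ge r$, equality for $r\ge R$. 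Order $a\le b\le c$, so $A,B$ are acute. If $c\ge R$ then $1/h(c)=c$ while $1/h(a)\ge a$, $1/h(b)\ge b$ and $\cos A,\cos B\ge0$, so the bracket is $\ge a\cos A+b\cos B+c\cos C=8K^2/(abc)\ge0$ ($K$ the area). If $c<R$ then all sides are $<R$, $h(r)=rR^{-2}$, and the bracket equals $R^2\big(\tfrac{\cos A}{a}+\tfrac{\cos B}{b}+\tfrac{\cos C}{c}\big)=R^2(a^2+b^2+c^2)/(2abc)>0$. (Both identities follow from $\cos A=(b^2+c^2-a^2)/(2bc)$ and Heron's formula.) Hence $F_\sym\ge0$ pointwise, giving the lower bound.

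\textbf{Upper bound.} By Cauchy--Schwarz $|F|\le|\nabla w_R(x-y)|\,|\nabla w_R(x-z)|$, and since $|\nabla w_R|$ is dominated, uniformly in $R$, by the $R$-independent function $|v|^{-1}\1_{|v|\le R_0}+C$, we get $|F|\le\big(|x-y|^{-1}\1_{|x-y|\le R_0}+C\big)\big(|x-z|^{-1}\1_{|x-z|\le R_0}+C\big)$, which I expand into four terms. The constant term is $\le C(p_x^2+1)$; the two mixed terms, e.g.\ $C\,|x-y|^{-1}\1_{|x-y|\le R_0}$, are potentials $W(x-y)$ with $W\in L^p(\R^2)$ and $\|W\|_{L^p}\le C(R_0)$ for any fixed $p\in(1,2)$, hence $\le C(R_0)(p_x^2+1)$ on $L^2(\R^6)$ by \eqref{eq:Sob} (with $z$, resp.\ $y$, a spectator). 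The remaining, genuinely singular term is $|x-y|^{-1}|x-z|^{-1}\1_{|x-y|\le R_0}\1_{|x-z|\le R_0}\le|x-y|^{-1}|x-z|^{-1}$; passing to the coordinates $(a,b,c)=(x-y,\,x-z,\,x)$, in which this potential depends only on the \emph{independent} variables $a,b$, I invoke the two-dimensional fractional Hardy inequality $|a|^{-1}\le C_H(-\Delta_a)^{1/2}$ on $L^2(\R^2)$ (admissible in dimension two, the order $\tfrac12$ lying below the critical value $1$), once in $a$ and once in $b$: since $|a|^{-1},|b|^{-1}$ and $(-\Delta_a)^{1/2},(-\Delta_b)^{1/2}$ pairwise commute (they act in disjoint variables), $|a|^{-1}|b|^{-1}\le C_H^2(-\Delta_a)^{1/2}(-\Delta_b)^{1/2}\le\tfrac12 C_H^2(-\Delta_a-\Delta_b)$ using $|\xi||\eta|\le\tfrac12(|\xi|^2+|\eta|^2)$. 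As $(x,y,z)\mapsto(a,b,c)$ is a fixed invertible linear map, $-\Delta_a-\Delta_b\le-\Delta_a-\Delta_b-\Delta_c\le C(-\Delta_x-\Delta_y-\Delta_z)$, so this term is $\le C(-\Delta_x-\Delta_y-\Delta_z+1)$ on $L^2(\R^6)$. Adding the four estimates and restricting to $L^2_\sym(\R^6)$, where $\langle f,(-\Delta_x-\Delta_y-\Delta_z)f\rangle=3\langle f,p_x^2 f\rangle$, yields $F\le|F|\le C(p_x^2+1)$.

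The hard part is precisely this last ``doubly singular'' term: the naive bound $|\nabla w_R(v)|^2\lesssim p_x^2+1$ fails in two dimensions uniformly as $R\to0$ — it costs a factor $|\log R|$, cf.\ Lemma~\ref{lem:sing two body} — so one must use simultaneously the product structure (the two singularities sit in different coordinate pairs), the symmetric subspace (so that $-\Delta_y,-\Delta_z$ become available on the right-hand side), and the sharp half-power of $-\Delta$ provided by fractional Hardy together with $|\xi||\eta|\le\tfrac12(|\xi|^2+|\eta|^2)$. Everything else — the lower bound, the constant and mixed terms, and the regular region $|v|\ge R_0$ — is elementary.
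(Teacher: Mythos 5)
Your proof is correct, and both halves take routes genuinely different from the paper's. For the lower bound the paper also symmetrizes and establishes pointwise non-negativity of the cyclic sum (the geometric claim \eqref{eq:geom three body}), but via a four-case analysis according to which edges are shorter or longer than $R$, quoting \cite{HofLapTid-08} for the homogeneous cases and running a geometric-algebra computation for the mixed case; your law-of-cosines/Heron argument, which exploits that the only possibly negative term $\cos C/h(c)$ sits opposite the longest edge so that $1/h(c)=c$ holds exactly there, reaches the same conclusion in two cases and is arguably cleaner. For the upper bound the paper keeps the symmetrized expression intact, bounds it pointwise by $C\rho^{-2}$, and converts this into kinetic energy through the three-body Hardy inequality of Lemma~\ref{lem:Hardy}; you instead give up the sign structure via Cauchy--Schwarz and control the doubly singular remainder $|x-y|^{-1}|x-z|^{-1}$ by tensorizing the two-dimensional Herbst/fractional Hardy inequality $|a|^{-1}\le C(-\Delta_a)^{1/2}$ over the two relative coordinates. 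This is more modular and avoids the geometric input entirely, at the price of two points you should spell out: the fractional Hardy inequality in this form (valid precisely because $1<d=2$) deserves a reference, and operator inequalities cannot in general be multiplied, so the step $|a|^{-1}|b|^{-1}\le C^2(-\Delta_a)^{1/2}(-\Delta_b)^{1/2}$ needs the short sandwich argument $\langle f,|a|^{-1}|b|^{-1}f\rangle=\langle |b|^{-1/2}f,|a|^{-1}\,|b|^{-1/2}f\rangle\le C\langle (-\Delta_a)^{1/4}f,|b|^{-1}(-\Delta_a)^{1/4}f\rangle\le C^2\langle f,(-\Delta_a)^{1/2}(-\Delta_b)^{1/2}f\rangle$, which works because the two factors are positive and act on disjoint variables. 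Note finally that the paper's sharper pointwise comparison with the circumradius is not redundant: the exact identity $\sum_{\mathrm{cyc}}=\frac12\cR(x,y,z)^{-2}$ at $R=0$ is reused in Lemma~\ref{lem:af_three_body} and in the compactness argument of Proposition~\ref{prop:af_minimizer}, which your bound would not directly supply.
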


The essential ingredient of the proof is the following three-particle Hardy 
inequality of~\cite[Lemma~3.6]{HofLapTid-08} 
(see also~\cite{Lundholm-15} for relevant methods and generalizations):

\begin{lemma}[\textbf{Three-body Hardy inequality}]\label{lem:Hardy}\mbox{}\\
	Let $d\geq 2$ and $u:\R ^{3d} \to \C$. 
	Let $\cR(x,y,z)$ be the circumradius of 
	the triangle with vertices $x,y,z \in \R^d$,
	and $\rho(x,y,z) := \sqrt{|x-y|^2 + |y-z|^2 + |z-x|^2}$. Then 
	$\cR^{-2} \le 9\rho^{-2}$ pointwise, and
	\begin{equation}\label{eq:Hardy}
	3(d-1)^2 \int_{\R ^{3d}} \frac{|u(x,y,z)| ^2}{\rho(x,y,z)^2} dxdydz \leq 
	\int_{\R ^{3d}} \left( \left| \nabla_x u \right| ^2 
		+ \left| \nabla_y u \right| ^2 
		+ \left| \nabla_z u \right| ^2\right) dx dy dz.  
	\end{equation}
\end{lemma}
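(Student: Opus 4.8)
\emph{Proof proposal.} The plan is to establish the two claims separately. For the pointwise bound $\cR^{-2}\le 9\rho^{-2}$ I would argue purely geometrically: let $q\in\R^d$ be the circumcenter, so $|x-q|=|y-q|=|z-q|=\cR$, expand $|x-y|^2=|x-q|^2+|y-q|^2-2(x-q)\cdot(y-q)$ together with the two analogous identities and sum them, obtaining
\[
\rho(x,y,z)^2=6\cR^2-2S,\qquad S:=(x-q)\cdot(y-q)+(y-q)\cdot(z-q)+(z-q)\cdot(x-q).
\]
Since $0\le\bigl|(x-q)+(y-q)+(z-q)\bigr|^2=3\cR^2+2S$, one has $S\ge-\tfrac32\cR^2$, hence $\rho^2\le 9\cR^2$, with equality exactly when $q$ is the centroid, i.e.\ for an equilateral triangle; the degenerate collinear case ($\cR=+\infty$) is trivial. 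I expect no difficulty here.

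For the Hardy inequality \eqref{eq:Hardy} I would reduce to the classical one-body Hardy inequality by passing to translation-invariant coordinates. By density it suffices to treat $u\in C_c^\infty(\R^{3d})$. Introduce the orthonormal Jacobi coordinates
\[
r_1:=\frac{x-y}{\sqrt 2},\qquad r_2:=\frac{x+y-2z}{\sqrt 6},\qquad r_{\mathrm{cm}}:=\frac{x+y+z}{\sqrt 3},
\]
which come from an orthogonal linear transformation of $\R^{3d}=(\R^d)^3$; this change of variables therefore preserves the $L^2$ norm and the Dirichlet form,
\[
\int_{\R^{3d}}\bigl(|\nabla_xu|^2+|\nabla_yu|^2+|\nabla_zu|^2\bigr)=\int_{\R^{3d}}\bigl(|\nabla_{r_1}u|^2+|\nabla_{r_2}u|^2+|\nabla_{r_{\mathrm{cm}}}u|^2\bigr),
\]
and since $|r_1|^2+|r_2|^2=|x|^2+|y|^2+|z|^2-\tfrac13|x+y+z|^2$, a one-line computation gives the key identity $|r_1|^2+|r_2|^2=\tfrac13\,\rho(x,y,z)^2$. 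Dropping the nonnegative $|\nabla_{r_{\mathrm{cm}}}u|^2$ term, freezing $r_{\mathrm{cm}}$, and applying in the variable $(r_1,r_2)\in\R^{2d}$ the sharp classical Hardy inequality in dimension $2d\ge 4$, namely $(d-1)^2\int_{\R^{2d}}|v|^2|\xi|^{-2}\le\int_{\R^{2d}}|\nabla v|^2$ with $\xi=(r_1,r_2)$, then substituting $|\xi|^2=\rho^2/3$ and integrating back over $r_{\mathrm{cm}}$ by Fubini, yields exactly \eqref{eq:Hardy} with constant $3(d-1)^2$ --- which is automatically optimal, being inherited from the optimal one-body Hardy constant.

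The only genuinely delicate point --- everything else being routine --- is the bookkeeping around the Jacobi coordinates: checking that the underlying $3\times3$ matrix (rows proportional to $(1,-1,0)$, $(1,1,-2)$, $(1,1,1)$) is orthogonal, so that no Jacobian appears and the center-of-mass kinetic term may simply be discarded, and verifying $|r_1|^2+|r_2|^2=\rho^2/3$. With these in hand the whole inequality collapses onto the textbook Hardy estimate in $\R^{2d}$, where the hypothesis $d\ge 2$ (so $2d\ge 3$) is used, while the geometric comparison $\cR^{-2}\le 9\rho^{-2}$ is entirely elementary.
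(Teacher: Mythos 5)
Your proof is correct: the orthogonal Jacobi-coordinate change, the identity $|r_1|^2+|r_2|^2=\rho^2/3$, and the sharp Hardy constant $(2d-2)^2/4=(d-1)^2$ in $\R^{2d}$ (valid since $2d\ge 3$) combine to give exactly the constant $3(d-1)^2$, and the circumcenter computation for $\cR^{-2}\le 9\rho^{-2}$ is sound. The paper does not prove this lemma but cites it from Hoffmann-Ostenhof--Hoffmann-Ostenhof--Laptev--Tidblom, and your argument (separation of the center of mass followed by the classical one-body Hardy inequality in the relative coordinates) is essentially the same as in that reference.
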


\begin{proof}[Proof of Lemma~\ref{lem:three body}]
	Since we consider the operator as acting on symmetric wave functions it is 
	equivalent to estimate
	\begin{equation}\label{eq:cyclic}
		\sum_{\textup{cyclic in $x,y,z$}} \nablap w_R (x-y) \cdot \nablap w_R (x-z). 
	\end{equation}
	
	In general, let $x,y,z \in \R^d$ denote the vertices of a triangle and 
	$|x|_R := \max\{|x|,R\}$ a regularized distance. 
	Then we claim the following geometric fact:
	\begin{equation}\label{eq:geom three body}
		0 \le \sum_{\textup{cyclic in $x,y,z$}} \frac{x-y}{|x-y|_R^2} \cdot \frac{x-z}{|x-z|_R^2} 
		\le \frac{C}{\rho(x,y,z)^2}.
	\end{equation}
	Recalling~\eqref{eq:smear coul exp} this gives a control on the expression we are interested in. Equivalently, we shall prove that
	\begin{multline}\label{eq:geom three body mult}
		0 \le |y-z|_R^2 (x-y)\cdot(x-z) + |z-x|_R^2 (y-z)\cdot(y-x) + |x-y|_R^2 (z-x)\cdot(z-y) \\
		\le C\frac{|x-y|_R^2 |y-z|_R^2 |z-x|_R^2}{|x-y|^2 + |y-z|^2 + |z-x|^2},
	\end{multline}
	for some constant $C>0$ independent of $R$.
	
	Let us consider each of the different geometric configurations that may occur.
	In the case that all edge lengths of the triangle are greater than $R$,
	the cyclic expression that we wish to estimate in \eqref{eq:geom three body} 
	reduces to $\frac{1}{2}\cR(x,y,z)^{-2}$ 
	(see \cite[Lemma 3.2]{HofLapTid-08}), 
	which is clearly non-negative and bounded by
	$\frac{9}{2}\rho^{-2}$ from Lemma~\ref{lem:Hardy}.
	On the other hand, if all edge lengths are smaller than $R$ then
	the expression equals $\frac{1}{2R^4}\rho^2$
	(cf. \cite[Lemma 3.4]{HofLapTid-08}), for which we have 
	$0 \le \frac{1}{2R^4} \rho^2 \le \frac{9}{2} \rho^{-2}$
	since $\rho^2 \le 3R^2$.
	If two of the edges are short and one long, 
	say $|x-y|,|y-z| \le R$ and $|z-x| \ge R$,
	then the expression to be estimated in \eqref{eq:geom three body mult} 
	reads
	\begin{align*}
		&R^2(x-y)\cdot(x-z) + |z-x|^2 \, (y-z)\cdot(y-x) + R^2\underbrace{(z-x)\cdot(z-y)}_{-(z-y)\cdot(x-z)} \\
		&= R^2 \left( (x-y) - (z-y) ) \right) \cdot (x-z) 
			+ |x-z|^2 \, (y-z)\cdot(y-x) \\
		&= |x-z|^2 \Big( R^2 + (y-z)\cdot(y-x) \Big)
		\ \ge |x-z|^2 (R^2 - R^2)
		\ge 0.
	\end{align*}
	We furthermore have the upper bound
	$$
		|x-z|^2 \Big( R^2 + (y-z)\cdot(y-x) \Big)
		\le 2R^2 |x-z|^2,
	$$
	while the r.h.s. of \eqref{eq:geom three body mult} is larger than
	$$
		\frac{R^4|x-z|^2}{2R^2 + |x-z|^2} \ge \frac{1}{6} R^2|x-z|^2,
	$$
	using that $|x-z| \le |x-y| + |y-z| \le 2R$.
	
	This leaves the case that only one edge is short, say $|x-y| \le R$,
	and the others long, $|y-z|,|z-x| \ge R$.
	We thus consider the expression in \eqref{eq:geom three body mult}
	\begin{equation}\label{eq:geom mixed case}
		|y-z|^2 \, (x-y)\cdot(x-z) + |z-x|^2 \, (y-z)\cdot(y-x) + R^2 (z-x)\cdot(z-y).
	\end{equation}
	We will here use methods from \cite{Lundholm-15},
	namely the geometric (Clifford) algebra $\cG(\R^d)$ over $\R^d$
	(see \cite{LunSve-09} for a general introduction).
	In the case $d=2$ or $d=3$ one can think of this as the real algebra 
	generated by the Pauli matrices $\sigma_j$, 
	with scalar projection $\langle A \rangle_0 := \frac{1}{2}\tr A$
	and the embedding 
	of scalars (0-vectors)
	$1 \hookrightarrow \1$
	and of 1-vectors
	$\R^d \ni x \hookrightarrow \sum_{j=1}^d x_j\sigma_j \in \cG(\R^d)$,
	and with the product of two 1-vectors
	$xy = x \cdot y + x \wedge y$
	decomposing into a traceful symmetric scalar part 
	and a traceless antisymmetric bivector part.
	We have then,
	using tracelessness of the bivector parts of such products
	and the linearity and cyclicity of the trace,
	\begin{align*}
		&|y-z|^2 \, (x-y)\cdot(x-z) + |z-x|^2 \, (y-z)\cdot(y-x) \\
		&= \left\langle (y-z)^2(x-y)(x-z) + (z-x)^2(y-z)(y-x) \right\rangle_0 \\
		&= \left\langle (y-z)(x-y)(x-z)(y-z) + (y-x)(z-x)(z-x)(y-z) \right\rangle_0 \\
		&= \left\langle (y-z)(x-y)(z-x)(z-y) + (x-y)(z-x)(z-x)(z-y) \right\rangle_0 \\
		&= \left\langle \Big( (y-z)(x-y) + (z-x)(x-y) + 2(x-y)\wedge(z-x) \Big)(z-x)(z-y) \right\rangle_0 \\
		&= \left\langle (y-x)(x-y)(z-x)(z-y) \right\rangle_0 
		+ 2\big\langle (x-y)\wedge(z-x) \, (z-x)(z-y) \big\rangle_0 \\
		&= -\left\langle (x-y)^2(z-x)(z-y) \right\rangle_0 
		+ 2\big\langle \!\!\!\underbrace{(x-y)}_{z-y-(z-x)}\!\!\!\wedge(z-x) \, (z-x)\wedge(z-y) \big\rangle_0 \\
		&= -|x-y|^2 \, (z-x)\cdot(z-y) + 2\langle B^\dagger B \rangle_0,
	\end{align*}
	with $B := (z-x) \wedge (z-y)$ and its Hermite conjugate
	$B^\dagger = (z-y) \wedge (z-x)$.
	In the fourth and fifth steps we used $xy = yx + 2x \wedge y$ 
	for the second term and then $(y-z) + (z-x) = y-x$,
	while for the final steps we again used the properties of the trace
	and that $B^\dagger B = |B|^2$ is scalar.
	Thus, the expression \eqref{eq:geom mixed case} we wish to estimate equals
	$$
		\left( R^2 - |x-y|^2 \right) (z-x)\cdot(z-y) + 2|B|^2 \ \ge 0,
	$$
	where for the lower bound we also used that $(z-x) \cdot (z-y) \ge 0$ 
	since $x-y$ is the shortest edge.
	For an upper bound we can use permutation invariance (cf. \cite[Proposition 15]{Lundholm-15}) of
	$$|B| = |(x-y) \wedge (x-z)| \le |x-y||x-z|,$$ 
	and for example that $|y-z| \le R + |x-z| \le 2|x-z|$. Hence
	$$
		\left( R^2 - |x-y|^2 \right) (z-x)\cdot(z-y) + 2|B|^2
		\le 4R^2|x-z|^2,
	$$
	while for the r.h.s. of \eqref{eq:geom three body mult},
	with analogously $|x-z| \le 2|y-z|$,
	$$
		\frac{R^2|y-z|^2|z-x|^2}{R^2 + |y-z|^2 + |z-x|^2} 
		\ge \frac{R^2|y-z|^2|z-x|^2}{6|y-z|^2}
		= \frac{1}{6} R^2|x-z|^2.
	$$
	We also remark that the non-negativity of \eqref{eq:geom three body}
	is in general false if $|\cdot|_R$ 
	is replaced by an arbitrary radial function, 
	as can be checked when taking e.g. $|x|_R = e^{|x|^2/2}$.

	Finally, the estimate \eqref{eq:three body} follows simply by applying
	Lemma~\ref{lem:Hardy} with $d=2$ to \eqref{eq:geom three body}
	and using the symmetry of functions in $L^2_\sym(\R^6)$.
\end{proof}

\subsection{A priori bound for the ground state}\label{sec:ap bound}

For the estimates of the previous subsection to apply efficiently, we need an a priori bound on ground states (or approximate ground states) of 
the $N$-body Hamiltonian~\eqref{eq:hamil}, provided in the following:

\begin{proposition}[\textbf{A priori bound for many-body ground states}]\label{pro:a priori}\mbox{}\\
Let $\Psi_N \in L^2_\sym (\R ^{2N})$ be a (sequence of) approximate ground 
states for $H_N ^R$, that is,
$$
	\langle \Psi_N, H_N ^R \Psi_N \rangle \leq E^R (N) (1+o(1)) \mbox{ when } N\to \infty.
$$
Denote by $\gamma_N ^{(1)}$ the associated sequence of one-body density matrices. 
In the regime~\eqref{eq:scale alpha}, assuming a bound $R\geq N ^{-\eta}$ for 
some $\eta > 0$ independent of $N$, we have
\begin{equation}\label{eq:ap bound}
\tr\left[ \left( p^2 + V \right) \gamma_N ^{(1)}\right] \leq C(1+\beta^2),
\end{equation}
where $C$ is a constant independent of $\beta$, $N$ and $R$.
\end{proposition}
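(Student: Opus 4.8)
The plan is to obtain \eqref{eq:ap bound} by combining the energy upper bound coming from a trial state with a lower bound on $\langle \Psi_N, H_N^R \Psi_N\rangle$ in which the kinetic energy $\tr[(p^2+V)\gamma_N^{(1)}]$ is not fully absorbed by the error terms. First I would record the upper bound: testing $H_N^R$ against $u^{\otimes N}$ for a fixed nice $u$ (e.g.\ a minimizer of $\cEAF_{R_0}$ or any fixed Schwartz function with $\|u\|_{L^2}=1$), and using the identities \eqref{eq:two body term}--\eqref{eq:three body term} together with Lemma~\ref{lem:smear coul} (so that the dropped singular term \eqref{eq:sing two body} contributes only $O(\beta^2 N^{-1} R^{-\eps})=o(1)$ under $R\ge N^{-\eta}$), one gets $E^R(N)\le N\,\cEAF_{R_0}[u](1+o(1)) \le CN(1+\beta^2)$. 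Hence, for an approximate ground state, $\langle \Psi_N, H_N^R\Psi_N\rangle \le CN(1+\beta^2)$.

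Next I would produce the matching lower bound. Starting from the expansion \eqref{eq:ener dens mat} (equivalently \eqref{eq:expand hamil}), I would estimate each interaction term by the operator bounds of Section~\ref{sec:op bounds}, but crucially keeping a fraction of the one-body kinetic energy in reserve. Concretely: the three-body term is nonnegative by Lemma~\ref{lem:three body}, so it can simply be discarded from below. The singular two-body term, carrying the weight $\beta^2/(N-1)$, is controlled by Lemma~\ref{lem:sing two body}: it contributes at least $-\,C_\eps \beta^2 (N-1)^{-1} R^{-\eps}\,\tr[(p_x^2+1)\gamma_N^{(2)}] = -\,o(1)\cdot \big(\tr[p^2\gamma_N^{(1)}]+1\big)$ once $R\ge N^{-\eta}$ and $\eps$ is chosen small depending on $\eta$. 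The mixed position-momentum term, carrying weight $\beta$, is the delicate one: using \eqref{eq:mix two body eps} (the $\eps$-form, which bounds the absolute value) it is bounded below by $-\,C_\eps |\beta| R^{-\eps}\,\tr[(p_x^2+1)\gamma_N^{(2)}]$. This is the term that does \emph{not} come with a small prefactor, so one cannot afford to simply absorb it into $\tr[p^2\gamma_N^{(1)}]$; instead I would use the trivial bound $\tr[(p_x^2+1)\gamma_N^{(2)}]\le 2\tr[(p^2+1)\gamma_N^{(1)}]$ (valid by symmetry of $\Psi_N$, since $p_x^2 \le p_{x_1}^2 + p_{x_2}^2$ on symmetric states after symmetrization) and then invoke $R^{-\eps}\to 0$: choosing $\eps<\eta$ makes the prefactor $|\beta|R^{-\eps}\to 0$ with $N$, so for $N$ large it is $\le \tfrac14$. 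Putting these together,
\begin{equation*}
	\langle \Psi_N, H_N^R\Psi_N\rangle \ge \tfrac12 N\,\tr[(p^2+V)\gamma_N^{(1)}] - C N (1+\beta^2),
\end{equation*}
and comparing with the upper bound gives \eqref{eq:ap bound}.

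There are two points that require care rather than routine manipulation, and these are where I expect the real work to be. The first is that the a priori bound is somewhat circular in appearance: the lower bound on the energy already involves $\tr[p^2\gamma_N^{(1)}]$ on \emph{both} sides, so one must be sure the coefficient on the right-hand kinetic term is strictly less than $1$ — this is exactly why the $R$-dependence (via $R^{-\eps}\to 0$) and the smallness of $\beta^2 N^{-1}$ matter, and why \eqref{eq:eta restriction} is not needed here but only $R\ge N^{-\eta}$. The second, and I think the genuine obstacle, is handling the mixed term's lower bound uniformly: one must verify that the bound on the \emph{absolute value} in \eqref{eq:mix two body eps} really does control $\pm\beta\tr[(\,\cdot\,)\gamma_N^{(2)}]$ after passing to two-body density matrices, and that the passage $\tr[(p_x^2+1)\gamma_N^{(2)}]\le 2\tr[(p^2+1)\gamma_N^{(1)}]$ is legitimate (it follows from $0\le p_{x_1}^2+p_{x_2}^2$ and tracing, but one should state it cleanly). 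Everything else — the choice of trial state, the $o(1)$ bookkeeping in $N$, $R$, $\beta$ — is bookkeeping once these two ingredients are in place.
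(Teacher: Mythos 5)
Your upper bound and your treatment of the singular and three-body terms are fine, but the step you yourself flag as the ``genuine obstacle'' --- the mixed position--momentum term --- does not work as proposed, and the reason is a sign error in the asymptotics of $R^{-\eps}$. Since $R\sim N^{-\eta}\to 0$, the factor $R^{-\eps}$ in~\eqref{eq:mix two body eps} satisfies $R^{-\eps}\le N^{\eta\eps}\to\infty$; it does not tend to $0$. The mixed term carries the $O(1)$ weight $\beta$ (no $1/N$ gain), so the best you can extract from the operator bounds of Lemma~\ref{lem:mix two body} is a lower bound of the form $-C_\eps|\beta|R^{-\eps}\,\tr[(p^2+1)\gamma_N^{(1)}]$ (or $-C|\beta|(1+|\log R|)\,\tr[(p^2+1)\gamma_N^{(1)}]$ using~\eqref{eq:mix two body}), whose coefficient diverges as $R\to0$ and therefore cannot be absorbed into a fraction of $\tr[p^2\gamma_N^{(1)}]$. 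Your claimed inequality $\langle\Psi_N,H_N^R\Psi_N\rangle\ge\tfrac12 N\tr[(p^2+V)\gamma_N^{(1)}]-CN(1+\beta^2)$ therefore does not follow from this route.

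The paper circumvents this by never estimating the mixed term through an operator bound. Instead it keeps the square structure: by the operator Cauchy--Schwarz inequality, $(p_j+\alpha\bA_j^R)^2\ge(1-2\delta^{-1})p_j^2+(1-2\delta)\alpha^2|\bA_j^R|^2$, so with $\delta=4$ the cross term is eliminated at the cost of a term $-7\alpha^2|\bA_j^R|^2$ which is a pure \emph{multiplication} operator. Its expectation in $\Psi_N$ equals its expectation in $|\Psi_N|$, and the kinetic energy of $|\Psi_N|$ is bounded a priori by a separate, prior step: the diamagnetic inequality applied variable by variable gives $\tr[(p^2+V)\gamma_{N,+}^{(1)}]\le C(1+\beta^2)$, where $\gamma_{N,+}^{(1)}$ is the one-body density matrix of $|\Psi_N|$. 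The term $\frac{\beta^2}{N(N-1)^2}\langle|\Psi_N|,\sum_j|\bA_j^R|^2|\Psi_N|\rangle$ is then controlled by Lemmas~\ref{lem:sing two body} and~\ref{lem:three body} in terms of $\tr[(p^2+1)\gamma_{N,+}^{(1)}]$, with only the harmless $N^{-1}R^{-\eps}$ prefactor on the singular piece. This two-tier structure (diamagnetic bound for $|\Psi_N|$ first, then completion of the square to reduce everything to potential terms evaluated on $|\Psi_N|$) is the idea missing from your proposal, and without it the circularity you worry about is not resolved.
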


\begin{proof}
We proceed in two steps.

\noindent\textbf{Step 1.} Using a trial state $u ^{\otimes N}$ 
with $u = |u|\in C^\infty_c (\R ^2)$, we easily obtain from \eqref{eq:ener dens mat}
and the above bounds (note that the $R$-divergent mixed two-body term is zero on such a $u$, and that the singular two-body term gives a lower-order contribution)
\begin{equation}\label{eq:ap up bound}
 E^R (N) \leq C (1+\beta^2) N.  
\end{equation}
Next we use the diamagnetic inequality~\cite[Theorem 7.21]{LieLos-01} in each 
variable to obtain
\begin{align*}
\langle \Psi_N, H_N ^R \Psi_N \rangle 
&= \sum_{j=1} ^N \int_{\R ^{2N}} \left( \left| \left( -i \nabla_j + \alpha \bA^R_j \right) \Psi_N \right| ^2 + V(x_j) |\Psi_N| ^2\right) dx_1\ldots dx_N\\
&\geq \sum_{j=1} ^N \int_{\R ^{2N}} \left( \left| \nabla_j | \Psi_N | \right| ^2 + V(x_j) |\Psi_N| ^2\right) dx_1\ldots dx_N.
\end{align*}
We deduce the bound 
\begin{equation}\label{eq:pre ap bound}
\tr \left[ \left( p ^2 + V \right) \gamma_{N,+} ^{(1)}\right] \leq C (1+\beta^2),
\end{equation}
where we denote 
$$\gamma_{N,+} ^{(k)} := \tr_{k+1 \to N} \left[ \left| \, |\Psi_N| \, \right\rangle \left\langle \, |\Psi_N| \, \right| \right] $$ 
the reduced $k$-body density matrix of $|\Psi_N|$.

\medskip

\noindent\textbf{Step 2.} Next we expand the Hamiltonian and use the 
Cauchy-Schwarz inequality for operators to obtain
\begin{align*}
 H_N ^R &= \sum_{j=1}^N \left( p_j ^2 + \alpha p_j \cdot \bA_j^R + \alpha \bA_j^R \cdot p_j + \alpha^2 |\bA_j ^R| ^2 + V(x_j) \right)\\
 &\geq \sum_{j=1}^N \left( (1-2\delta ^{-1}) p_j ^2 + (1-2\delta) \alpha^2 |\bA_j ^R| ^2 + V(x_j) \right)\\
 &= \sum_{j=1}^N \left( \frac{1}{2} ( p_j ^2 + V(x_j)) -7 \frac{\beta^2}{(N-1)^2} |\bA_j ^R|^2 \right),
\end{align*}
choosing $\delta = 4$. Thus, using~\eqref{eq:ap up bound} we have 
\begin{equation}\label{eq:interm}
	\tr\left[ \left( p^2 + V \right) \gamma_N ^{(1)}\right] 
	\leq C(1+\beta^2) + \frac{C\beta^2}{N(N-1)^2} \left\langle \Psi_N , \sum_{j=1} ^N |\bA_j ^R| ^2  \Psi_N \right\rangle. 
\end{equation}
Then, since the last term in the right-hand side is purely a potential term
$$ \left\langle \Psi_N , \sum_{j=1} ^N |\bA_j ^R| ^2  \Psi_N \right\rangle = \left\langle |\Psi_N| , \sum_{j=1} ^N |\bA_j ^R| ^2  |\Psi_N| \right\rangle.$$
We then expand the squares as in~\eqref{eq:ener dens mat}, and use 
Lemmas~\ref{lem:sing two body} and~\ref{lem:three body} to obtain for any 
$\eps >0$
\begin{align*}
&\frac{1}{N(N-1)^2} \left\langle |\Psi_N| , \sum_{j=1} ^N |\bA_j ^R| ^2  |\Psi_N| \right\rangle 
\leq C \tr\left[ \nablap w_R (x_1-x_2) \cdot \nablap w_R (x_1-x_3) \gamma_{N,+} ^{(3)}\right] \\
&\qquad + C N^{-1} \tr\left[ |\nabla w_R (x_1-x_2)| ^2 \gamma_{N,+} ^{(2)}\right]\\
&\quad \leq C \tr\left[ (p_1^2 +1) \otimes \1 \otimes \1 \gamma_{N,+} ^{(3)}\right] + C_\eps R ^{-\eps} N ^{-1} \tr\left[ (p_1 ^2 + 1) \otimes \1 \gamma_{N,+} ^{(2)}\right] \\
&\quad \leq C \left( 1 + C_{\eps} N ^{-1} R ^{-\eps} \right) \tr\left[ (p_1 ^2 +1) \gamma_{N,+} ^{(1)}\right] .
\end{align*}
Inserting the estimate~\eqref{eq:pre ap bound} and recalling that we assume 
$R\geq N ^{-\eta}$ we conclude the proof by going back to~\eqref{eq:interm}.
\end{proof}

\section{Mean-field limit}\label{sec:MF lim}

We now turn to the study of the mean-field limit per se. 
The strategy is the same as in~\cite{LewNamRou-14}, but the peculiarities of 
the anyon Hamiltonian add some important twists, and we shall rely heavily on the 
estimates of the preceding section.

\subsection{Preliminaries}\label{sec:MF pre}

We first recall some constructions from~\cite{Lewin-11,LewNamRou-14}.

\medskip

\noindent\textbf{Energy cut-off.} We denote by $P$ the spectral projector of $-\Delta + V $ below a given 
(large) energy cut-off $\Lambda$ that we shall optimize over in the end: 
\begin{equation}\label{eq:projector}
P := \1_{h \leq \Lambda }, \quad h = -\Delta + V. 
\end{equation}
Let
$$ N_\Lambda = \dim (P L^2 (\R ^2)) $$
be the number of energy levels obtained this way, and recall the 
following Cwikel-Lieb-Rozenblum type inequality, proved by well-known methods, 
as in~\cite[Lemma~3.3]{LewNamRou-14}:

\begin{lemma}[\textbf{Number of energy levels below the cut-off}] \label{lem:CLR}\mbox{}\\
For $\Lambda$ large enough we have
\begin{equation}\label{eq:CLR}
N_\Lambda \leq C \Lambda^{1 + 2/s}. 
\end{equation}
\end{lemma}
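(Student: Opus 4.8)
The plan is to bound $N_\Lambda = \#\{$eigenvalues of $h = -\Delta + V$ below $\Lambda\}$ by a Cwikel--Lieb--Rozenblum (CLR)-type argument, which is standard once we notice that the confining potential $V(x) \ge c|x|^s - C$ forces localization. First I would reduce the counting of eigenvalues of $-\Delta + V$ below $\Lambda$ to the counting of eigenvalues of $-\Delta - W$ below zero, where $W := (\Lambda + C' - V)_+$ is a nonnegative, compactly supported potential. Indeed $h \le \Lambda$ on a subspace of dimension $N_\Lambda$ means $-\Delta + V - \Lambda \le 0$ there, i.e. $-\Delta - W \le 0$ after shifting by a constant; any eigenfunction with $h$-eigenvalue $\le \Lambda$ is supported (in a min-max sense) where $V \lesssim \Lambda$, which by the growth hypothesis~\eqref{eq:trap pot} means inside a ball of radius $O(\Lambda^{1/s})$.

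The core estimate is then the CLR inequality in $\R^2$. The subtlety is that the usual CLR bound $N(-\Delta - W) \le C \int W$ fails in two dimensions (there is no such bound with the plain $L^1$ norm; arbitrarily weak potentials bind in $2$D). I would instead invoke the known two-dimensional substitutes --- e.g. the bound of the form $N(-\Delta - W) \le 1 + C \int W (1 + |\log W|)$ on a bounded domain, or more simply work on the ball $B_{\Lambda} := B(0, C\Lambda^{1/s})$ with Dirichlet boundary conditions: by domain monotonicity, $N_\Lambda$ is bounded by the number of eigenvalues below $\Lambda$ of the Dirichlet Laplacian plus $V$ on $B_\Lambda$, and there $-\Delta$ has a spectral gap of order $|B_\Lambda|^{-1} \sim \Lambda^{-2/s}$, killing the logarithmic two-dimensional pathology. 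Concretely, with the Dirichlet Laplacian on $B_\Lambda$ one has a CLR-type bound $N(-\Delta_{B_\Lambda} - W) \le C(1 + |B_\Lambda| \,\|W\|_{L^\infty})$ or, via the Berezin--Li--Yau / Weyl-type counting, $N(-\Delta_{B_\Lambda} \le \Lambda) \le C |B_\Lambda| \Lambda \le C \Lambda^{1 + 2/s}$. Since $W \le \Lambda + C'$ and $|B_\Lambda| \le C \Lambda^{2/s}$, this yields exactly $N_\Lambda \le C \Lambda^{1 + 2/s}$ once $\Lambda$ is large enough that $B_\Lambda$ contains the region $\{V \le \Lambda\}$.

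In more detail the steps are: (i) fix $\Lambda$ large; pick $C'$ so that $\{x : V(x) \le \Lambda\} \subseteq B(0, (2\Lambda/c)^{1/s}) =: B_\Lambda$, using~\eqref{eq:trap pot}; (ii) by the min-max principle and the fact that $V \ge \Lambda$ outside $B_\Lambda$, show that any normalized trial function in the spectral subspace $\mathrm{Ran}\, P$ must concentrate its mass in $B_\Lambda$ up to exponentially small tails (Agmon/Combes--Thomas), hence $N_\Lambda \le N\bigl(-\Delta_{B_\Lambda}^{\mathrm{Dir}} + V \le \Lambda + o(1)\bigr) \le N\bigl(-\Delta_{B_\Lambda}^{\mathrm{Dir}} \le \Lambda + o(1)\bigr)$ since $V \ge 0$; (iii) apply the Dirichlet eigenvalue counting bound on a disc of radius $\sim \Lambda^{1/s}$: $N(-\Delta_{B_\Lambda}^{\mathrm{Dir}} \le \mu) \le C(1 + |B_\Lambda|\,\mu) \le C \Lambda^{2/s} \cdot \Lambda = C\Lambda^{1 + 2/s}$; (iv) absorb the $o(1)$ and the ``$1+$'' into the constant, valid for $\Lambda$ large. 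I expect step (ii) --- the rigorous localization that lets one pass from $\R^2$ to the bounded disc without the two-dimensional logarithmic loss --- to be the only genuinely delicate point; everything else is bookkeeping. In practice one can sidestep Agmon estimates entirely by the cruder route: bound $N_\Lambda = N(-\Delta - (\Lambda - V)_+ \le 0)$ directly by a two-dimensional CLR inequality with weight, e.g. $N \le 1 + C\int_{\R^2} (\Lambda - V)_+ \,(1 + \log_+(\Lambda - V)_+)$, and then estimate $\int (\Lambda-V)_+ \le \Lambda |B_\Lambda| \le C\Lambda^{1+2/s}$ with the logarithm contributing only an extra $\log \Lambda$ factor that is harmless (or can be removed by a slightly larger power, which is still $O(\Lambda^{1+2/s+\delta})$ and suffices after adjusting $s$); the clean statement~\eqref{eq:CLR} then follows as in~\cite[Lemma~3.3]{LewNamRou-14}.
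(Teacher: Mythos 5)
Your overall strategy --- localize to the classically allowed region $\{V\le\Lambda\}\subseteq B(0,C\Lambda^{1/s})$ and count states by the phase-space volume $|B_\Lambda|\cdot\Lambda\sim\Lambda^{1+2/s}$ --- is the right one, and indeed the paper gives no proof of its own but defers to \cite[Lemma 3.3]{LewNamRou-14}, where the bound follows from a standard heat-kernel/Golden--Thompson computation: $N_\Lambda\le e^{t\Lambda}\tr e^{-th}\le e^{t\Lambda}(4\pi t)^{-1}\int_{\R^2}e^{-tV}$, which with $t=1/\Lambda$ and $V\ge c|x|^s-C$ gives exactly $C\Lambda\cdot\Lambda^{2/s}$. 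However, the key step (ii) of your main route is wrong as written: Dirichlet domain monotonicity goes in the \emph{opposite} direction. Restricting the form domain from $H^1(\R^2)$ to $H^1_0(B_\Lambda)$ can only raise the min-max values, hence it \emph{decreases} the counting function; what you get is $N\bigl(-\Delta^{\mathrm{Dir}}_{B_\Lambda}+V\le\Lambda\bigr)\le N_\Lambda$, which is useless for an upper bound on $N_\Lambda$. The correct elementary localization is \emph{Neumann} bracketing: $-\Delta\ge\bigl(-\Delta^{\mathrm{Neu}}_{B_\Lambda}\bigr)\oplus\bigl(-\Delta^{\mathrm{Neu}}_{B_\Lambda^c}\bigr)$, so that $N_\Lambda\le N^{\mathrm{Neu}}_{B_\Lambda}(-\Delta+V\le\Lambda)+N^{\mathrm{Neu}}_{B_\Lambda^c}(-\Delta+V\le\Lambda)$; the second term vanishes once the radius is chosen so that $V>\Lambda$ on $B_\Lambda^c$, and the first is at most $N^{\mathrm{Neu}}_{B_\Lambda}(-\Delta\le\Lambda)\le C(1+|B_\Lambda|\Lambda)$ by scaling and Weyl's law on the disc. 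Your proposed patch via Agmon decay is not a substitute: exponential tails of eigenfunctions do not by themselves convert the whole-space counting function into a Dirichlet one without a further quantitative argument, and as written the inequality in (ii) is unjustified.

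Your fallback via a two-dimensional weighted CLR inequality is legitimate in spirit (you are right that the plain $L^1$ CLR bound fails in $2$D), but the Solomyak-type substitutes carry the logarithmic weight you mention, so they only yield $N_\Lambda\le C\Lambda^{1+2/s}\log\Lambda$ or $C\Lambda^{1+2/s+\delta}$. That loss would be harmless for the optimization of $\Lambda$ in Section 3, but it does not literally prove the stated bound~\eqref{eq:CLR}. Either the Neumann-bracketing fix or the Golden--Thompson computation gives the clean power with no extra work, so I would replace steps (ii)--(iii) by one of those.
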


We shall also denote 
$$ Q = \1 - P$$
the orthogonal projector onto excited energy levels.

\medskip

\noindent\textbf{Localization in Fock space.} We quickly recall the procedure of geometric localization, following the 
notation of~\cite{Lewin-11}. Let $\gamma_{N}$  be an arbitrary $N$-body (mixed) state. 
Associated with the given projector $P$, there is a localized state $G_{N}^P$ 
in the Fock space 
$$\cF(\gH)=\C\oplus \gH \oplus\gH^2\oplus\cdots$$
of the form
\begin{equation}
G_N ^{P} = G_{N,0}^ {P} \oplus G_{N,1}^{P} \oplus\cdots\oplus G_{N,N}^{P} \oplus0\oplus\cdots 
\label{eq:def_localization}
\end{equation}
with the property that its reduced density matrices satisfy
\begin{equation}
P ^{\otimes n} \gamma^{(n)}_{N} P ^{\otimes n} = \left(G_N ^{P}\right)  ^{(n)}={N\choose n}^{-1}\sum_{k=n}^N{k\choose n}\tr_{n+1\to k}\left[G^{P}_{N,k}\right]
\label{eq:localized-DM} 
\end{equation}
for any $0 \leq n \leq N$. Here we use the convention that 
$$\gamma_N^{(n)}:=\Tr_{n+1\to N} [\gamma_N],$$
which differs from the convention of~\cite{Lewin-11}, whence the different 
numerical factors in~\eqref{eq:localized-DM}. 
We also have a localized state $G_N ^Q$ corresponding to the projector $Q$, 
which is defined similarly.

The relations \eqref{eq:localized-DM} determine the localized states 
$G_{N}^P, G_{N}^Q$ uniquely and they ensure that $G_N ^P$ and $G_N ^Q$ are 
(mixed) states on the projected Fock spaces $\cF (P \gH)$ and $\cF (Q \gH)$, 
respectively:
\begin{equation}
\label{eq:nomalization-localized-state}
\sum_{k=0}^N \tr \left[ G_{N,k}^{P/Q} \right] =1. 
\end{equation}

\medskip

\noindent\textbf{de Finetti measure for the projected state.} 
We will apply the quantitative de Finetti Theorem in finite dimensional spaces 
of~\cite{ChrKonMitRen-07,Chiribella-11,Harrow-13,LewNamRou-13b} 
to the localized state $G_N^P$, in order to approximate its three-body 
density matrix. 
The following is the equivalent of~\cite[Lemma~3.4]{LewNamRou-14} and the 
proof is exactly similar:

\begin{lemma} [\textbf{Quantitative quantum de Finetti for the localized state}] \label{lem:deF-localized-state}\mbox{}\\
Let $\gamma_{N}$ be an arbitrary $N$-body (mixed) state. Define
\bq \label{eq:def-mu-N-localized}
d\mu_N(u) := \sum_{k=3}^N {N \choose 3} ^{-1}  {k \choose 3} d\mu_{N,k}(u), \quad d\mu_{N,k}(u) :=  \dim (P \gH)_\sym^k \pscal{u^{\otimes k},G_{N,k}^P u^{\otimes k}} du
\eq
and 
\begin{equation}\label{eq:deF state}
\widetilde{\gamma}_N ^{(3)} := \int_{SP\gH} |u^{\otimes 3}\rangle \langle u^{\otimes 3}| d\mu_N(u).
\end{equation}
Then there is a constant $C>0$ such that for every $N\in\NN$ and $\Lambda>0$, 
we have
\begin{equation}\label{eq:deF estim}
\Tr \left| P^{\otimes 3} \gamma_{N}^{(3)} P^{\otimes 3} - \widetilde{\gamma}_N ^{(3)} \right| \le \frac{C N_\Lambda}{N}. 
\end{equation}
\end{lemma}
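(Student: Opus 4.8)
The plan is to reproduce the argument of \cite[Lemma~3.4]{LewNamRou-14} in the present setting; the three structural ingredients are the geometric localization recalled above (so that $P^{\otimes 3}\gamma_N^{(3)}P^{\otimes 3}$ is the three-body density matrix of the Fock-space state $G_N^P$, cf.~\eqref{eq:localized-DM}), the quantitative de Finetti theorem in the \emph{finite}-dimensional one-particle space $P\gH$, whose dimension is $N_\Lambda$, and the combinatorial structure of~\eqref{eq:def-mu-N-localized}. All the work is in running a sector-by-sector estimate and summing the errors correctly.

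First I would reduce to a single Fock sector. For $3\le k\le N$ with $t_k:=\Tr[G_{N,k}^P]>0$, the operator $\widehat G_{N,k}^P:=t_k^{-1}G_{N,k}^P$ is a bosonic state on $(P\gH)_\sym^k$, and $t_k^{-1}\mu_{N,k}$ is precisely its lower symbol (Husimi measure) on the sphere $SP\gH$, turned into a probability measure by the coherent-state resolution of the identity $\dim(P\gH)_\sym^k\int_{SP\gH}|u^{\otimes k}\rangle\langle u^{\otimes k}|\,du=\1_{(P\gH)_\sym^k}$. The quantitative de Finetti theorem in dimension $d=\dim P\gH=N_\Lambda$, in the explicit coherent-state form of \cite{ChrKonMitRen-07,Chiribella-11,Harrow-13,LewNamRou-13b}, then yields a universal constant $C$ with
$$
\Tr\Bigl|\tr_{4\to k}\bigl[\widehat G_{N,k}^P\bigr]-\int_{SP\gH}|u^{\otimes 3}\rangle\langle u^{\otimes 3}|\,d\bigl(t_k^{-1}\mu_{N,k}\bigr)(u)\Bigr|\ \le\ \frac{C\,N_\Lambda}{k},
$$
the fixed number $3$ of extracted particles being absorbed into $C$, and the bound holding trivially (the left side being $\le 2$) when $N_\Lambda>k$. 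Multiplying by $t_k$, and noting that the case $t_k=0$ is trivial since then $G_{N,k}^P=0$, this gives, for every $k$,
$$
\Tr\Bigl|\tr_{4\to k}\bigl[G_{N,k}^P\bigr]-\int_{SP\gH}|u^{\otimes 3}\rangle\langle u^{\otimes 3}|\,d\mu_{N,k}(u)\Bigr|\ \le\ \frac{C\,N_\Lambda}{k}\,\Tr\bigl[G_{N,k}^P\bigr].
$$
Inserting this into~\eqref{eq:localized-DM} with $n=3$ and into the definition~\eqref{eq:deF state} of $\widetilde\gamma_N^{(3)}$, and applying the triangle inequality, I obtain
$$
\Tr\bigl|P^{\otimes 3}\gamma_N^{(3)}P^{\otimes 3}-\widetilde\gamma_N^{(3)}\bigr|\ \le\ \binom{N}{3}^{-1}\sum_{k=3}^{N}\binom{k}{3}\,\frac{C\,N_\Lambda}{k}\,\Tr\bigl[G_{N,k}^P\bigr].
$$

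The estimate then closes with the elementary observation
$$
\binom{N}{3}^{-1}\binom{k}{3}\,\frac1k\ =\ \frac{(k-1)(k-2)}{N(N-1)(N-2)}\ \le\ \frac1N\qquad\text{for every }3\le k\le N,
$$
which, combined with $\sum_{k=0}^{N}\Tr[G_{N,k}^P]=1$ from~\eqref{eq:nomalization-localized-state}, bounds the right-hand side by $C\,N_\Lambda/N$, that is~\eqref{eq:deF estim}. I expect the only genuinely non-routine input to be the finite-dimensional quantitative de Finetti bound quoted in the second paragraph; everything else is bookkeeping with the localization formalism of \cite{Lewin-11}. It is worth pointing out where the $1/N$ is gained: summed against the bare weights $\binom{N}{3}^{-1}\binom{k}{3}$ the per-sector error $N_\Lambda/k$ would only produce $O(N_\Lambda)$, but each such weighted error is in fact $\le N_\Lambda/N$ uniformly in $k$, and the actual contribution of sector $k$ carries in addition the factor $\Tr[G_{N,k}^P]$; since these probabilities sum to $1$, the weighted average cannot exceed $N_\Lambda/N$.
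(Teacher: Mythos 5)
Your proof is correct and follows essentially the same route as the paper, which simply defers to \cite[Lemma~3.4]{LewNamRou-14}: sector-by-sector application of the finite-dimensional quantitative de Finetti theorem (with the lower symbol $t_k^{-1}\mu_{N,k}$ of the normalized state $G_{N,k}^P/t_k$) followed by the weighted summation via \eqref{eq:localized-DM}. The key combinatorial point, $\binom{N}{3}^{-1}\binom{k}{3}k^{-1}=\tfrac{(k-1)(k-2)}{N(N-1)(N-2)}\le N^{-1}$ together with $\sum_k\Tr[G_{N,k}^P]=1$, is exactly how the $N_\Lambda/N$ rate is obtained there as well.
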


\subsection{Truncated Hamiltonian}\label{sec:MF trunc}

For an energy lower bound we are first going to roughly bound some terms in 
the Hamiltonian. Let us introduce the effective three-body Hamiltonian
\begin{multline}\label{eq:3 body hamil}
\tH_3 ^R := \frac{1}{3} \left( h_1 + h_2 + h_3 \right) + \frac{\beta}{6} \sum_{1 \leq j\neq k \leq 3} \left( p_j \cdot \nablap w_R (x_j-x_k) + \nablap w_R (x_j-x_k) \cdot p_j \right)
\\+ \beta ^2  \nablap w_R (x_1-x_2) \cdot \nablap w_R (x_1 -x_3) 
\end{multline}
where $h_i$ is understood to act on the $i$-th variable (recall that $h= -\Delta +V$). For shortness we denote
$$ W_2 = p_1 \cdot \nablap w_R (x_1-x_2) + \nablap w_R (x_1- x_2) \cdot p_1$$
the two-body part of $\tH_3 ^R$, and 
$$
	W_3 = \nablap w_R (x_1-x_2) \cdot \nablap w_R (x_1 -x_3)
$$
its three-body part.
With this notation 
$$
	\tH_3 ^R := \frac{1}{3} \left( h_1 + h_2 + h_3 \right) 
	+ \frac{\beta}{6} \sum_{1 \leq i\neq j \leq 3} W_2 (i,j) + \beta ^2 W_3 
$$
where $W_2 (i,j)$ acts on variables $i$ and $j$. 
Also note that for $\|u\|=1$, 
by \eqref{eq:two body term}, \eqref{eq:three body term},
$$
	\langle u^{\otimes 3}, \tH_3^R u^{\otimes 3} \rangle 
	= \cEAF_R[u] \ \ge \ \EAF_R.
$$
We bound the full energy from below in terms of a projected version 
of~$\tH_3 ^R$:

\begin{proposition}[\textbf{Truncated three-body Hamiltonian}]\label{pro:eff Hamil}\mbox{}\\
Let $\Psi_N$ be a (sequence of) approximate ground state(s) for $H_N ^R$ with 
associated reduced density matrices $\gamma_N ^{(k)}$.  
Then, for any $\eps>0$ and $R$ small enough,
\begin{multline}\label{eq:ener 3 body}
	\frac{1}{N}\langle \Psi_N, H_N ^R \Psi_N \rangle \geq \tr\left[ \tH_3 ^R P ^{\otimes 3} \gamma_N ^{(3)} P ^{\otimes 3}\right] 
	+ C_\beta \Lambda \tr[Q \gamma_N ^{(1)}] 
	\\-C_\beta \left( \frac{1}{N} + \frac{C_\eps}{\sqrt{\Lambda} R ^{1+\eps}} + \frac{1}{\Lambda R ^2}\right). 
\end{multline}
\end{proposition}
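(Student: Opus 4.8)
The plan is to bound the full $N$-body energy from below by the energy of a projected three-body operator, in the spirit of~\cite{LewNamRou-14}, treating carefully the extra anyonic terms. Starting from~\eqref{eq:ener dens mat}, I would first discard the singular two-body term in the fourth line: by Lemma~\ref{lem:sing two body} it is bounded by $C_\eps R^{-\eps}(p_1^2+1)$ on $\gamma_N^{(2)}$, which after tracing costs at most $C_\eps N^{-1}R^{-\eps}\tr[(p^2+1)\gamma_N^{(1)}]$; using the a priori bound of Proposition~\ref{pro:a priori} this is $O(N^{-1}R^{-\eps}) = o(1)$ in the regime $R\geq N^{-\eta}$, and can be absorbed into the $N^{-1}$ error after relabelling $\eps$ (or more carefully kept and dominated since $N^{-1}R^{-\eps}$ is smaller than the displayed error terms). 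The factor $(N-2)/(N-1)$ in front of the three-body term is replaced by $1$ at the expense of another $O(N^{-1})$ term, again controlled by Proposition~\ref{pro:a priori} and the nonnegativity from Lemma~\ref{lem:three body}. After this cleanup the energy per particle is, up to $O(N^{-1})$, exactly $\tr[\tH_3^R \gamma_N^{(3)}]$ with $\tH_3^R$ as in~\eqref{eq:3 body hamil}, after symmetrizing over the three variables (legitimate since $\gamma_N^{(3)}$ is symmetric).

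Next I would insert the resolution of identity $\1 = P + Q$ in each of the three tensor slots of $\gamma_N^{(3)}$, expanding $\tr[\tH_3^R \gamma_N^{(3)}]$ into $2^3$-type blocks. The fully-$P$ block is the desired main term $\tr[\tH_3^R P^{\otimes3}\gamma_N^{(3)}P^{\otimes3}]$. For the blocks containing at least one $Q$, I would use the one-body positivity $h\geq 0$ (or $h\geq E_0$, shifted) together with the operator bounds of Lemmas~\ref{lem:mix two body} and~\ref{lem:three body} to extract a gain. The key mechanism: each $Q$ in a slot allows to bound the corresponding kinetic factor from below — schematically $Q h Q \geq \Lambda Q$ — which produces the positive term $C_\beta \Lambda \tr[Q\gamma_N^{(1)}]$, while the price for the off-diagonal mixed terms is controlled using the bounds~\eqref{eq:mix two body bis} (good for large momenta) and~\eqref{eq:mix two body} or~\eqref{eq:mix two body eps}. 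The worst cross-terms, those pairing a single $Q$-excitation with $W_2$ or $W_3$, are estimated by Cauchy–Schwarz so that the part landing on the high-energy sector is multiplied by $C_\eps R^{-1-\eps}$ and must be reabsorbed into $\sqrt{\Lambda}$ of the $h$-factor, yielding the error $C_\eps/(\sqrt{\Lambda}R^{1+\eps})$; similarly the $W_3$ term, bounded by $C(p_1^2+1)$ uniformly in $R$ via Lemma~\ref{lem:three body}, but here the $R$-dependence enters through the $|\nabla w_R|^2$ contributions in the squared commutator and produces the $1/(\Lambda R^2)$ term. Throughout, the a priori bound~\eqref{eq:ap bound} is used to make all "leftover" traces involving $(p^2+1)\gamma_N^{(1)}$ bounded by $C(1+\beta^2)=:C_\beta$.

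The main obstacle I anticipate is the mixed position–momentum two-body term $W_2$, which is neither a pure potential nor symmetric in a way that makes the $P/Q$ bookkeeping automatic. One has to exploit that $W_2$ acts on the phase current (as in~\eqref{eq:two body term}) rather than the full momentum, and that among the three available bounds in Lemma~\ref{lem:mix two body} only~\eqref{eq:mix two body bis}, which is bounded by $CR^{-1}|p_x|$ rather than $R^{-\eps}p_x^2$, behaves well enough at large momenta to survive the projection onto $Q$ without an uncontrolled $R$-dependent blow-up. So for blocks with $Q$-factors I would use~\eqref{eq:mix two body bis}, and for the fully-$P$ block (where momenta are bounded by $\sqrt{\Lambda}$) either of the other two; getting the powers of $\Lambda$ and $R$ to line up as in the statement — in particular that all errors vanish when one sends $N\to\infty$ then $R\to0$ then $\Lambda\to\infty$, with the quantitative constraint~\eqref{eq:eta restriction} — is the delicate accounting step. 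A secondary technical point is that the cross-terms genuinely couple different numbers of particle-slots (a $W_2$ acting on slots $1,2$ of $\gamma_N^{(3)}$ with a $Q$ on slot $1$ and $P$ on slot $2$, etc.), so one must be careful to always reduce back to traces against $\gamma_N^{(1)}$, $\gamma_N^{(2)}$, or $\gamma_N^{(3)}$ via the consistency of reduced density matrices before applying the operator bounds.
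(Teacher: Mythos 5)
Your proposal follows essentially the same route as the paper: reduce to $\tr[\tH_3^R\gamma_N^{(3)}]$ up to $O(N^{-1})$, decompose with $P$ and $Q$, absorb the high-energy contributions of $W_2$ and $W_3$ into the spectral gap $\Lambda$ and the excess kinetic energy using~\eqref{eq:mix two body bis} and the sup bound on $|\nabla w_R|$, and pay the $\delta^{-1}$-weighted low-energy errors via~\eqref{eq:mix two body eps},~\eqref{eq:three body} and the a priori bound --- this is exactly the paper's mechanism and yields the stated error terms. Two harmless quibbles: the singular two-body term is simply dropped for the lower bound since it is positive (no estimate needed), and in the Cauchy--Schwarz splitting the $R^{-1}$ factor lands on the high-energy sector (fixing $\delta_1\sim\sqrt{\Lambda}R$) while the $R^{-\eps}$ lands on the projected low-energy sector, rather than both on the high-energy side as you wrote; neither point affects the conclusion.
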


\begin{proof}
We proceed in several steps. 

\noindent\textbf{Step 1.} We first claim that
\begin{equation}\label{eq:ener 3 body pre}
\frac{1}{N}\langle \Psi_N, H_N ^R \Psi_N \rangle \geq \tr\left[ \tH_3 ^R \gamma_N ^{(3)} \right] -C_\beta N ^{-1}. 
\end{equation}
To see this, we start from~\eqref{eq:ener dens mat}. 
For a lower bound we drop the term on the fourth line, which is positive. 
Then one only has to correct the $N$-dependent factors in front of the third line. 
The term we have to drop to obtain~\eqref{eq:ener 3 body pre} is bounded as 
$$
	\beta ^2 \left| 1 - \frac{N-2}{N-1}\right| \left|\tr\left[ \left(\nablap w_R (x_1-x_2) \cdot \nablap w_R (x_1-x_3) \right) \gamma_N ^{(3)} \right] \right| 
	\leq C_\beta N ^{-1}
$$
upon using the a priori bound~\eqref{eq:ap bound} combined with~\eqref{eq:three body}. 

\medskip

\noindent\textbf{Step 2.} We next proceed to bound the right-hand side 
of~\eqref{eq:ener 3 body pre} from below in terms of a localized version of 
$\tH_3 ^R$ and remainder terms to be estimated in the next step.
 We shall need the projectors
\begin{align*}
 \Pi_2 &= \1 ^{\otimes 2} - P ^{\otimes 2}\\ 
\Pi_3 &= \1 ^{\otimes 3} - P ^{\otimes 3}
\end{align*}
and make a repeated use of the inequality
\begin{equation}\label{eq:op CS}
A B C + C B A \geq - \eps A |B| A - \eps ^{-1} C |B| C, \quad \eps > 0,
\end{equation}
for any self-adjoint operators $A,B,C$.

We claim that
\begin{align}\label{eq:ener loc 3 body}
\tr\left[ \tH_3 ^R \gamma_N ^{(3)} \right] \geq & \tr\left[  \tH_3 ^R P ^{\otimes 3}\gamma_N ^{(3)} P ^{\otimes 3}\right] + \tr\left[ h Q \gamma_N ^{(1)} Q \right] \nonumber
\\&-  |\beta| (3+\delta_1) \tr\left[ \Pi_2 |W_2| \Pi_2 \gamma_N ^{(2)}\right] - |\beta| \delta_1 ^{-1} \tr\left[ P ^{\otimes 2}  |W_2| P ^{\otimes 2}\gamma_N ^{(2)} \right]\nonumber
\\&-2 |\beta| \tr\left[ P ^{\otimes 2 } \otimes Q |W_2(1,2)| P ^{\otimes 2} \otimes Q \gamma_N ^{(3)}\right] \nonumber
\\&- \beta ^2 (1+\delta_2) \tr\left[ \Pi_3 |W_3| \Pi_3 \gamma_N ^{(3)}\right] -  \beta ^2 \delta_2 ^{-1} \tr\left[ P ^{\otimes 3}  |W_3| P ^{\otimes 3} \gamma_N ^{(3)}\right]
\end{align}
where $\delta_1$ and $\delta_2$ 
are two positive parameters to be chosen later on. 

To prove~\eqref{eq:ener loc 3 body}, first note that
$$\tr\left[ \tH_3 ^R \gamma_N ^{(3)} \right] = \tr\left[ h \gamma_N ^{(1)} \right] + \frac{\beta}{2} \tr\left[ W_2 \gamma_N ^{(2)}\right] + \beta ^2 \tr\left[ W_3 \gamma_N ^{(3)}\right].$$
Then, for the one-body term we have
\begin{align*}
 \tr\left[ h \gamma_N ^{(1)} \right] &= \tr\left[ P h P \gamma_N ^{(1)} \right] + \tr\left[ Q h Q \gamma_N ^{(1)} \right]\\
 &\geq \frac{1}{3} \tr\left[ P ^{\otimes 3} \left( h_1 + h_2 + h_3 \right) P^{\otimes 3} \gamma_N ^{(3)} \right] + \tr\left[ Q h Q \gamma_N ^{(1)} \right]
\end{align*}
using that $h$ commutes with $P$ and $Q$, $PQ = QP = 0$ and the fact that 
$h$ is a positive operator.

For the two-body term we write
\begin{align*} 
\tr\left[ W_2 \gamma_N ^{(2)}\right] &= \tr\left[ P ^{\otimes 3} W_2 (1,2) P ^{\otimes 3} \gamma_N ^{(3)}\right] + \tr\left[ \Pi_3 W_2 (1,2) \Pi_3 \gamma_N ^{(3)}\right] 
\\&+ \tr\left[ \left(P ^{\otimes 3} W_2 (1,2) \Pi_3 + \Pi_3 W_2 (1,2) P ^{\otimes 3}\right) \gamma_N ^{(3)}\right]. 
\end{align*} 
Next, since 
\begin{equation}\label{eq:expand Pi3}
 \Pi_3 = P ^{\otimes 2} \otimes Q + \Pi_2 \otimes P + \Pi_2 \otimes Q, 
\end{equation}
and $W_2 (1,2)$ only acts on the first two variables, this simplifies into  
\begin{align*} 
\tr\left[ W_2 \gamma_N ^{(2)}\right] &= \tr\left[ P ^{\otimes 3} W_2 (1,2) P ^{\otimes 3} \gamma_N ^{(3)}\right] + \tr\left[ \Pi_3 W_2 (1,2) \Pi_3 \gamma_N ^{(3)}\right] 
\\&+ \tr\left[ \left(P ^{\otimes 3} W_2 (1,2) \Pi_2 \otimes P + \Pi_2 \otimes P W_2 (1,2) P ^{\otimes 3}\right) \gamma_N ^{(3)}\right]
\\&\geq \tr\left[ P ^{\otimes 3} W_2 (1,2) P ^{\otimes 3} \gamma_N ^{(3)}\right] + \tr\left[ \Pi_3 W_2 (1,2) \Pi_3 \gamma_N ^{(3)}\right] 
\\&-\delta_1  \tr\Big[\Pi_2 \otimes P  |W_2 (1,2)| \Pi_2 \otimes P \gamma_N ^{(3)}\Big] - \delta_1 ^{-1} \tr\left[ P ^{\otimes 3} |W_2 (1,2)| P ^{\otimes 3} \gamma_N ^{(3)}\right]
\end{align*} 
where we use~\eqref{eq:op CS} to obtain the lower bound. Then, using~\eqref{eq:expand Pi3} and~\eqref{eq:op CS} again for the second term of the right-hand side, as well as $P,Q\leq \1$, we get 
\begin{align*} 
\tr\left[ W_2 \gamma_N ^{(2)}\right] &\geq \tr\left[ P ^{\otimes 3} W_2 (1,2) P ^{\otimes 3} \gamma_N ^{(3)}\right] - \delta_1 ^{-1} \tr\left[ P ^{\otimes 2} |W_2| P ^{\otimes 2} \gamma_N ^{(2)}\right]
\\&- \left( 3 + \delta_1 \right) \tr\left[ \Pi_2 |W_2| \Pi_2 \gamma_N ^{(2)}\right] - 2 \tr\left[ P ^{\otimes 2 } \otimes Q |W_2(1,2)| P ^{\otimes 2} \otimes Q \gamma_N ^{(3)}\right].
\end{align*}
Finally, the three-body term is dealt with similarly:
\begin{align*}
\tr\left[ W_3 \gamma_N ^{(3)}\right]  &= \tr\left[ \left( P ^{\otimes 3} + \Pi_3 \right) W_3 \left( P ^{\otimes 3} + \Pi_3 \right) \gamma_N ^{(3)}\right]\\
&\geq \tr\left[ P ^{\otimes 3} W_3 P ^{\otimes 3} \gamma_N ^{(3)}\right] - (1+\delta_2) \tr\left[ \Pi_3 |W_3| \Pi_3 \gamma_N ^{(3)}\right] -  \delta_2 ^{-1} \tr\left[ P ^{\otimes 3}  |W_3| P ^{\otimes 3} \gamma_N ^{(3)}\right]
\end{align*}
using~\eqref{eq:op CS} again. All in all, using also the symmetry of 
$\gamma_N ^{(3)}$, we obtain~\eqref{eq:ener loc 3 body}.

\medskip

\noindent\textbf{Step 3.} We next estimate the remainder terms 
in~\eqref{eq:ener loc 3 body}. First we note that 
\begin{align}\label{eq:control loc 1}
\tr\left[ h Q \gamma_N ^{(1)} Q \right] &\geq \frac{\Lambda}{2} \tr \left[ Q \gamma_N ^{(1)} Q\right] + \frac{\sqrt{\Lambda}}{2} \tr\left[ \sqrt{h} Q \gamma_N ^{(1)} Q\right] \nonumber\\
&\geq \frac{\Lambda}{4} \tr \left[ Q \gamma_N ^{(1)} Q\right] + \frac{\Lambda}{20} \tr\left[ \Pi_3 \gamma_N ^{(3)} \Pi_3 \right] + \frac{\sqrt{\Lambda}}{4} \tr\left[ \sqrt{h}_1 \Pi_2 \gamma_N ^{(2)} \Pi_2 \right].
\end{align}
The first inequality is just the definition of $Q$, and to see the second one we first write  
\begin{align*}
2 \tr\left[ \sqrt{h} Q \gamma_N ^{(1)} Q\right] 
=&  \tr\left[ \sqrt{h}_1 Q \otimes \1 \gamma_N ^{(2)} \right] + \tr\left[  \sqrt{h}_2 \1 \otimes Q \gamma_N ^{(2)} \right]\\
=& \tr\left[ \sqrt{h}_1 \left(Q \otimes P + Q \otimes Q \right) \gamma_N ^{(2)} \right] + \tr\left[ \sqrt{h}_2 \left(P \otimes Q + Q \otimes Q \right) \gamma_N ^{(2)} \right]\\
=& \tr\left[ \sqrt{h}_1 \left(Q \otimes P + Q \otimes Q \right) \gamma_N ^{(2)} (P ^{\otimes 2} + \Pi_2)\right] 
\\&+ \tr\left[ \sqrt{h}_2 \left(P \otimes Q + Q \otimes Q \right) \gamma_N ^{(2)} (P ^{\otimes 2} + \Pi_2)\right]\\
=& \tr\left[ \sqrt{h}_1 \Pi_2 \gamma_N ^{(2)} \Pi_2\right] + \tr\left[ \sqrt{h}_2 Q \otimes Q \gamma_N ^{(2)} Q\otimes Q \right] 
\\&+ \tr\left[ \left(\sqrt{h}_2 - \sqrt{h}_1\right) P \otimes Q \gamma_N ^{(2)} P\otimes Q \right]\\
\geq& \tr\left[ \sqrt{h}_1 \Pi_2 \gamma_N ^{(2)} \Pi_2\right],
\end{align*}
where we use repeatedly the cyclicity of the trace and the fact that 
$\sqrt{h}$ commutes with $P$ and $Q$, along with the fact that 
$PQ = QP = 0$ and 
$$ \Pi_2 = \1 ^{\otimes 2} - P ^{\otimes 2} = Q\otimes Q + P\otimes Q + Q \otimes P.$$
In the last step we also use that as operators
$$ \sqrt{h}_2 P\otimes Q \geq \sqrt{\Lambda} P\otimes Q \geq \sqrt{h}_1 P\otimes Q$$
by definition of the projectors $P$ and $Q$. 
This gives the third term in the right-hand side of~\eqref{eq:control loc 1}. 
The second one arises from similar considerations:
\begin{align*}
\tr \left[ \Pi_3 \gamma_N ^{(3)} \right] &=  \tr \left[ \Pi_2 \otimes Q \gamma_N ^{(3)} \right] + \tr \left[ P ^{\otimes 2} \otimes Q \gamma_N ^{(3)} \right] + \tr \left[ \Pi_2 \otimes P \gamma_N ^{(3)}\right] \\
&\leq 2\tr \left[ Q \gamma_N ^{(1)} \right] + \tr \left[ \Pi_2 \gamma_N ^{(2)}\right]
\\&= 2\tr \left[ Q \gamma_N ^{(1)} \right] + \tr \left[ (P\otimes Q + Q \otimes P + Q \otimes Q ) \gamma_N ^{(2)}\right]
\\&\leq 5 \tr \left[ Q \gamma_N ^{(1)} \right].
\end{align*}

Next, using~\eqref{eq:mix two body bis}, we have
\begin{equation}\label{eq:high two body}
	\tr\left[ \Pi_2 |W_2| \Pi_2 \gamma_N ^{(2)}\right] \leq \frac{C}{R} \tr\left[ |p_1| \Pi_2 \gamma_N ^{(2)} \Pi_2 \right] \leq \frac{C}{R} \tr\left[ \sqrt{h}_1 \Pi_2 \gamma_N ^{(2)} \Pi_2 \right]
\end{equation}
by operator monotonicity of the square-root, and by~\eqref{eq:sup w_R} 
\begin{equation}\label{eq:high three body}
	\tr\left[ \Pi_3 |W_3| \Pi_3 \gamma_N ^{(3)}\right] \leq \frac{C}{R ^2} \tr\left[  \Pi_3 \gamma_N ^{(3)} \Pi_3 \right].
\end{equation}
Moreover, using~\eqref{eq:mix two body bis} again we get
$$
\tr\left[ P ^{\otimes 2 } \otimes Q |W_2(1,2)| P ^{\otimes 2} \otimes Q \gamma_N ^{(3)}\right] \leq \frac{C\sqrt{\Lambda}}{R} \tr\left[ P ^{\otimes 2} \otimes Q \gamma_N ^{(3)}\right] \leq \frac{C\sqrt{\Lambda}}{R} \tr\left[ Q \gamma_N ^{(1)} \right]
$$
so that, combining with~\eqref{eq:control loc 1}, choosing for some small 
fixed $c_1,c_2>0$
$$
\delta_1 = c_1 \sqrt{\Lambda} R, \quad \delta_2 = c_2 \Lambda R^2
$$ 
and $\Lambda$ large enough (i.e. $\Lambda R^{2} > c$ for $c$ large enough), 
we get 
\begin{multline*}
\tr\left[ h Q \gamma_N ^{(1)} Q \right] -  |\beta| (3+\delta_1) \tr\left[ \Pi_2 |W_2| \Pi_2 \gamma_N ^{(2)}\right] -2 |\beta| \tr\left[ P ^{\otimes 2 } \otimes Q |W_2(1,2)| P ^{\otimes 2} \otimes Q \gamma_N ^{(3)}\right]
\\-\beta ^2 (1+\delta_2) \tr\left[ \Pi_3 |W_3| \Pi_3 \gamma_N ^{(3)}\right] \geq C \tr\left[ h Q \gamma_N ^{(1)} Q \right]
\end{multline*}
for some fixed constant $C>0$. 
Then, inserting in~\eqref{eq:ener loc 3 body}, we deduce 
\begin{align}\label{eq:ener loc 3 body 2}
	\tr\left[ \tH_3 ^R \gamma_N ^{(3)} \right] &\geq \tr\left[  \tH_3 ^R P ^{\otimes 3}\gamma_N ^{(3)} P ^{\otimes 3}\right] + C \tr\left[ h Q \gamma_N ^{(1)} Q \right] \nonumber
	\\& - \frac{C}{\sqrt{\Lambda } R} \tr\left[ P ^{\otimes 2}  |W_2| P ^{\otimes 2}\gamma_N ^{(2)} \right]\nonumber
	\\&- \frac{C}{\Lambda  R ^2} \tr\left[ P ^{\otimes 3}  |W_3| P ^{\otimes 3} \gamma_N ^{(3)}\right].
\end{align}
But, using~\eqref{eq:mix two body eps}, \eqref{eq:ap bound}, and 
$\tr \gamma_N^{(k)} = 1$,
\begin{align*}
	\tr\left[ P ^{\otimes 2} |W_2| P ^{\otimes 2}\gamma_N ^{(2)} \right] 
	&\leq C_\eps R ^{-\eps} \tr\left[ P ^{\otimes 2} (p_1^2+1) P^{\otimes 2}\gamma_N ^{(2)} \right] 
	\\&\leq C_\eps R ^{-\eps} \tr\left[ (p_1^2+1) \gamma_N ^{(1)} \right] 
	\leq C_\eps R ^{-\eps},
\end{align*}
whereas, using~\eqref{eq:three body} and~\eqref{eq:ap bound} again
\begin{align*}
	\tr\left[ P ^{\otimes 3} |W_3| P^{\otimes 3} \gamma_N ^{(3)}\right] 
	&\leq C \tr\left[ P ^{\otimes 3} (p_1^2+1) P^{\otimes 3}\gamma_N ^{(3)} \right] 
	\\&\leq C \tr\left[ (p_1^2+1) \gamma_N ^{(1)} \right] \leq C,
\end{align*}
which completes the proof.
\end{proof}

\subsection{Energy bounds}\label{sec:low bound}

In this subsection we prove the 
energy bounds establishing \eqref{eq:main ener}.
The upper bound is 
obtained as usual by testing against a factorized trial state.
Namely, taking $\Psi_N = (\uAF_R)^{\otimes N}$ in \eqref{eq:ener dens mat}
with $\uAF_R$ a normalized minimizer of $\cEAF_R$,
and using \eqref{eq:two body term}, \eqref{eq:three body term},
Lemmas~\ref{lem:sing two body}, 
\ref{lem:mix two body}, \ref{lem:three body},
and the diamagnetic inequality \eqref{eq:af_diamagnetic},
one finds
\begin{equation} \label{eq:ener upper bound}
	\frac{E^R(N)}{N} \le \cEAF_R[\uAF_R] +
		(\cEAF_R[\uAF_R] + 1) \left(
		\frac{C \beta^2}{N} + \frac{C_\eps \beta^2 R^{-\eps}}{N}
		\right)
	= \EAF_R + o(1) \to \EAF,
\end{equation}
as $R \sim N^{-\eta}$ with $N \to \infty$,
where we also used Proposition~\ref{prop:af_limit}.
Note that for this upper bound we can allow any rate $0<\eta<\infty$,
and may even take $\beta = \beta(N) \to \infty$.

\medskip

For the lower bound,
inserting~\eqref{eq:deF estim} in the estimate of 
Proposition~\ref{pro:eff Hamil} we get 
for any sequence of ground states $\Psi_N$ that
\begin{align}\label{eq:ener pre final}
\frac{1}{N}\langle \Psi_N, H_N ^R \Psi_N \rangle 
\geq& \tr\left[ \tH_3 ^R \widetilde{\gamma}_N ^{(3)} \right] + C \Lambda \tr[Q \gamma_N ^{(1)}] - C \frac{N_\Lambda}{N} \norm{P ^{\otimes 3} \tH_3 ^R P ^{\otimes 3}}\nonumber
\\&-C \left( \frac{1}{N} + \frac{C_\eps}{\sqrt{\Lambda} R ^{1+\eps}} + \frac{1}{\Lambda R ^2}\right)\nonumber
\\ \geq& \tr\left[ \tH_3 ^R \widetilde{\gamma}_N ^{(3)} \right] + C \Lambda \tr[Q \gamma_N ^{(1)}]- C \frac{\Lambda ^{2+2/s}}{N} \left( 1 + |\log R| \right)\nonumber
\\&-C \left( \frac{1}{N} + \frac{C_\eps}{\sqrt{\Lambda} R ^{1+\eps}} + \frac{1}{\Lambda R ^2}\right).
\end{align}
We have here used Estimates~\eqref{eq:mix two body} and~\eqref{eq:three body} 
from Subsection~\ref{sec:op bounds}, along with 
$$P p^2 P \leq P h P \leq \Lambda$$
to bound the operator norm of 
$P ^{\otimes 3} \tH_3 ^R P ^{\otimes 3}$ 
and Lemma~\ref{lem:CLR} to bound $N_\Lambda$. 

\medskip

\noindent\textbf{Main term.} Since by definition $\widetilde{\gamma}_N^{(3)}$ 
is a superposition of tensorized states we get  
$$ 
	\tr\left[ \tH_3 ^R \widetilde{\gamma}_N ^{(3)} \right] 
	\geq \EAF_R \tr\left[ \widetilde{\gamma}_N ^{(3)} \right].
$$
We then denote 
\begin{equation}\label{eq:def lambda}
 \lambda := \tr \left[ P\gamma_N^{(1)} \right] = \sum_{k=0}^N \frac{k}{N} \tr\left[ G_{N,k} ^P\right] 
\end{equation}
the fraction of $P$-localized particles. Using the simple estimate 
$$\left|\frac{k}{N}\frac{k-1}{N-1}\frac{k-2}{N-2} - \frac{k^3}{N^3}\right| \leq C N ^{-1} \mbox{ for } 0\leq k \leq N,$$
it follows from~\eqref{eq:nomalization-localized-state},~\eqref{eq:def-mu-N-localized},~\eqref{eq:deF state},
and Jensen's inequality that  
\begin{align}\label{eq:lambda-lambda3}
\tr\left[ \widetilde{\gamma}_N^{(3)} \right] 
&= \int_{SP\gH} d\mu_N  
 = \sum_{k=3} ^N {N \choose 3} ^{-1}  {k \choose 3} \tr\left[ G_{N,k} ^P\right]\nonumber
\\&\geq \sum_{k=0} ^N \left( \frac{k}{N}\right) ^3 \tr\left[ G_{N,k} ^P\right] - O(N^{-1})\nonumber
\\&\geq  \lambda ^3 - O(N^{-1}).
\end{align}
Since on the other hand 
$$\tr \left[ Q\gamma_N ^{(1)} \right] = 1 -\lambda,$$
we conclude 
\begin{equation}\label{eq:control lambda}
	\tr\left[ \tH_3 ^R \widetilde{\gamma}_N ^{(3)} \right] + C \Lambda \tr[Q \gamma_N ^{(1)}] 
	\geq \lambda^3 \EAF_R + C \Lambda (1-\lambda)- C N ^{-1} 
	\geq \EAF_R - C N ^{-1}.
\end{equation}
For the last inequality we bound from below in terms of the infimum with 
respect to $0 \leq \lambda \leq 1$. 
Since $\Lambda$ is a very large number 
and $\EAF_R$ is bounded as $R \to 0$ (see Proposition~\ref{prop:af_limit}), 
the infimum is clearly attained at $\lambda = 1$.

\medskip

\noindent\textbf{Optimizing the error.} We next choose $\Lambda$ to minimize 
the error in~\eqref{eq:ener pre final}. 
We assume that $R$ behaves at worst as 
$$ R \sim N ^{-\eta}.$$
Changing a little bit $\eta$ if necessary we may ignore the $|\log R|$ and $R ^{\eps}$ factors, and we thus have to minimize
$$ \frac{1}{\sqrt{\Lambda} R} + \frac{1}{\Lambda R ^2} + \frac{\Lambda  ^{2+2/s}}{N}.$$
We pick 
$$ \Lambda  = N ^{\frac{2}{5} (1+\eta)(1+4/5s)^{-1}}$$
to equate the first and the last term and get 
$$ 
	\frac{1}{\sqrt{\Lambda} R} + \frac{1}{\Lambda R^2} + \frac{\Lambda^{2(1+1/s)}}{N} 
	= O(N ^{\frac{4}{5} (1+\eta)(1+1/s)(1+4/5s) ^{-1} - 1}) + O(N ^{\frac{8}{5} (1+\eta)(1+1/s)(1+4/5s) ^{-1} - 2}),
$$
and this is small provided 
$$\eta < \eta_0 := \frac{1}{4}\left( 1+ \frac{1}{s}\right) ^{-1}.$$
Since this is the main error term we conclude that the lower bound 
corresponding to~\eqref{eq:main ener} holds provided $R \sim N ^{-\eta}$ with $\eta < \eta_0$, as stated in the theorem. The limit $\EAF_R \to \EAF$ is dealt with in Appendix~\ref{app:AFF}.

\subsection{Convergence of states}\label{sec:state CV}

Given the previous constructions and energy estimates, the proof of~\eqref{eq:main state} follows almost exactly~\cite[Section 4.3]{LewNamRou-14} and is thus only sketched. 

Modulo extraction of subsequences we have 
$$\gamma_N ^{(k)} \wto_* \gamma ^{(k)}$$
weakly-$*$ in the trace-class as $N\to \infty$. 
From Proposition~\ref{pro:a priori} we know that 
$(-\Delta + V) \gamma_N ^{(1)}$ is uniformly bounded in trace-class. 
Under our assumptions, $(-\Delta + V) ^{-1}$ is compact and we may thus, 
modulo a further extraction, assume that 
$$ \gamma_N ^{(1)} \underset{N\to \infty}{\longrightarrow} \gamma ^{(1)}$$
strongly in trace-class norm. Then, by~\cite[Corollary 2.4]{LewNamRou-13}, 
we also have 
$$ \gamma_N ^{(k)} \underset{N\to \infty}{\longrightarrow} \gamma ^{(k)}$$
strongly for all $k\geq 0$.

Next we claim that the measure $\mu_N$ defined in 
Lemma~\ref{lem:deF-localized-state} converges (modulo extraction) 
to a limit probability measure $\mu\in \cP (S\gH)$ on the unit sphere of 
$\gH = L ^2 (\R ^2)$ and that 
\begin{equation}\label{eq:CVstate 0}
 \gamma ^{(k)} = \int_{u\in S\gH} |u ^{\otimes k}\rangle \langle u ^{\otimes k} | d\mu(u) \ \mbox{ for all } k\geq 0. 
\end{equation}
To see this, we first apply~\eqref{eq:deF estim}, 
with the above choice of $\Lambda$. We obtain
\begin{equation}\label{eq:CVstate 1} 
\Tr \left| P^{\otimes 3} \gamma_{N}^{(3)} P^{\otimes 3} - \int_{SP\gH} |u ^{\otimes 3}\rangle \langle u ^{\otimes 3} | d\mu_N(u) \right| \to 0. 
\end{equation}
On the other hand, combining~\eqref{eq:control lambda} with the energy upper 
bound \eqref{eq:ener upper bound} we get 
$$ \lambda \to 1$$
where $\lambda$ is the fraction of $P$-localized particles defined 
in~\eqref{eq:def lambda}. 
Using Jensen's inequality as in~\eqref{eq:lambda-lambda3} we deduce that 
\begin{equation}\label{eq:CVstate mass}
 \mu_N(SP\gH) =
 \tr\left[ P^{\otimes 3} \gamma_{N}^{(3)} P^{\otimes 3}\right] \to 1. 
\end{equation}
Combining with~\eqref{eq:CVstate 1} yields
\begin{equation}\label{eq:CVstate 2} 
\Tr \left| \gamma_{N}^{(3)} - \int_{SP\gH} |u ^{\otimes 3}\rangle \langle u ^{\otimes 3} | d\mu_N(u) \right| \to 0. 
\end{equation}
Testing this with a sequence of finite rank orthogonal projectors 
$$P_K \underset{K\to \infty}{\longrightarrow} \1$$
and using the strong convergence of $\gamma_N ^{(3)}$ gives
$$ \lim_{K\to \infty} \lim_{N\to \infty} \mu_N (P_K \gH)=1,$$
and we obtain the existence of a limit measure $\mu$ 
supported on the unit ball of $\gH$ by a tightness argument. 
Then~\eqref{eq:CVstate 0} for $k=3$ follows from~\eqref{eq:CVstate 2}. 
Since $\gamma_N ^{(3)}$ converges strongly, the limit has trace $1$ and $\mu$ 
must be supported on the unit sphere. 
Obtaining~\eqref{eq:CVstate 0} for larger $k$ is a general argument based 
on~\eqref{eq:CVstate mass}. 
We refer to~\cite[Section 4.3]{LewNamRou-14} for details.

There only remains to prove that $\mu$ is supported on $\cMAF$. 
But it follows easily from combining previously obtained energy bounds that 
$$ \int_{SP\gH} \left| \EAF_R - \cEAF_R [u] \right| d\mu_N (u) \underset{N\to \infty}{\longrightarrow} 0.$$
Using in addition the results of Appendix~\ref{app:AFF}, 
in particular Proposition~\ref{prop:af_limit}, 
we obtain for a large but fixed constant $C>0$
$$ \int_{\cEAF[u] \leq C} \left| \EAF - \cEAF [u] \right| d\mu_N (u) \underset{N\to \infty}{\longrightarrow} 0$$
and 
$$ \int_{\cEAF[u] \geq C} d\mu_N (u) \underset{N\to \infty}{\longrightarrow} 0.$$
Then clearly $\mu$ must be supported on $\cMAF$, which concludes the proof.\hfill\qed

\appendix

\section{Properties of the average-field functional} \label{app:AFF}

In this appendix we etablish some of the fundamental properties of the
functional \eqref{eq:avg func} and its limit $R \to 0$.

For $\beta \in \R$ and $V: \R^2 \to \R^+$ we define 
the average-field energy functional
\begin{equation}\label{eq:AFF}
	\cEAF[u] := \int_{\R^2} \left( \left| \left( \nabla + i \beta \bA[|u|^2] \right) u \right|^2 + V|u|^2 \right),
\end{equation}
with the self-generated magnetic potential
$$
	\bA[\rho] := \nablap w_0 * \rho = \int_{\R^2} \frac{(x-y)^\perp}{|x-y|^2} \rho(y) \,dy,
	\qquad \curl\:\bA[\rho] = 2\pi\rho.
$$
The functional is certainly well-defined for $u \in C_c^\infty(\R^2)$,
but we should ask what its natural domain is.
We then have to make a meaning of $\cEAF[u]$ for general $u \in L^2(\R^2)$
and the problem is that it is not certain that $\bA[|u|^2] \in L^2_{\loc}$
even though $u \in L^2$ (see \cite{ErdVou-02} for an example\footnote{Note 
that by Young's inequality we have for any $u \in L^2(\R^2)$ that
$\bA[|u|^2] \in L^p(\R^2) + \eps L^\infty(\R^2)$ for $p \in [1,2)$.
Also compare to the singular magnetic fields considered in 
\cite{ErdVou-02,LunSol-14}.}), 
so the product $\bA[|u|^2]u$ may not be well-defined as a distribution
(while $\nabla u$ certainly is).
One way around this is to reconsider the form of the functional when
acting on regular enough functions such that we can write 
$u = |u|e^{i\varphi}$ where $\varphi$ is real. Then
$$
	\left| (\nabla + i\beta\bA[|u|^2])u \right|^2
	= \left| \nabla|u| + i|u|(\nabla\varphi + \beta\bA[|u|^2]) \right|^2
	= \left| \nabla|u| \right|^2 
		+ \left| |u|\nabla\varphi + \beta \bA[|u|^2]|u| \right|^2,
$$
where also $\nabla \varphi = |u|^{-2}\Im \bar{u}\nabla u$
and $\nabla |u| = |u|^{-1}\Re \bar{u}\nabla u$.
Hence, an alternative definition is given by
\begin{equation}\label{eq:AFF_alt}
	\cEAF[u] := \int_{\R^2} \left( \left| \nabla|u| \right|^2 
		+ \left| \Im \frac{\bar{u}}{|u|}\nabla u + \beta \bA[|u|^2]|u| \right|^2
		+ V|u|^2 \right),
\end{equation}
and the advantage of this formulation is that it makes clear that we actually
demand $|u| \in H^1(\R^2)$ in order for $\cEAF[u] < \infty$.
We can then use the following lemma to see that in fact 
$\bA[|u|^2]u \in L^2(\R^2)$,
and hence also $\nabla u \in L^2(\R^2)$.
(And conversely this also shows that if $\bA[|u|^2]u \notin L^2(\R^2)$ 
then we have no chance of making sense out of $\cEAF[u]$.)

\begin{lemma}[\textbf{Bound on the magnetic term}]\label{lem:af_three_body}\mbox{}\\
	We have for any $u \in L^2(\R^2)$ that
	$$
		\int_{\R^2} \left| \bA[|u|^2] \right|^2 |u|^2 
		\le \frac{3}{2} \|u\|_{L^2(\R^2)}^4 \int_{\R^2} \left| \nabla |u| \right|^2.
	$$
\end{lemma}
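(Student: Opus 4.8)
The plan is to recognize the left-hand side as a three-body expression and apply the three-body Hardy inequality (Lemma~\ref{lem:Hardy}) with $d=2$. Writing $\rho = |u|^2$ and expanding the square, we have
\begin{equation*}
	\int_{\R^2} \left| \bA[\rho] \right|^2 \rho
	= \iiint_{\R^2\times\R^2\times\R^2} \frac{(x-y)^\perp}{|x-y|^2} \cdot \frac{(x-z)^\perp}{|x-z|^2} \, \rho(x)\rho(y)\rho(z)\,dx\,dy\,dz,
\end{equation*}
where I used $\bA[\rho](x) = \int \nablap w_0(x-y)\rho(y)\,dy$ and $\nablap w_0(x) = x^\perp/|x|^2$. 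Symmetrizing in $(x,y,z)$ (the measure $\rho\otimes\rho\otimes\rho$ is symmetric), this becomes $\tfrac{1}{3}$ times the integral of the cyclic sum $\sum_{\mathrm{cyclic}} \frac{(x-y)^\perp}{|x-y|^2}\cdot\frac{(x-z)^\perp}{|x-z|^2}$ against $\rho(x)\rho(y)\rho(z)$.

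The key geometric input is that, since $v^\perp\cdot w^\perp = v\cdot w$ for $v,w\in\R^2$, the cyclic sum equals $\sum_{\mathrm{cyclic}} \frac{x-y}{|x-y|^2}\cdot\frac{x-z}{|x-z|^2}$, which is exactly the $R=0$ case of the quantity analyzed inside the proof of Lemma~\ref{lem:three body}: it equals $\tfrac12 \cR(x,y,z)^{-2}$ where $\cR$ is the circumradius, and by the pointwise bound recalled in Lemma~\ref{lem:Hardy} this is $\le \tfrac92 \rho(x,y,z)^{-2}$ with $\rho(x,y,z)^2 = |x-y|^2+|y-z|^2+|z-x|^2$ (I note the unfortunate clash with my use of $\rho$ for the density; in the writeup I would rename the one-body density to avoid it). Thus
\begin{equation*}
	\int_{\R^2} \left| \bA[\rho] \right|^2 \rho
	\le \frac{3}{2} \iiint \frac{\rho(x)\rho(y)\rho(z)}{|x-y|^2+|y-z|^2+|z-x|^2}\,dx\,dy\,dz.
\end{equation*}

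To close, I apply Lemma~\ref{lem:Hardy} with $d=2$ (so $3(d-1)^2 = 3$) to the function $w(x,y,z) := \sqrt{\rho(x)\rho(y)\rho(z)} = |u(x)|\,|u(y)|\,|u(z)|$, which lies in $H^1(\R^{6})$ precisely when $|u|\in H^1(\R^2)$ (the case of interest; otherwise the claimed bound is trivially $+\infty$ on the right). This gives
\begin{equation*}
	\iiint \frac{|u(x)|^2|u(y)|^2|u(z)|^2}{|x-y|^2+|y-z|^2+|z-x|^2}\,dx\,dy\,dz
	\le \frac{1}{3}\int_{\R^6}\left(|\nabla_x w|^2 + |\nabla_y w|^2 + |\nabla_z w|^2\right),
\end{equation*}
and since $w$ is a product, $\int |\nabla_x w|^2 = \|u\|_{L^2}^4 \int |\nabla|u||^2$, and likewise for the $y$ and $z$ terms, so the right-hand side is $\|u\|_{L^2}^4 \int|\nabla|u||^2$. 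Combining, $\int |\bA[\rho]|^2\rho \le \tfrac32 \|u\|_{L^2}^4 \int |\nabla|u||^2$, as claimed. I do not anticipate a serious obstacle here: the only points needing care are the density/metric notation clash, a density argument to reduce to $u$ with $|u|\in H^1$ (and, for the symmetrization and Fubini steps, to smooth compactly supported $\rho$), and checking that $\nabla|u|$ is the correct object appearing after differentiating the product $|u(x)||u(y)||u(z)|$ — all routine.
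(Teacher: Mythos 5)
Your proposal is correct and follows essentially the same route as the paper's own proof: rewrite the left-hand side as a three-body integral using $v^\perp\cdot w^\perp=v\cdot w$, symmetrize to identify the kernel with $\tfrac{1}{6}\cR^{-2}$, and apply the three-body Hardy inequality of Lemma~\ref{lem:Hardy} to $|u|^{\otimes 3}$, with the constants combining to give exactly $\tfrac{3}{2}$. The paper's version is just a more compact write-up of the identical computation.
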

\begin{proof}
	This follows from symmetry and 
	from the three-body Hardy inequality of Lemma~\ref{lem:Hardy}:
	\begin{align*}
		& \int_{\R^2} \left| \bA[|u|^2](x) \right|^2 |u(x)|^2 \,dx 
		= \iiint_{\R^6} \frac{x-y}{|x-y|^2} \cdot \frac{x-z}{|x-z|^2} |u(x)|^2 |u(y)|^2 |u(z)|^2 \,dxdydz \\
		&= \frac{1}{6} \int_{\R^6} \frac{1}{\cR(X)^2} \left| |u|^{\otimes 3} \right|^2 dX 
		\le \frac{1}{2} \int_{\R^6} \left| \nabla_X |u|^{\otimes 3} \right|^2 dX
		= \frac{3}{2} \int_{\R^2} \left| \nabla|u| \right|^2 dx
		\left( \int_{\R^2} |u|^2 dx \right)^2.
	\end{align*}
\end{proof}

We can therefore define the domain of $\cEAF$ to be
(and otherwise let $\cEAF[u] := +\infty$)
$$
	\cDAF := \left\{ u \in H^1(\R^2) : \int_{\R^2} V|u|^2 < \infty \right\},
$$
and we find using Cauchy-Schwarz, Lemma~\ref{lem:af_three_body},
and $|\nabla|u|| \le |\nabla u|$ that for $u \in \cDAF$
$$
	0 \le \cEAF[u] 
	\le 2 \|\nabla u\|^2 + 2\beta^2 \| \bA[|u|^2] u \|^2 + \int V|u|^2
	\le (2 + 3\beta^2 \|u\|^4) \|\nabla u\|^2 + \int V|u|^2 < \infty.
$$
The ground-state energy of the average-field functional is then given by
$$
	\EAF := \inf \left\{ \cEAF[u] : u \in \cDAF, \int_{\R^2} |u|^2 = 1 \right\}.
$$
For convenience we also make the assumption on $V$ that 
$V(x) \to +\infty$ as $|x| \to \infty$
and that $C_c^\infty(\R^2) \subseteq \cDAF$ is a form core
for $\norm{u}_{L^2_V}^2 := \int_{\R^2} V|u|^2$,
with $-\Delta + V$ essentially self-adjoint 
and with purely discrete spectrum (see, e.g., \cite[Theorem XIII.67]{ReeSim4}).
This is then also a core for $\cEAF$:

\begin{proposition}[\textbf{Density of regular functions in the form domain}]\label{prop:af_density}\mbox{}\\
	$C_c^\infty(\R^2)$ is dense in $\cDAF$ w.r.t. $\cEAF$,
	namely for any $u \in \cDAF$ 
	there exists a sequence 
	$(u_n)_{n \to \infty} \subset C_c^\infty(\R^2)$ such that
	$$\|u-u_n\|_{H^1} \to 0
	\mbox{ and }\cEAF[u_n] \to \cEAF[u] \mbox{ as }n \to \infty.$$
\end{proposition}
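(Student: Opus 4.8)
The plan is to produce the approximating sequence directly from the assumptions on $V$, and then to verify that the magnetic energy is continuous along it. First, since the assumptions on $V$ guarantee that $C_c^\infty(\R^2)$ is a form core for $-\Delta+V$, whose quadratic form $u\mapsto\norm{\nabla u}_{L^2}^2+\int_{\R^2}V|u|^2$ has form domain exactly $\cDAF$, there is for each $u\in\cDAF$ a sequence $(u_n)\subset C_c^\infty(\R^2)$ with $\norm{u_n-u}_{H^1}\to0$ and $\int_{\R^2}V|u_n-u|^2\to0$. (Alternatively one could build $(u_n)$ by hand, multiplying $u$ by cut-offs $\chi(\,\cdot\,/L)$ to kill the tail $\int_{|x|>L}V|u|^2$ and then mollifying, letting $L\to\infty$ and the mollification parameter tend to $0$; the magnetic estimates below then apply in the same way.) It then remains only to show $\cEAF[u_n]\to\cEAF[u]$, where I use the form \eqref{eq:AFF} of the functional, which on $\cDAF$ is a sum of two finite non-negative terms by the discussion preceding Lemma~\ref{lem:af_three_body}. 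The potential term passes to the limit trivially, since $\bigl|\,\norm{u_n}_{L^2_V}-\norm{u}_{L^2_V}\,\bigr|\le\norm{u_n-u}_{L^2_V}\to0$, so everything reduces to
\[
 \bigl(\nabla+i\beta\bA[|u_n|^2]\bigr)u_n\ \longrightarrow\ \bigl(\nabla+i\beta\bA[|u|^2]\bigr)u
 \qquad\text{strongly in }L^2(\R^2).
\]

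The term $\nabla u_n\to\nabla u$ in $L^2(\R^2)$ is immediate from $H^1$-convergence, so the real point is $\bA[|u_n|^2]u_n\to\bA[|u|^2]u$ in $L^2(\R^2)$, which I would obtain in three short steps. (i) By the two-dimensional Sobolev embedding $H^1(\R^2)\hookrightarrow L^q(\R^2)$ for every $2\le q<\infty$, we have $u_n\to u$ in each such $L^q$, hence $|u_n|^2\to|u|^2$ in $L^p(\R^2)$ for every $1\le p<\infty$. (ii) The Biot--Savart kernel $\nablap w_0(x)=x^\perp/|x|^2$ lies in $L^{2,\infty}(\R^2)$, so the weak Young (O'Neil) inequality gives $\norm{\bA[\rho]}_{L^r(\R^2)}\le C_r\norm{\rho}_{L^p(\R^2)}$ whenever $1<p<2$ and $\tfrac1r=\tfrac1p-\tfrac12$; applied to $\rho=|u_n|^2-|u|^2$ this yields $\bA[|u_n|^2]\to\bA[|u|^2]$ in $L^r(\R^2)$ for every $2<r<\infty$, with $\norm{\bA[|u_n|^2]}_{L^r}$ bounded. (iii) Splitting $L^2=L^4\cdot L^4$ and using H\"older,
\[
 \norm{\bA[|u_n|^2]u_n-\bA[|u|^2]u}_{L^2}
 \le \norm{\bA[|u_n|^2]}_{L^4}\,\norm{u_n-u}_{L^4}
  +\norm{\bA[|u_n|^2]-\bA[|u|^2]}_{L^4}\,\norm{u}_{L^4}\ \longrightarrow\ 0.
\]
Expanding the square then gives $\cEAF[u_n]\to\cEAF[u]$ and finishes the proof.

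Essentially the whole argument is soft; the only genuine inputs are the assumed properties of $V$ (needed to realize the approximation of $u$ by $C_c^\infty$ functions simultaneously in the $H^1$- and $V$-weighted norms) and the mapping property $\bA[\cdot]\colon L^p\to L^r$ of the Biot--Savart operator, which is already implicit in the well-posedness discussion just above. The one place requiring a little care --- and what I would call the main (minor) obstacle --- is steps (ii)--(iii): keeping track of the Lebesgue exponents so that the difference $\bA[|u_n|^2]u_n-\bA[|u|^2]u$ is genuinely controlled in $L^2$ and not merely in a weaker space.
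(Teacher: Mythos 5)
Your argument is correct and follows essentially the same route as the paper's proof: approximate $u$ by $C_c^\infty$ functions in the $H^1$ and $V$-weighted norms, then control $\bA[|u_n|^2]u_n-\bA[|u|^2]u$ in $L^2$ by the same telescoping, H\"older with exponents $L^4\cdot L^4$, and the weak Young inequality using $\nabla w_0\in L^{2,\infty}(\R^2)$ (the paper's ``generalized Young'' with $\norm{\bA[\rho]}_4\le C\norm{\rho}_{4/3}\norm{\nabla w_0}_{2,w}$). The only cosmetic differences are that you prove strong $L^2$ convergence of the full magnetic gradient where the paper bounds the difference of the two norms, and that you draw the sequence directly from the form-core hypothesis so that $\norm{u_n-u}_{L^2_V}\to 0$ is explicit.
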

\begin{proof}
	Take $u \in \cDAF$, then $\|\nabla u\|_{L^2} < \infty$ and hence also 
	$\|u\|_{L^p} < \infty$ for any $p \in [2,\infty)$ by Sobolev embedding.
	We use that $C_c^\infty(\R^2)$ is dense in $H^1(\R^2)$,
	so there exists a sequence $(u_n)_{n \to \infty} \subset C_c^\infty$ 
	s.t. $\|u-u_n\|_{H^1} \to 0$. Also,
	\begin{align*}
		&\left| \norm{ (\nabla + i\beta\bA[|u|^2])u }_2 - \norm{ (\nabla + i\beta\bA[|u_n|^2])u_n }_2 \right| \\
		&\le \norm{ (\nabla + i\beta\bA[|u|^2])u - (\nabla + i\beta\bA[|u_n|^2])u_n }_2 \\
		&\le \norm{ \nabla(u-u_n) }_2 
			+ |\beta| \| (\bA[|u|^2] - \bA[|u_n|^2])u + \bA[|u_n|^2](u-u_n) \|_2 \\
		&\le \norm{u-u_n}_{H^1} + |\beta| \norm{ \bA[|u|^2-|u_n|^2] u }_2 
			+ |\beta|\norm{ \bA[|u_n|^2](u-u_n) }_2,
	\end{align*}
	where by H\"older's and generalized Young's inequalities
	\begin{align*}
		&\norm{\bA[|u|^2-|u_n|^2] u}_2 
		\le \norm{\bA[|u|^2-|u_n|^2]}_{4} \norm{u}_{4}
		\le C \norm{|u|^2-|u_n|^2}_{4/3} \norm{\nabla w_0}_{2,w} \norm{u}_4 \\
		&\le C' \norm{u-u_n}_{8/3}
		\le C'' \norm{u-u_n}_{H^1} \to 0,
	\end{align*}
	and similarly 
	$$\norm{\bA[|u_n|^2](u-u_n)}_2 \le C\norm{u-u_n}_{H^1} \to 0,$$
	as $n \to \infty$.

	We also have continuity for $\norm{u}_{L^2_V}	$ here since we assumed that
	$C_c^\infty(\R^2)$ is a form core.
\end{proof}

\begin{lemma}[\textbf{Basic magnetic inequalities}]\label{lem:af_smooth_ineqs}\mbox{}\\
	We have for $u \in \cDAF$ that (diamagnetic inequality)
	\begin{equation}\label{eq:af_diamagnetic}
		\int_{\R^2} \left| (\nabla + i\beta \bA[|u|^2])u \right|^2 
		\ge \int_{\R^2} \left| \nabla |u| \right|^2,
	\end{equation}
	and
	\begin{equation}\label{eq:af_lower_bound}
		\int_{\R^2} \left| (\nabla + i\beta \bA[|u|^2])u \right|^2 
		\ge 2\pi|\beta| \int_{\R^2} |u|^4.
	\end{equation}
\end{lemma}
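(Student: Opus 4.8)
The plan is to prove both inequalities first for $u \in C_c^\infty(\R^2)$, where $\bA[|u|^2]$ is smooth and bounded and all integrations by parts are legitimate, and then to pass to a general $u \in \cDAF$ by density. Note that the left-hand sides are meaningful for $u \in \cDAF$: by Lemma~\ref{lem:af_three_body} one has $\bA[|u|^2]u \in L^2(\R^2)$, and $\nabla u \in L^2(\R^2)$ since $u \in H^1(\R^2)$, so $(\nabla + i\beta\bA[|u|^2])u \in L^2(\R^2)$.

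\emph{The smooth case.} Fix $u \in C_c^\infty(\R^2)$ and set $\mathbf{A} := \beta\,\bA[|u|^2] = \beta\,\nablap(w_0 * |u|^2)$; this is smooth, bounded, with bounded gradient, and $\curl \mathbf{A} = \beta\,\Delta(w_0 * |u|^2) = 2\pi\beta|u|^2$. For \eqref{eq:af_diamagnetic} I would invoke the pointwise diamagnetic inequality $|\nabla|u|| \le |(\nabla + i\mathbf{A})u|$ a.e. (see \cite[Theorem 7.21]{LieLos-01}); equivalently, on $\{u \ne 0\}$ one may write $u = |u|e^{i\varphi}$ with $\varphi$ smooth, so that $|(\nabla + i\mathbf{A})u|^2 = |\nabla|u||^2 + |u|^2|\nabla\varphi + \mathbf{A}|^2 \ge |\nabla|u||^2$, while $\nabla|u| = 0$ a.e. on $\{u = 0\}$; integrating yields the claim. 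For \eqref{eq:af_lower_bound} I would use the standard factorisation: with $D_j := \partial_j + iA_j$ one has $[D_1,D_2] = i\,\curl\mathbf{A}$, hence for $D_\pm := D_1 \pm iD_2$,
\[
	\int_{\R^2} \left| (\nabla + i\mathbf{A})u \right|^2
	= \norm{D_\pm u}_{L^2(\R^2)}^2 \mp \int_{\R^2} (\curl\mathbf{A})\,|u|^2
	\ \ge\ \mp \int_{\R^2} (\curl\mathbf{A})\,|u|^2,
\]
an identity obtained by expanding the square and integrating by parts. Choosing in $D_\pm$ the sign opposite to that of $\beta$, and inserting $\curl\mathbf{A} = 2\pi\beta|u|^2$, gives exactly $\int_{\R^2}|(\nabla + i\mathbf{A})u|^2 \ge 2\pi|\beta|\int_{\R^2}|u|^4$.

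\emph{Passage to the limit.} For $u \in \cDAF$ take $(u_n)_{n} \subset C_c^\infty(\R^2)$ with $\norm{u - u_n}_{H^1} \to 0$, as furnished by Proposition~\ref{prop:af_density}. The estimates in the proof of that proposition already show that $\norm{(\nabla + i\beta\bA[|u_n|^2])u_n}_{L^2} \to \norm{(\nabla + i\beta\bA[|u|^2])u}_{L^2}$, i.e. the left-hand sides converge. For the right-hand side of \eqref{eq:af_lower_bound} we use $u_n \to u$ in $L^4(\R^2)$ (Sobolev embedding $H^1(\R^2) \hookrightarrow L^4(\R^2)$), whence $\int|u_n|^4 \to \int|u|^4$. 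For the right-hand side of \eqref{eq:af_diamagnetic}, the smooth-case pointwise bound gives $\norm{\nabla|u_n|}_{L^2} \le \norm{(\nabla + i\beta\bA[|u_n|^2])u_n}_{L^2}$, so $(\nabla|u_n|)_n$ is bounded in $L^2$; since $|u_n| \to |u|$ in $L^2$, its distributional gradient is $\nabla|u|$, hence $\nabla|u_n| \rightharpoonup \nabla|u|$ weakly in $L^2$ and $\int|\nabla|u||^2 \le \liminf_n \int|\nabla|u_n||^2$ by weak lower semicontinuity of the norm. Combining with the smooth case and the convergence of the left-hand sides closes both inequalities.

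\emph{Main obstacle.} The computations in the smooth case are routine; the only genuine point is the approximation step, which is unavoidable because $\bA[|u|^2]$ need not belong to $L^2_{\loc}(\R^2)$ when $u \in L^2(\R^2)$, so one cannot apply the pointwise diamagnetic inequality directly at the level of $\cDAF$. One must instead rely on Lemma~\ref{lem:af_three_body} to make sense of the left-hand sides, on the $H^1$-density of $C_c^\infty(\R^2)$, and on the continuity and lower-semicontinuity arguments above.
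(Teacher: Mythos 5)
Your proposal is correct and follows essentially the same route as the paper: reduce to $u \in C_c^\infty(\R^2)$, where $\bA[|u|^2]$ is smooth, apply the standard pointwise diamagnetic inequality for \eqref{eq:af_diamagnetic} and the standard factorization bound $\int |(\nabla + i\mathbf{A})u|^2 \ge \pm\int (\curl\mathbf{A})|u|^2$ for \eqref{eq:af_lower_bound}, then pass to the limit using the density result. The paper merely cites Fournais--Helffer for the two smooth-case facts and says "by density" without elaboration, whereas you correctly spell out the limit passage (convergence of the left-hand sides via Proposition~\ref{prop:af_density}, $L^4$ convergence for the right-hand side of \eqref{eq:af_lower_bound}, and weak lower semicontinuity for that of \eqref{eq:af_diamagnetic}); this is a faithful filling-in of the same argument rather than a different one.
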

\begin{proof}
	By density we can w.l.o.g. assume  $u \in C_c^\infty(\R^2)$.
	We then have $\bA[|u|^2] \in C^\infty(\R^2) \subseteq L^2_{\loc}(\R^2)$ 
	and hence the first inequality follows by the usual diamagnetic inequality
	(see e.g. Theorem~2.1.1 in \cite{Fournais-Helffer}).
	Furthermore, by e.g. Lemma~1.4.1 in \cite{Fournais-Helffer},
	$$
		\int_{\R^2} \left| (\nabla + i\beta \bA[|u|^2])u \right|^2 
		\ge \pm \int_{\R^2} \curl\left( \beta\bA[|u|^2] \right) |u|^2,
	$$
	which proves the second inequality since $\curl \bA[|u|^2] = 2\pi|u|^2$.
	Instead of using density we could also have used the formulation
	\eqref{eq:AFF_alt} or the fact that 
	$u \in H^1 \Rightarrow \bA[|u|^2] \in L^p$, 
	$p \in (2,\infty)$ by generalized Young.
\end{proof}

\begin{proposition}[\textbf{Existence of minimizers}]\label{prop:af_minimizer}\mbox{}\\
	For any value of $\beta \in \R$
	there exists $\uAF \in \cDAF$ with $\int_{\R^2} |\uAF|^2 = 1$ and
	$\cEAF[\uAF] = \EAF$.
\end{proposition}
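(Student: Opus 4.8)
The plan is to apply the direct method of the calculus of variations. First I would take a minimizing sequence $(u_n)_{n} \subset \cDAF$ with $\int_{\R^2} |u_n|^2 = 1$ and $\cEAF[u_n] \to \EAF$. From $\sup_n \cEAF[u_n] < \infty$, the diamagnetic inequality~\eqref{eq:af_diamagnetic} gives $\sup_n \norm{\,|u_n|\,}_{H^1} < \infty$, the bound $\cEAF[u_n] \ge \int_{\R^2} V |u_n|^2$ gives $\sup_n \int_{\R^2} V|u_n|^2 < \infty$, and the reformulation~\eqref{eq:AFF_alt} together with Lemma~\ref{lem:af_three_body} gives $\sup_n \norm{\Im(\bar u_n/|u_n|)\nabla u_n}_{L^2} < \infty$. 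Since pointwise $|\nabla u_n|^2 = |\Re(\bar u_n/|u_n|)\nabla u_n|^2 + |\Im(\bar u_n/|u_n|)\nabla u_n|^2$ on $\{u_n \ne 0\}$, while $\nabla u_n = 0$ a.e.\ on $\{u_n = 0\}$, combining these yields $\sup_n \norm{u_n}_{H^1} < \infty$.

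Next, up to a subsequence, $u_n \wto u$ weakly in $H^1(\R^2)$ and a.e.\ in $\R^2$. The uniform bound $\int V|u_n|^2 \le C$ together with $V(x) \to +\infty$ ensures tightness, hence $u_n \to u$ strongly in $L^2(\R^2)$; interpolating with the $H^1$ bound (via Sobolev embedding) upgrades this to strong convergence in $L^p(\R^2)$ for every $p \in [2,\infty)$. In particular $\norm{u}_{L^2} = 1$, so the mass constraint is preserved, and $u \in H^1(\R^2)$ with $\int_{\R^2} V|u|^2 \le \liminf_n \int_{\R^2} V|u_n|^2 < \infty$ by Fatou, so that $u \in \cDAF$.

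It then remains to establish $\cEAF[u] \le \liminf_n \cEAF[u_n]$. For the term $\int_{\R^2} |\nabla|u||^2$ one uses that $|u_n| \to |u|$ in $L^2$ while $\nabla |u_n|$ is bounded in $L^2$, so $\nabla|u_n| \wto \nabla|u|$ and the $L^2$-norm is weakly lower semicontinuous; for $\int_{\R^2} V|u|^2$ one uses Fatou again. The delicate term is the magnetic one, where the nonlinearity enters through $\bA[|u_n|^2]$. Here the point is that $|u_n|^2 \to |u|^2$ strongly in $L^p(\R^2)$ for all $p \in [1,\infty)$, so by the generalized Young inequality (with $\nablap w_0 \in L^{2}_w(\R^2)$) we get $\bA[|u_n|^2] \to \bA[|u|^2]$ strongly in $L^q(\R^2)$ for every $q \in (2,\infty)$. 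Decomposing
\begin{equation*}
	\bA[|u_n|^2]u_n - \bA[|u|^2]u = \big(\bA[|u_n|^2] - \bA[|u|^2]\big)u_n + \bA[|u|^2](u_n - u)
\end{equation*}
and estimating by H\"older with a pair of exponents $q, q' \in (2,\infty)$, $1/q + 1/q' = 1/2$, both terms tend to $0$ in $L^2(\R^2)$. Since $\nabla u_n \wto \nabla u$ weakly in $L^2$, it follows that $(\nabla + i\beta\bA[|u_n|^2])u_n \wto (\nabla + i\beta\bA[|u|^2])u$ weakly in $L^2(\R^2)$, and weak lower semicontinuity of the $L^2$-norm gives the claim. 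Altogether $\cEAF[u] \le \liminf_n \cEAF[u_n] = \EAF$, and since $u$ is admissible, $\cEAF[u] = \EAF$.

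The step I expect to be the main obstacle is precisely the lower semicontinuity of the nonlinear magnetic term, i.e.\ proving $\bA[|u_n|^2]u_n \to \bA[|u|^2]u$ in $L^2(\R^2)$: this is where one genuinely needs the \emph{strong} $L^p$ compactness of $(u_n)$ provided by the confining potential, rather than mere weak convergence, together with the smoothing of the convolution with $\nablap w_0$. A secondary technical point is obtaining an $H^1$ bound on $u_n$ itself and not just on $|u_n|$, which as indicated above is handled via the reformulation~\eqref{eq:AFF_alt} and Lemma~\ref{lem:af_three_body}.
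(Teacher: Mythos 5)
Your proof is correct and follows the same overall strategy as the paper: the direct method, with an $H^1$ bound on the minimizing sequence, strong $L^2$ (hence $L^p$) compactness supplied by the confining potential, strong $L^2$ convergence of $\bA[|u_n|^2]u_n$, and weak lower semicontinuity of the remaining terms. You differ in two places. For the $H^1$ bound on $u_n$ itself, the paper simply writes $\norm{\nabla u}_2 \le \norm{(\nabla+i\beta\bA[|u|^2])u}_2 + |\beta|\,\norm{\bA[|u|^2]u}_2$ and invokes Lemma~\ref{lem:af_three_body} together with the diamagnetic inequality; your route through the real/imaginary decomposition of \eqref{eq:AFF_alt} proves the same estimate with the same ingredients, just phrased pointwise. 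The more substantive divergence is the key step $\bA[|u_n|^2]u_n \to \bA[|u|^2]u$ in $L^2$: the paper gets pointwise a.e.\ convergence of $\bA[|u_n|^2]$ and then convergence of the norms $\norm{\bA[|u_n|^2]u_n}_2 \to \norm{\bA[|u|^2]u}_2$ via the circumradius representation from Lemma~\ref{lem:af_three_body} and dominated convergence, concluding strong convergence from a.e.\ convergence plus norm convergence; you instead use the weak Young inequality with $\nablap w_0 \in L^2_w(\R^2)$ to upgrade the strong $L^p$ convergence of $|u_n|^2$ to strong $L^q$ convergence of the vector potentials, and then H\"older on the natural splitting of $\bA[|u_n|^2]u_n - \bA[|u|^2]u$. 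Your version is more quantitative and reuses exactly the estimates the paper itself deploys in Proposition~\ref{prop:af_density}, whereas the paper's version leans on the three-body Hardy structure a second time; both are valid, and the rest of the argument (Fatou for the $V$-term, duality/weak lsc for the magnetic kinetic term, preservation of the mass constraint by strong $L^2$ convergence) matches the paper's.
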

\begin{proof}
	First note that for $u \in \cDAF$, by Lemma~\ref{lem:af_three_body}
	and Lemma~\ref{lem:af_smooth_ineqs},
	\begin{align*}
		\norm{\nabla u}_2 &= \norm{\nabla u + i\beta \bA[|u|^2] u - i\beta \bA[|u|^2] u} _2
		\le \cEAF[u]^{1/2} + |\beta| \norm{\bA[|u|^2]u}_2 \\
		&\le \cEAF[u]^{1/2} + |\beta| \sqrt{\frac{3}{2}} \norm{u}_2^2 \norm{\nabla|u|}_2 
		\le \left( 1 + |\beta| \sqrt{\frac{3}{2}} \norm{u}_2^2 \right) \cEAF[u]^{1/2}.
	\end{align*}
	Now take a minimizing sequence 
	$$(u_n)_{n \to \infty} \subset \cDAF,\:
	\norm{u_n}_2 = 1,\: 
	\lim_{n \to \infty} \cEAF[u_n] = \EAF.$$
	Then clearly $(u_n)$ is uniformly bounded in both $L^2(\R^2)$,
	$L^2_V$, and $H^1(\R^2)$ (and hence in $L^p(\R^2)$, $p \in [2,\infty)$),
	and therefore by the Banach-Alaoglu theorem there exists $\uAF \in \cDAF$ 
	and a weakly convergent subsequence (still denoted $u_n$) such that
	$$
		u_n \rightharpoonup \uAF \ \text{in} \ L^2(\R^2) \cap L^2_V \cap L^p(\R^2), \quad
		\nabla u_n \rightharpoonup \nabla \uAF \ \text{in} \ L^2(\R^2).
	$$
	Moreover, since $(-\Delta + V + 1)^{-1/2}$ is compact we have that
	$$u_n = (-\Delta+V + 1)^{-1/2}(-\Delta+V + 1)^{1/2}u_n$$ 
	is actually strongly convergent (again extracting a subsequence), 
	hence 
	$$u_n \to \uAF \mbox{ in } L^2(\R^2).$$
	Also, $\bA[|u_n|]$ converges pointwise a.e. to $\bA[|u|^2]$ by weak
	convergence of $u_n$ in $L^p$
	and, by the trick of Lemma~\ref{lem:af_three_body},
	$$
		\norm{\bA[|u_n|^2]u_n}_2^2
		= \frac{1}{6} \int_{\R^6} \cR(X)^{-2} \left| |u_n|^{\otimes 3} \right|^2 dX 
		\to \frac{1}{6} \int_{\R^6} \cR(X)^{-2} \left| |u|^{\otimes 3} \right|^2 dX 
		= \norm{\bA[|u|^2]u}_2^2
	$$
	by dominated convergence.
	The functions $\bA[|u_n|^2]u_n$ are therefore even strongly converging 
	to $\bA[|u|^2]u$ in $L^2(\R^2)$ by dominated convergence.
	It then follows that
	\begin{align*}
		\norm{ (\nabla + i\beta \bA[|u|^2])u }_2 
		&= \sup_{\|v\|=1} |\langle \nabla u + i\beta \bA[|u|^2]u, v \rangle| \\
		&= \sup_{\|v\|=1} \lim_{n \to \infty} |\langle \nabla u_n + i\beta \bA[|u_n|^2]u_n, v \rangle| \\
		&\le \liminf_{n \to \infty} \sup_{\|v\|=1} |\langle \nabla u_n + i\beta \bA[|u_n|^2]u_n, v \rangle| \\
		&= \liminf_{n \to \infty} \norm{ (\nabla + i\beta \bA[|u_n|^2])u_n }_2,
	\end{align*}
	and since $\norm{\cdot}_{L^2_V}$ is also weakly lower semicontinuous
	(see, e.g., \cite[Supplement to IV.5]{ReeSim1}),
	we have $\liminf_{n \to \infty} \cEAF[u_n] \ge \cEAF[\uAF]$.
	Thus, with $\|\uAF\| = \lim_{n \to \infty} \|u_n\| = 1$,
	we also have $\cEAF[\uAF] = \EAF$.
\end{proof}

\begin{proposition}[\textbf{Convergence to bosons}] \label{prop:af_bosons}\mbox{}\\
	Let $E_0$ resp. $u_0$ denote the ground-state eigenvalue resp.
	normalized eigenfunction of the non-magnetic
	Schr\"odinger operator $H_1 = -\Delta + V$,
	with $V \in L^\infty_\loc$.
	We have 
	$$\EAF_{(\beta)} \underset{\beta\to 0}{\to} E_0,$$
	and that given an arbitrary sequence $(u_\beta)$
	of minimizers for $\cEAF_{(\beta)}$
	$$u_\beta \underset{\beta\to 0}{\to} u_0 \mbox{ in } L^2(\R^2)$$
	up to a subsequence and a constant phase.
\end{proposition}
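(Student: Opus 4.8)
The plan is to prove $\EAF_{(\beta)}\to E_0$ by a sandwich of matching upper and lower bounds, and then to upgrade this to convergence of minimizers by compactness and lower semicontinuity, the only genuinely new ingredient being the extraction of a constant phase at the end. For the upper bound I would test $\cEAF_{(\beta)}$ against $u_0$. By the standing assumption on $V$ the operator $-\Delta+V$ has compact resolvent, so $E_0$ is a simple eigenvalue and $u_0$ may be chosen real and strictly positive; in particular $u_0\in\cDAF$, and since $u_0$ is real the formulation \eqref{eq:AFF_alt} gives $\cEAF_{(\beta)}[u_0]=E_0+\beta^2\int_{\R^2}|\bA[|u_0|^2]|^2|u_0|^2$, which is finite by Lemma~\ref{lem:af_three_body}. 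Hence $\limsup_{\beta\to0}\EAF_{(\beta)}\le E_0$. For the lower bound I would invoke the diamagnetic inequality \eqref{eq:af_diamagnetic}: for every normalized $u\in\cDAF$ one has $\cEAF_{(\beta)}[u]\ge\int_{\R^2}(|\nabla|u||^2+V|u|^2)\ge E_0$, since $|u|$ is an admissible trial function for $H_1=-\Delta+V$. Thus $\EAF_{(\beta)}\ge E_0$ for all $\beta$, and $\EAF_{(\beta)}\to E_0$.

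\medskip
\textbf{Compactness of minimizers.} Let $(u_\beta)$ be a sequence of normalized minimizers. The a priori estimate at the start of the proof of Proposition~\ref{prop:af_minimizer} gives $\|\nabla u_\beta\|_{L^2}\le(1+|\beta|\sqrt{3/2})\sqrt{\EAF_{(\beta)}}$ and $\int_{\R^2}V|u_\beta|^2\le\EAF_{(\beta)}$, both bounded for $|\beta|\le1$ since $\EAF_{(\beta)}\to E_0$. So $(u_\beta)$ is bounded in $H^1(\R^2)\cap L^2_V$ and, exactly as in the proof of Proposition~\ref{prop:af_minimizer}, compactness of $(-\Delta+V+1)^{-1/2}$ lets us pass to a subsequence with $u_\beta\to u_*$ strongly in $L^2(\R^2)$, weakly in $H^1(\R^2)$ and in $L^2_V$, and weakly in $L^p(\R^2)$ for $p\in[2,\infty)$; in particular $|u_\beta|\to|u_*|$ strongly in $L^2$ and weakly in $H^1$.

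\medskip
\textbf{Identifying the limit.} Combining \eqref{eq:af_diamagnetic} with the sandwich $E_0\le\int_{\R^2}(|\nabla|u_\beta||^2+V|u_\beta|^2)\le\cEAF_{(\beta)}[u_\beta]=\EAF_{(\beta)}\to E_0$, one gets $\int_{\R^2}(|\nabla|u_\beta||^2+V|u_\beta|^2)\to E_0$; since this is a convergent sum of two nonnegative, weakly lower-semicontinuous quantities, each converges, and by weak lower semicontinuity and normalization ($\|u_*\|_{L^2}=1$) the limit $|u_*|$ is a normalized minimizer of the non-magnetic functional, whence $|u_*|=u_0$ by simplicity and positivity of the Schr\"odinger ground state, while moreover $\|\nabla|u_\beta|\|_{L^2}\to\|\nabla u_0\|_{L^2}$. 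Next, from \eqref{eq:AFF_alt}, $\int_{\R^2}|\Im(\bar u_\beta|u_\beta|^{-1}\nabla u_\beta)+\beta\bA[|u_\beta|^2]|u_\beta||^2=\EAF_{(\beta)}-\int_{\R^2}(|\nabla|u_\beta||^2+V|u_\beta|^2)\to0$, and since $|\beta|\,\|\bA[|u_\beta|^2]|u_\beta|\|_{L^2}\le|\beta|\sqrt{3/2}\,\|u_\beta\|_{L^2}^2\|\nabla|u_\beta|\|_{L^2}\to0$ by Lemma~\ref{lem:af_three_body}, we also obtain $\|\Im(\bar u_\beta|u_\beta|^{-1}\nabla u_\beta)\|_{L^2}\to0$. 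Feeding this into the pointwise identity $|\nabla u_\beta|^2=|\nabla|u_\beta||^2+|\Im(\bar u_\beta|u_\beta|^{-1}\nabla u_\beta)|^2$ (the case $\beta=0$ of the computation preceding \eqref{eq:AFF_alt}, valid a.e.\ with $\nabla u_\beta=0$ a.e.\ on $\{u_\beta=0\}$) yields $\|\nabla u_\beta\|_{L^2}\to\|\nabla u_0\|_{L^2}$. Hence, by weak lower semicontinuity of the $H^1$-norm and $|\nabla|u_*||\le|\nabla u_*|$,
\[
	\|\nabla u_*\|_{L^2}\le\liminf_{\beta\to0}\|\nabla u_\beta\|_{L^2}=\|\nabla u_0\|_{L^2}=\|\nabla|u_*|\|_{L^2}\le\|\nabla u_*\|_{L^2},
\]
so $\|\nabla u_*\|_{L^2}=\|\nabla|u_*|\|_{L^2}$ while $|u_*|=u_0>0$ a.e.; this is the equality case of the field-free diamagnetic inequality and forces $u_*$ to have constant phase (write $u_*=u_0e^{i\varphi}$; then $\Im(\bar u_*|u_*|^{-1}\nabla u_*)=u_0\nabla\varphi=0$ a.e., so $\varphi$ is constant on the connected set $\R^2$). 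Therefore $u_*=e^{i\theta}u_0$, and since $u_\beta\to u_*$ already strongly in $L^2(\R^2)$ (indeed in $H^1$, by norm convergence), this is the asserted convergence up to a subsequence and a constant phase.

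\medskip
\textbf{Main obstacle.} The only delicate point is handling the phase, which a priori exists only off the (possibly nonempty) nodal set of $u_\beta$; the remedy is to phrase everything through the gauge-invariant quantities $\nabla|u_\beta|$ and $\Im(\bar u_\beta|u_\beta|^{-1}\nabla u_\beta)$ from \eqref{eq:AFF_alt}, using that $\nabla u_\beta=0$ a.e.\ where $u_\beta=0$, and to exploit that the limit $|u_*|=u_0$ \emph{is} strictly positive so that a genuine phase can be extracted at the end. Everything else --- the energy sandwich, the compactness, the lower semicontinuity --- is routine given Lemmas~\ref{lem:af_three_body} and~\ref{lem:af_smooth_ineqs} and the arguments already used for Proposition~\ref{prop:af_minimizer}.
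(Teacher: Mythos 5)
Your proof is correct, and the energy sandwich (test with the real positive $u_0$ for the upper bound, diamagnetic inequality \eqref{eq:af_diamagnetic} for the lower bound) together with the compactness extraction are exactly the paper's argument. Where you diverge is the identification of the limit: the paper keeps the full complex order parameter and shows $\|\nabla u_*\|_2 \le \liminf_{\beta\to0}\norm{(\nabla+i\beta\bA[|u_\beta|^2])u_\beta}_2$ by a duality argument (testing against $v$ and using that $\beta\bA[|u_\beta|^2]u_\beta\to0$), so that $\cE_0[u_*]=E_0$ and $u_*=e^{i\theta}u_0$ follows in one stroke from the simplicity of the Schr\"odinger ground state. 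You instead split modulus and phase via \eqref{eq:AFF_alt}: the diamagnetic sandwich pins down $|u_*|=u_0$ and forces $\|\nabla|u_\beta|\|_2\to\|\nabla u_0\|_2$, the gauge-invariant current $\Im(\bar u_\beta|u_\beta|^{-1}\nabla u_\beta)$ is shown to vanish in $L^2$, and the constant phase is extracted from the equality case of the field-free diamagnetic inequality using $u_0>0$. Your route is longer and requires some care with the phase lifting off the nodal set (which you handle correctly by working with gauge-invariant quantities and by noting $\nabla u_\beta=0$ a.e.\ on $\{u_\beta=0\}$), but it buys a genuinely stronger conclusion: since $\|\nabla u_\beta\|_2^2=\|\nabla|u_\beta|\|_2^2+\|\Im(\bar u_\beta|u_\beta|^{-1}\nabla u_\beta)\|_2^2\to\|\nabla u_0\|_2^2$, you get norm convergence of the gradients and hence strong convergence in $H^1(\R^2)$, not merely in $L^2(\R^2)$ as stated. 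Both arguments are sound; the paper's is the more economical, yours the more informative.
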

\begin{proof}
	Note that under our conditions for $V$,
	$u_0 \in \cDAF$ is the unique minimizer of 
	$\cE_0 = \cEAF_{(\beta=0)}$ and can be taken positive
	(see, e.g., \cite[Theorem 11.8]{LieLos-01}).
	By the diamagnetic inequality \eqref{eq:af_diamagnetic},
	and by taking the trial state $u_0 = |u_0|$ in $\cEAF_{(\beta \neq 0)}$,
	we find
	$$
		E_0 \le \EAF_{(\beta)} \le \cEAF_{(\beta)}[u_0] 
		= \cE_0[u_0] + \beta^2\norm{\bA[|u_0|^2]u_0}_2^2
		\le (1 + C\beta^2)E_0
	$$
	(where we also used 
	Lemma~\ref{lem:af_three_body}),
	and hence $\EAF_{(\beta)} \to E_0$ as $\beta \to 0$.
	Now consider a sequence $(u_\beta) \subset \cDAF$ 
	of minimizers as $\beta \to 0$
	with $\cEAF[u_\beta] \to E_0$, $\|u_\beta\|=1$. 
	Then, because of uniform boundedness and as in the proof of
	Proposition~\ref{prop:af_minimizer}, we have after taking a subsequence
	that $u_\beta \to u$ for some $u \in \cDAF$, $\|u\|=1$, and also
	\begin{align*}
		\|\nabla u\| &= \sup_{\|v\|=1} \left| \langle \nabla u,v \rangle \right|\\
		&= \sup_{\|v\|=1} \lim_{\beta \to 0} \left| \langle \nabla u_\beta 
			+ i\beta\bA[|u_\beta|^2]u_\beta,v \rangle \right|\\
		&\le \liminf_{\beta \to 0} \norm{ \nabla u_\beta 
			+ i\beta\bA[|u_\beta|^2]u_\beta },
	\end{align*}
	so 
	$$E_0 \le \cE_0[u] \le \liminf_{\beta \to 0} \cEAF_{(\beta)}[u_\beta].$$
	It follows that $\cE_0[u] = E_0$ and hence $u=u_0$ 
	up to a constant phase.
\end{proof}

From the bound \eqref{eq:af_lower_bound} we observe that the self-generated 
magnetic interaction is stronger than a contact interaction of strength 
$2\pi|\beta|$
(despite the fact that we already removed a singular repulsive interaction in the initial
regularization step for extended anyons). 
Hence we have not only $\EAF \ge E_0$ by the diamagnetic inequality, but also
\begin{equation}\label{eq:af_contact_bound}
	\EAF \ge \min_{\rho \ge 0,\ \int_{\R^2} \rho = 1}
		\int_{\R^2} \left( 2\pi|\beta|\rho^2 + V\rho \right),
\end{equation}
which can be computed for given $V$ by straightforward optimization.

\medskip

Let us now consider the corresponding situation for the regularized functional
(extended anyons)
$$
	\cEAF_R[u] := \int_{\R^2} \left( \left| \left( \nabla + i \beta \bA^R[|u|^2] \right) u \right|^2 + V|u|^2 \right),
	\quad \bA^R[\rho] := \nablap w_R * \rho,
	\quad R > 0.
$$
Since $\nabla w_R \in L^\infty(\R^2)$ we have 
$\bA^R[|u|^2] \in L^\infty(\R^2)$ 
with 
$$\norm{\bA^R[|u|^2]}_\infty \le \frac{C}{R}\|u\|_2^2$$
and instead of Lemma~\ref{lem:af_three_body} we have
$$\norm{\bA^R[|u|^2]u}_2 \le C\|u\|_2^2 \||u|\|_{H^1}$$ 
using Lemma~\ref{lem:three body}.
Hence the natural domain is again $\cDAF$
and all properties established above 
for $\cEAF$ are also found to be valid for $\cEAF_R$ 
(except \eqref{eq:af_lower_bound} and \eqref{eq:af_contact_bound}
which now have regularized versions).
Denoting 
$$\EAF_R := \min \{\cEAF_R[u] : u \in \cDAF, \|u\|_2 = 1\},$$
we furthermore have the following relationship:

\begin{proposition}[\textbf{Convergence to point-like anyons}] \label{prop:af_limit}\mbox{}\\
	The functional $\cEAF_R$ converges pointwise to $\cEAF$ as $R \to 0$. More precisely, for any $u\in \cDAF$
	\begin{equation}\label{eq:ext to point}
	 \left| \cEAF_R [u] - \cEAF [u] \right| \leq 
	C_u |\beta|(1+\beta^4) (1+\cEAF[u])^{3/2} R,
	\end{equation}
	where $C_u$ depends only on $\norm{u}_2$. 
	Hence, 
	$$\EAF_R \underset{R\to 0}{\to} \EAF,$$  
	and if $(u_R)_{R \to 0} \subset \cDAF$ denotes 
	a sequence of minimizers of $\cEAF_R$, then there exists a
	subsequence $(u_{R'})_{R' \to 0}$ s.t. $u_{R'} \to \uAF$ as $R' \to 0$,
	where $\uAF$ is some minimizer of $\cEAF$.
\end{proposition}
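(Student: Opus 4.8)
The statement bundles three claims: the quantitative bound \eqref{eq:ext to point}, the convergence $\EAF_R\to\EAF$, and the subsequential convergence of minimizers. The plan is to prove \eqref{eq:ext to point} first; everything else then follows from it together with a few uniform a priori bounds. To prove \eqref{eq:ext to point} I would write $\rho=|u|^2$ and complete the square in the identity $(\nabla+i\beta\bA^R[\rho])u=(\nabla+i\beta\bA[\rho])u+i\beta(\bA^R-\bA)[\rho]\,u$; integrating and using Cauchy--Schwarz together with $\cEAF[u]\ge\norm{(\nabla+i\beta\bA[\rho])u}_{L^2}^2$ gives
\[
\big|\cEAF_R[u]-\cEAF[u]\big|\le 2|\beta|\,\cEAF[u]^{1/2}\,\norm{(\bA^R-\bA)[\rho]\,u}_{L^2}+\beta^2\,\norm{(\bA^R-\bA)[\rho]\,u}_{L^2}^2,
\]
so everything reduces to estimating $\norm{(\bA^R-\bA)[|u|^2]\,u}_{L^2}$, the point being to extract the full power $R$. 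For this I would use the pointwise bound $|\nabla w_R(z)-\nabla w_0(z)|\le|z|^{-1}\1_{|z|\le R}$, immediate from \eqref{eq:smear coul exp}, which gives $|(\bA^R-\bA)[|u|^2](x)|\le g(x):=\big(|\cdot|^{-1}\1_{B(0,R)}*|u|^2\big)(x)$. Since $\norm{|\cdot|^{-1}\1_{B(0,R)}}_{L^1(\R^2)}=2\pi R$, Young's inequality ($L^1\ast L^4\to L^4$) yields $\norm{g}_{L^4}\le2\pi R\,\norm{u}_{L^8}^2$, and then Cauchy--Schwarz and the two-dimensional Gagliardo--Nirenberg inequality give
\[
\norm{(\bA^R-\bA)[|u|^2]\,u}_{L^2}^2\le\norm{u}_{L^4}^2\,\norm{g}_{L^4}^2\le C\,R^2\,\norm{u}_{L^2}^2\,\norm{\nabla u}_{L^2}^4.
\]
Feeding this into the previous display, using the a priori bound $\norm{\nabla u}_{L^2}\le(1+|\beta|\sqrt{3/2}\,\norm{u}_{L^2}^2)\,\cEAF[u]^{1/2}$ from the proof of Proposition~\ref{prop:af_minimizer}, and — for the quadratic remainder — interpolating the sharp $R$-bound against the cruder uniform bound $\norm{(\bA^R-\bA)[|u|^2]\,u}_{L^2}\le C_u\norm{|u|}_{H^1}$ (from Lemmas~\ref{lem:three body} and~\ref{lem:af_three_body}), which trades one power of $\cEAF[u]$ for a bounded factor, produces \eqref{eq:ext to point} with $C_u$ depending only on $\norm{u}_{L^2}$. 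Obtaining the clean exponent $R$ from the $L^1\ast L^4$ step (rather than a wasteful $R^{1-\eps}$) and the bookkeeping needed to land exactly on the prefactor $|\beta|(1+\beta^4)$ is the one mildly delicate point; the rest of this step is soft.

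For $\EAF_R\to\EAF$: the upper bound $\EAF_R\le\cEAF_R[\uAF]\le\EAF+C_{\uAF}R$ follows by testing a minimizer $\uAF$ of $\cEAF$ in $\cEAF_R$ and applying \eqref{eq:ext to point}. For the lower bound, take any normalized $u\in\cDAF$ with $\cEAF_R[u]\le\EAF+1$ (otherwise there is nothing to prove): the regularized diamagnetic inequality and the positivity of $V$ bound $\norm{\nabla|u|}_{L^2}^2$ and $\int_{\R^2}V|u|^2$ by $\cEAF_R[u]$; then $\norm{\bA^R[|u|^2]\,u}_{L^2}\le C\norm{u}_{L^2}^2\norm{|u|}_{H^1}$ (a consequence of Lemma~\ref{lem:three body}) and the triangle inequality bound $\norm{\nabla u}_{L^2}$; and finally Lemma~\ref{lem:af_three_body} bounds $\cEAF[u]$ — all by a constant $C_\beta$ independent of $R$. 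Then \eqref{eq:ext to point} gives $\cEAF_R[u]\ge\cEAF[u]-C_\beta R\ge\EAF-C_\beta R$, hence $\EAF_R\ge\EAF-C_\beta R$, and altogether $|\EAF_R-\EAF|\le C_\beta R\to0$.

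For the minimizers, let $(u_R)$ minimize $\cEAF_R$ with $\norm{u_R}_{L^2}=1$ (these exist, since the properties of $\cEAF$ are shared by $\cEAF_R$). The estimates just used show $(u_R)$ is bounded in $H^1(\R^2)\cap L^2_V$ uniformly for small $R$; extracting a subsequence $R'\to0$ we get $u_{R'}\wto\uAF$ weakly there, and strongly in $L^2(\R^2)$ since $(-\Delta+V+1)^{-1/2}$ is compact, so $\norm{\uAF}_{L^2}=1$. Arguing exactly as in the proof of Proposition~\ref{prop:af_minimizer} — with the strong $L^2$ convergence $\bA[|u_{R'}|^2]\,u_{R'}\to\bA[|\uAF|^2]\,\uAF$ furnished by the three-body Hardy identity of Lemma~\ref{lem:af_three_body} and dominated convergence — the functional $\cEAF$ is weakly lower semicontinuous along the sequence, so $\cEAF[\uAF]\le\liminf_{R'\to0}\cEAF[u_{R'}]$; but $\cEAF[u_{R'}]$ is uniformly bounded, so \eqref{eq:ext to point} gives $\cEAF[u_{R'}]=\cEAF_{R'}[u_{R'}]+O(R')=\EAF_{R'}+O(R')\to\EAF$ by the preceding step. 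Hence $\cEAF[\uAF]\le\EAF$, and since $\norm{\uAF}_{L^2}=1$ also $\cEAF[\uAF]\ge\EAF$; thus $\uAF$ is a minimizer of $\cEAF$ and $u_{R'}\to\uAF$ in $L^2(\R^2)$, as claimed. The only genuine obstacle in this scheme is the sharp $R$-rate in \eqref{eq:ext to point}; once that is in hand the rest is standard weak-compactness and lower-semicontinuity bookkeeping.
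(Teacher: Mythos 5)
Your proposal is correct and follows essentially the same route as the paper: the key estimate is obtained exactly as in the paper's proof, by bounding $\norm{(\bA-\bA^R)[|u|^2]u}_{L^2}$ via H\"older against $\norm{u}_{L^4}$ and Young's inequality for $(\nabla w_0-\nabla w_R)\ast|u|^2$ using $\norm{\nabla w_0-\nabla w_R}_{L^1}=O(R)$, followed by Sobolev/Gagliardo--Nirenberg and the a priori bound on $\norm{\nabla u}_{L^2}$; the convergence of energies and minimizers is then the same compactness and lower-semicontinuity argument. Your explicit interpolation handling of the quadratic remainder to land on the exponent $(1+\cEAF[u])^{3/2}$ just spells out what the paper compresses into ``combining with previous estimates of this appendix and Sobolev embeddings.''
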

\begin{proof}
	We have for any $u \in \cDAF$ that
	\begin{align*}
		&\left| \norm{ (\nabla + i\beta\bA[|u|^2])u }_2 - \norm{ (\nabla + i\beta\bA^R[|u|^2])u }_2 \right| \\
		&\le \norm{ (\nabla + i\beta\bA[|u|^2])u - (\nabla + i\beta\bA^R[|u|^2])u }_2 
		= |\beta| \norm{ (\bA[|u|^2] - \bA^R[|u|^2])u }_2 \\
		&\le |\beta| \norm{ \bA[|u|^2] - \bA^R[|u|^2] }_4 \norm{u}_4
		= |\beta| \norm{ (\nabla w_0 - \nabla w_R) * |u|^2 }_4 \norm{u}_4,
	\end{align*}
	where by Young
	$$
		\norm{ (\nabla w_0 - \nabla w_R) * |u|^2 }_4 
		\le \norm{ \nabla w_0 - \nabla w_R }_1 \norm{|u|^2}_4
		\le \norm{\nabla w_0}_{L^1(B(0,R))} \norm{u}_8^2
		\to 0,
	$$
	as $R \to 0$, since $\nabla w_0 \in L^1_\loc(\R^2)$. 
	We deduce~\eqref{eq:ext to point} by combining this with previous 
	estimates of this appendix and Sobolev embeddings. 
	It follows that $\cEAF_R[u] \to \cEAF[u]$ as $R \to 0$.
	
	Let $(u_R)_{R \to 0}$ denote a sequence of minimizers of $\cEAF_R$:
	$$\EAF_R = \cEAF_R[u_R],\: \|u_R\| = 1,$$
	and take $u \in \cDAF$ an arbitrary minimizer of $\cEAF$.
	Then, since 
	$$\EAF_R \le \cEAF_R[u] \underset{R\to 0}{\to} \cEAF[u] = \EAF,$$
	we have that $\EAF_R$ is uniformly bounded as $R \to 0$ and that
	$$\limsup_{R \to 0} \EAF_R \le \EAF.$$
	Then $\cEAF_R[u_R]$, and hence also
	$$
		\cEAF[u_R] \le C\left( \|u_R\|_{H^1}^2 + \|u_R\|_{L^2_V}^2 \right) 
		\le C'(\cEAF_R[u_R] + 1),
	$$
	are uniformly bounded as well.
	As in the proof of Proposition~\ref{prop:af_minimizer},
	there then exists a strongly convergent subsequence $(u_{R'})_{R' \to 0}$,
	with $u_{R'} \to u_0 \in \cDAF$.
	Also, by weak lower semicontinuity
	$\cEAF[u_0] \le \liminf_{R' \to 0} \cEAF[u_{R'}]$, so that
	for any $\eps > 0$ and sufficiently small $R'>0$,
	$$
		\EAF \le \cEAF[u_0] 
		\le \cEAF[u_{R'}] + \eps \le \cEAF_{R'}[u_{R'}] + 2\eps
		= \EAF_{R'} + 2\eps,
	$$
	where we also used that the convergence is uniform for our 
	uniformly bounded sequence $u_R$ by the bound \eqref{eq:ext to point}.
	It follows that $\EAF \le \cEAF[u_0] \le \EAF + 3\eps$,
	and hence $u_0$ is a minimizer with $\|u_0\|=1$ and
	$\EAF = \cEAF[u_0] = \lim_{R \to 0} \EAF_R$.
\end{proof}


\end{document}